\documentclass[11pt]{article}

\usepackage[margin=1in]{geometry}
\usepackage{amsmath,amssymb,amsthm,amsfonts,hyperref,mathtools,tikz}
\usepackage{bbm}
\usepackage{enumitem}
\usepackage{microtype}
\usepackage{tikz-cd}
\usepackage[noabbrev,capitalise,nameinlink]{cleveref}
\usepackage{thm-restate}
\usepackage{stmaryrd}
\usepackage{authblk}

\newcommand{\N}{\mathbb N}

\newcommand{\unif}[1]{\mathsf H_{#1}}

\DeclareMathOperator{\diam}{diam}

\newtheorem{theorem}{Theorem}
\newtheorem{conjecture}[theorem]{Conjecture}

\newtheorem{lemma}[theorem]{Lemma}
\newtheorem{corollary}[theorem]{Corollary}

\theoremstyle{remark}
\newtheorem{remark}[theorem]{Remark}
\newtheorem{observation}[theorem]{Observation}

\crefname{observation}{Observation}{Observations}
\crefname{item}{Item}{Items}

\title{Ramsey Obstructions to Disambiguation}

\author[1]{Romain Bourneuf}
\affil[1]{Univ. Bordeaux, CNRS, Bordeaux INP, LaBRI, UMR 5800, F-33400 Talence, France.}
\author[2]{Antonin Kiladjian}
\affil[2]{Univ. Lyon, ENS de Lyon, UCBL, CNRS, LIP, France}
\author[2]{Stéphan Thomassé}
\date{}

\begin{document}
\maketitle

\begin{abstract}

A \emph{partial matrix} has entries in $\{0,1,\star\}$, and a \emph{disambiguation} replaces each $\star$ by $0$ or $1$. We construct partial matrices whose fully specified submatrices satisfy strong restrictions, yet every disambiguation contains every binary matrix of a prescribed size.

Our first result answers a question of Alon, Hanneke, Holzman and Moran on the disambiguation of linear classifiers with margin. 
For $0<\varepsilon<\pi/2$, let $M_\varepsilon^d$ be the partial matrix indexed by points of the unit sphere $\mathbb S^d$, with entry $0$ for pairs at spherical distance at most $\varepsilon$, $1$ for pairs at distance at least $\pi-\varepsilon$, and $\star$ otherwise. Although these matrices have VC-dimension bounded independently of $d$, we prove that every disambiguation contains every binary $k\times k$ matrix once $d$ is sufficiently large.
This also yields a partial concept class of Littlestone dimension $1$ with no disambiguation of finite VC-dimension.

We also construct, for every $k$, a finite partial matrix whose fully specified $2\times2$ submatrices are all constant, while every disambiguation contains every binary $k\times k$ matrix. A symmetric analogue holds for partial graphs: for every $k$, there exists a partial graph of VC-dimension at most $1$ whose fully specified induced subgraphs are all cliques or stable sets, yet every disambiguation contains every $k$-vertex graph as an induced subgraph.

A disambiguation can be viewed as a $2$-coloring of the unspecified entries, making Ramsey theory a natural framework for forcing prescribed patterns. Our proofs draw on two recent Ramsey theorems: the geometric argument uses Pálvölgyi's Dense Block theorem, while the combinatorial constructions rely on the girth Ramsey theorem of Reiher and Rödl, a suitable strengthening of the induced Ramsey theorem.

\end{abstract}

\section{Introduction}

A \emph{partial (binary) matrix} is a matrix with entries in $\{0,1,\star\}$, where $\star$ denotes an unspecified entry.
A \emph{disambiguation} is obtained by replacing each $\star$ entry independently by either $0$ or $1$.
A natural question is whether a partial matrix can be substantially simpler than each of its disambiguations.
In other words, can filling the unspecified entries inevitably create complexity that is absent from the partial matrix?

This question arises naturally in learning theory.
Viewing rows as concepts and columns as domain points, a partial matrix represents a class of partially defined binary classifiers.
%Learning is required only from examples on which the target concept is defined.
Such classes naturally express margin assumptions: a separating hyperplane need only label points sufficiently far from its boundary.
Partial classes were studied by Long~\cite{L01} in agnostic learning, and Alon, Hanneke, Holzman, and Moran~\cite{AHHM22} developed their PAC learning theory.

The relevant measure of complexity is VC-dimension: the largest number of columns on which the rows realize every binary pattern using only specified entries.
Equivalently, the VC-dimension of a partial matrix is the supremum VC-dimension of a pure\footnote{A submatrix of a partial matrix $M$ is \emph{pure} if all its entries are specified.} submatrix.
Alon, Hanneke, Holzman, and Moran~\cite{AHHM22} showed that finite VC-dimension characterizes PAC learnability of partial concept classes, just as it does for total classes.
However, they also constructed a partial class of VC-dimension $1$ whose every disambiguation has infinite VC-dimension, answering a question originating in the work of Attias, Kontorovich, and Mansour on adversarially robust learning~\cite{AKM22}.
Thus leaving entries unspecified can be essential to low complexity: the learnability of a partial class need not be explained by any learnable class of total extensions.

\smallskip

Their result leaves open which additional restrictions on a partial matrix can guarantee a disambiguation of bounded VC-dimension.
We show that the obstruction persists both for natural geometric examples and under strong local restrictions.
Our first construction resolves the large-margin disambiguation question of Alon, Hanneke, Holzman, and Moran~\cite{AHHM22}.
Margin assumptions were a central motivation for their introduction of partial concept classes: leaving labels unspecified near a separating hyperplane allows VC-dimension to be bounded independently of the ambient dimension.
Our result shows that these unspecified labels are essential to this bound: no choice of total extensions can retain a VC-dimension bound independent of the ambient dimension.
Our second construction shows that this remains possible under the strong local requirement that every pure $2\times2$ submatrix be constant.
It also answers further questions about disambiguation under restrictions on online learnability and growth.
We obtain a corresponding obstruction for partial graphs, where symmetry constrains both the construction and its disambiguations.
Our proofs give a different explanation for the increase in complexity.
Earlier constructions force every disambiguation to have many distinct rows, and then deduce large VC-dimension through the Sauer--Shelah lemma~\cite{AHHM22,CHHH23}.
Our approach is structural: Ramsey theory produces structured submatrices inside every disambiguation, within which we explicitly construct large shattered sets.

\smallskip

Partial matrices also arise outside learning theory.
An unspecified entry may represent a deliberate absence of a prescription: for instance, even when all distances are known, we may assign $0$ to pairs of nearby points, $1$ to pairs of distant points, and $\star$ to pairs at intermediate distances.
This interpretation is particularly useful: leaving some entries unspecified can expose structure that is absent from a total representation.

Bourneuf, Charbit, and Thomassé~\cite{BCT25} used this principle of ``proof by ambiguation'': leaving selected pairs of a graph unspecified to obtain a partial graph of bounded VC-dimension, applying tools that exploit this bound, and transferring the resulting conclusions back to the original graph.
Our results concern the reverse operation, and show that the simplicity gained by allowing unspecified entries can be lost under every disambiguation.

More generally, let $p$ be a matrix parameter that does not increase when taking submatrices, and define $p(M)$ for a partial matrix $M$ as the supremum of $p$ over its pure submatrices.
One can ask whether every partial matrix $M$ admits a disambiguation $M'$ with
$p(M')\leq f(p(M))$, for some function $f$.
A simple example shows that the answer is negative for rank over $\mathbb F_2$.

For $n\geq2$, consider the partial matrix $M$ whose rows are indexed by $[n]=\{1,\ldots,n\}$ and whose columns are indexed by pairs $(j,k)$ with $1\leq j<k\leq n$, where
\[
M[i,(j,k)]=
\begin{cases}
0, & i=j,\\
1, & i=k,\\
\star, & \text{otherwise}.
\end{cases}
\]
Two distinct columns are specified on at most one common row, therefore $M$ contains no pure $2\times2$ submatrix and has rank $1$.
In every disambiguation, however, rows $j$ and $k$ differ in column $(j,k)$.
All $n$ rows are therefore distinct, forcing the rank of every disambiguation to be at least $\lceil\log_2 n\rceil$.

This example can also be interpreted in terms of forbidden submatrices.
A standard Ramsey argument shows that a family of binary matrices has bounded rank if and only if, for some $k$, it avoids the three $k \times k$ matrices \[I_k=
\begin{bmatrix}
1 & 0 & \cdots & 0\\
0 & 1 & \ddots & \vdots\\
\vdots & \ddots & \ddots & 0\\
0 & \cdots & 0 & 1
\end{bmatrix},
\qquad
J_k=
\begin{bmatrix}
0 & 1 & \cdots & 1\\
1 & 0 & \ddots & \vdots\\
\vdots & \ddots & \ddots & 1\\
1 & \cdots & 1 & 0
\end{bmatrix},
\qquad 
T_k=
\begin{bmatrix}
1 & 1 & \cdots & 1\\
0 & 1 & \ddots & \vdots\\
\vdots & \ddots & \ddots & 1\\
0 & \cdots & 0 & 1
\end{bmatrix}.\] 
Thus, for sufficiently large $n$, every disambiguation of the preceding matrix contains a prescribed large identity, complemented identity, or triangular submatrix.
Here and throughout, submatrix containment allows independent permutations of rows and columns.

VC-dimension admits a related characterization: a family of binary matrices has bounded VC-dimension if and only if it avoids some fixed binary submatrix.
The disambiguation question can therefore be phrased in terms of forbidden patterns.
Given a binary matrix $F$, is there a binary matrix $F'$ such that every partial matrix containing no pure copy of $F$ admits a disambiguation containing no copy of $F'$?
To express the obstruction to such a statement, we call a family $\mathcal A$ of partial matrices \emph{disambiguation-universal} if, for every finite binary matrix $F$, some $M\in\mathcal A$ has the property that every disambiguation of $M$ contains $F$.

In matrix terms, the construction of Alon, Hanneke, Holzman, and Moran~\cite{AHHM22} gives a disambiguation-universal family $\mathcal G$ avoiding the pure patterns $
\begin{bmatrix}
0&0\\
1&1
\end{bmatrix}$ and $
\begin{bmatrix}
0&1\\
1&0
\end{bmatrix}$.
Their proof draws on a construction of G\"o\"os~\cite{Goos15} producing graphs whose chromatic number is large compared with their biclique partition number.
This separation, arising from the Alon--Saks--Seymour problem, was further improved quantitatively in~\cite{BBGJK21,Pabbaraju26}.

Cheung, Hatami, Hatami, and Hosseini~\cite{CHHH23} established a similar obstruction for online learning: a partial concept class can have Littlestone dimension\footnote{Littlestone dimension is the optimal worst-case mistake bound in deterministic realizable online learning~\cite{AHHM22}.} at most $2$ while every disambiguation has infinite VC-dimension.
Partial classes also retain an important connection between learning models: Fioravanti, Hanneke, Moran, Schefler, and Tsubari~\cite{FHMST24} proved that private learnability implies online learnability in this setting.

The general counterexamples leave open whether such an obstruction can occur for natural geometric classes.
Alon, Hanneke, Holzman, and Moran~\cite{AHHM22} asked whether classes of halfspaces with a fixed positive margin admit disambiguations whose VC-dimension, or Littlestone dimension, is bounded independently of the ambient dimension.
Blondal, Hatami, Hatami, Lalov, and Tretiak~\cite{BHHLT26} answered the Littlestone-dimension question negatively.
The VC-dimension question remained open.
Chornomaz, Moran, and Waknine~\cite{CMW25} related disambiguations of spherical margin classes to simplicial spherical dimension. 
Their correspondence allows the angular parameter to depend on the dimension, whereas our result treats every fixed positive angular parameter.

We consider the following spherical formulation.
For $0<\varepsilon<\pi/2$, let $M_\varepsilon^d$ be the partial matrix indexed by pairs of points of the unit sphere $\mathbb S^d$, where
\[
M_\varepsilon^d[x,y]=
\begin{cases}
0, & d_{\mathbb S}(x,y)\leq\varepsilon,\\
1, & d_{\mathbb S}(x,y)\geq\pi-\varepsilon,\\
\star, & \text{otherwise},
\end{cases}
\]
and $d_{\mathbb S}$ denotes spherical distance.
The classical mistake-bound analysis of the Perceptron algorithm~\cite{Rosenblatt58,Novikoff63} implies that, for each fixed $\varepsilon$, the VC-dimension of $M_\varepsilon^d$ is bounded independently of $d$; see~\cite{AHHM22} for its application to partial concept classes.
Related dimension-independent bounds in other settings appear in~\cite{BCT25,IJN26}.
Moreover, when $\varepsilon<\pi/4$, $M_\varepsilon^d$ contains no pure $2\times2$ submatrix with three equal entries and one of the opposite value.
Indeed, three $0$-entries force the distance corresponding to the fourth entry to be at most $3\varepsilon<\pi-\varepsilon$, by the triangle inequality; the case of three $1$-entries is similar.

\begin{restatable}{theorem}{mainsphere}\label{thm:main-sphere}
    For every $0<\varepsilon<\pi/2$ and $k\in\mathbb N$, every disambiguation of $M_\varepsilon^d$ has VC-dimension at least $k$ for all sufficiently large $d$.
    In particular, the family $\mathcal M_\varepsilon=\{M_\varepsilon^d:d\geq1\}$ is disambiguation-universal.
\end{restatable}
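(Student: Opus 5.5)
Fix $\varepsilon$ and $k$, and fix a disambiguation $N$ of $M_\varepsilon^d$; we view $N$ as a $2$-coloring of the pairs $(x,y)\in\mathbb S^d\times\mathbb S^d$. The goal is to find $k$ columns $y_1,\dots,y_k$ and $2^k$ rows $x_S$, $S\subseteq[k]$, with $N[x_S,y_i]=1$ exactly when $i\in S$. The strategy has two parts. First we locate structured configurations on which $N$ is forced by the specified entries. Then we use a Ramsey argument, namely Pálvölgyi's Dense Block theorem as indicated in the abstract, to control the free entries.

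The geometric input is that in high dimension we can place many nearly orthogonal points at once. Fix a small $\delta$ depending on $\varepsilon$. For a large $m$, take orthonormal vectors $e_1,\dots,e_m$, and for each sign vector $\sigma\in\{\pm1\}^m$ use the normalized point $z_\sigma=\sigma/\sqrt m$. Two such points are at spherical distance at most $\varepsilon$ when their signs agree on a $1-\delta$ fraction of coordinates. They are at distance at least $\pi-\varepsilon$ when the signs disagree on a $1-\delta$ fraction. So on the Hamming cube $\{\pm1\}^m$ the matrix $N$ is $0$ for pairs that are close in Hamming distance and $1$ for pairs that are antipodal-close. All intermediate pairs are colored arbitrarily. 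This reduces the problem to a purely combinatorial statement about colorings of $\{\pm1\}^m\times\{\pm1\}^m$, with $m\to\infty$ as $d\to\infty$.

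Next we apply the Dense Block theorem. Partition the $m$ coordinates into $k$ blocks $B_1,\dots,B_k$ of equal size. For each $i$, let the column $y_i$ be chosen in a combinatorial subcube (a block structure) that agrees with a base vector $w$ outside $B_i$ and varies inside $B_i$. For each $S$, let the row $x_S$ be built by flipping $w$ on the blocks indexed by $S$, or on a large portion of them. The pair $(x_S,y_i)$ is then controlled by how much the block $B_i$ contributes to agreement. We want $N[x_S,y_i]$ to depend only on whether $i\in S$. The Dense Block theorem, applied iteratively to the colorings restricted to each block, should yield subcubes or combinatorial lines on which the coloring is monochromatic in a pattern that depends only on the relevant block parameter.

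To see that the two values must then be $0$ and $1$ rather than constant, we interpolate. Varying the parameter of block $i$ from ``agree'' to ``disagree'' moves continuously from a specified $0$ entry to a specified $1$ entry. A density or pigeonhole step then locates a threshold position where the color changes, and that threshold is placed uniformly across all $S$ by the Ramsey homogeneity. This produces the shattered set $\{y_1,\dots,y_k\}$, so the VC-dimension of $N$ is at least $k$, and the disambiguation-universality of $\mathcal M_\varepsilon$ then follows. The expected main obstacle is making this homogeneity uniform in $S$ at the same time as the specified endpoints stay within the $\varepsilon$-windows. This forces the blocks to be tiny relative to $m$: with $k$ blocks each a $1/k$ fraction, the agreement shifts by only $1/k$, which is less than the required $1-\delta$. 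So each block must carry most of the weight, for instance through geometrically decreasing block weights, or through nesting the construction via induction on $k$. I would try the induction on $k$ first.
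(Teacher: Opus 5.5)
Your first step is the same as the paper's. Pulling back along $\sigma\mapsto\sigma/\sqrt{m}$ with $\delta=(1-\cos\varepsilon)/2$ turns any disambiguation into a $\delta$-gap coloring of $Q_m$, and VC-dimension transfers back along the pullback. The gap is in the combinatorial core, the Gap-Hamming theorem. There your plan names no workable mechanism, and the concrete design you sketch fails. If $x_S$ is $w$ flipped on $\bigcup_{j\in S}B_j$ and $y_i$ agrees with $w$ off $B_i$, then $d(x_S,y_i)\ge |S\setminus\{i\}|\cdot m/k$. Take $k\ge 1/\delta$:
\begin{itemize}
\item for $S=[k]\setminus\{i\}$ the entry is forced to be $1$, although $i\notin S$;
\item for $S=\{i\}$ we have $d(x_S,y_i)\le m/k\le\delta m$, so the entry is forced to be $0$, although $i\in S$.
\end{itemize}
No Ramsey statement can make the color depend ``only on the parameter of block $B_i$'', because the gap constraints tie the color to the total distance. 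Geometric block weights or induction on $k$ do not remove the dependence of the pair's distance data on all of $S$. For the same reason, interpolating along a single block cannot travel from a specified $0$ to a specified $1$. Finally, the Dense Block theorem canonicalizes colorings of words in $X^n$, not pair colorings ``block by block'', so that step is not usable as stated.

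Two ideas are missing.
\begin{enumerate}
\item \emph{How to apply Dense Block to pairs.} View the pair coloring as a coloring of $(X\times X)^N$ and apply Dense Block to a group acting \emph{diagonally} on $X\times X$, whose orbits are exactly the coordinatewise distances. The resulting block map splits as $(\iota_1(x),\iota_2(y))$, giving \emph{different} row and column embeddings. The color $c(\iota_1(x),\iota_2(y))$ then depends only on the vector of coordinatewise distances between $x$ and $y$. An active proportion $\ge 1-\varepsilon/2$ keeps an $\varepsilon/2$-gap.
\item \emph{Many distance levels per coordinate.} The paper takes $X=\mathbb{Z}_R$ with $R=2\cdot 3^s$ and $3^{-s}\le\varepsilon$, acted on by the solvable dihedral group $D_R$. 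It returns to the cube via the isometric embedding of $\mathbb{Z}_{2r}$ into $Q_r$ plus coordinate repetition. On the binary alphabet you would only get a function of $x\oplus y$, with per-coordinate distances $0,1$. That leaves no room for the gadget below.
\end{enumerate}

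The shattering argument then replaces your threshold search.
\begin{itemize}
\item Choose $a\in\{1,3,\dots,3^s\}^n$ with $f(a)=0$ and $\sum_j a_j$ maximal. The gap condition forces a positive fraction of coordinates $J$ with $a_j<3^s$.
\item Set $x^i_j=2a_j$ for $j\in J\setminus\{i\}$ and $x^i_j=0$ otherwise. Set $y^S_j=3a_j$ for $j\in S$ and $y^S_j=a_j$ otherwise.
\item Since $3a_j\le R/2$, both $a_j$ and $3a_j$ are at distance $a_j$ from $2a_j$. So the distance vector of $(y^S,x^i)$ is $a$ if $i\notin S$, and $a+2a_ie_i$ if $i\in S$.
\item Maximality of $a$ gives $f(a+2a_ie_i)=1$, so $\{x^i\}_{i\in J}$ is shattered.
\end{itemize}
This midpoint gadget is exactly what makes the relevant data depend on $S$ only through $[i\in S]$, and your design lacks it.
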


Blondal, Hatami, Hatami, Lalov and Tretiak~\cite{BHHLT26} proved that, for every fixed $\varepsilon$, every disambiguation of $M_\varepsilon^d$ has Littlestone dimension tending to infinity with $d$.
\Cref{thm:main-sphere} strengthens this conclusion to VC-dimension, settling the remaining part of \cite[Open Question 4]{AHHM22}.
We prove \cref{thm:main-sphere} by establishing a corresponding statement on the Hamming cube.

Write $Q_N=\{0,1\}^N$, equipped with Hamming distance $d_H$.
For $0<\varepsilon<1/2$, let $H_\varepsilon^N$ be the \emph{Gap-Hamming partial matrix} whose rows and columns are indexed by $Q_N$, with
\[
H_\varepsilon^N[x,y]=
\begin{cases}
0, & d_H(x,y)\leq\varepsilon N,\\
1, & d_H(x,y)\geq(1-\varepsilon)N,\\
\star, & \text{otherwise}.
\end{cases}
\]

\begin{restatable}[Gap-Hamming Disambiguation]{theorem}{gaphamming}\label{thm:main-cube}
    For every $0<\varepsilon<1/2$ and $k\in\mathbb N$, every disambiguation of $H_\varepsilon^N$ has VC-dimension at least $k$ for all sufficiently large $N$.
\end{restatable}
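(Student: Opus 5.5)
The plan is to pass from an arbitrary disambiguation $D$ of $H_\varepsilon^N$ to a much smaller, highly symmetric "shadow" cube. On that cube the colouring of a pair will depend only on the Hamming distance between its two points, and for such colourings shattered sets can be written down explicitly.

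\textbf{Step 1 (embedding a small cube).} Fix a large auxiliary dimension $K$, to be chosen in terms of $k$ and $\varepsilon$. Partition $[N]$ into $K$ blocks $B_1,\dots,B_K$ of equal size $N/K$. Map $u\in Q_K$ to the point $\phi(u)\in Q_N$ that is constant equal to $u_j$ on block $B_j$. Then $d_H(\phi(u),\phi(v))=\frac NK\, d_H(u,v)$. Consequently $H_\varepsilon^N[\phi(u),\phi(v)]$ is $0$ when $d_H(u,v)\le\varepsilon K$, is $1$ when $d_H(u,v)\ge(1-\varepsilon)K$, and is $\star$ in between. Fixed equal blocks are too rigid, however. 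In general I would allow the block partition, and the values on frozen coordinates, to be chosen inside $D$, so that we obtain a \emph{combinatorial} subcube on which $D$ is well behaved.

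\textbf{Step 2 (Ramsey step, the main obstacle).} We need a choice of blocks such that the colour $D[\phi(u),\phi(v)]$ depends only on $d_H(u,v)$. If that fails, it should at least depend only on the pair's "type" (which coordinates agree), in a form that can then be symmetrised. This is where I would invoke Pálvölgyi's Dense Block theorem. We view $D$ as a $2$-colouring of pairs of points of $Q_N$, and we look for a large family of disjoint coordinate blocks on which the colouring is canonical. Such a canonical colouring is invariant under permuting blocks and under flipping blocks simultaneously in both arguments. Invariance under all block permutations, together with flips, leaves $d_H(u,v)$ as the only remaining invariant of a pair. I expect the delicate points to be two. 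First, the theorem has to be applied to the right colouring, presumably of pairs encoded as points of a product cube $Q_N\times Q_N$ or $\{0,1,2,3\}^N$, with the four letters recording the four possible coordinate patterns of a pair. Second, the block sizes must stay balanced enough that distances are still scaled uniformly, so that the forced $0$ and $1$ regions survive.

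\textbf{Step 3 (finding a switch).} We now have a function $c:\{0,\dots,K\}\to\{0,1\}$ such that $D[\phi(u),\phi(v)]=c(d_H(u,v))$. The specified entries force $c(s)=0$ for $s\le\varepsilon K$ and $c(s)=1$ for $s\ge(1-\varepsilon)K$. Hence there is a switch $t$ with $c(t)=0$, $c(t+2)=1$ (or the reverse pattern around $t+1$), and $\varepsilon K\le t\le(1-\varepsilon)K$. To avoid parity issues I would take the pair of values at $t$ and $t+2$.

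\textbf{Step 4 (explicit shattering).} Let $y_i=e_i$ for $i\le k$. For $S\subseteq[k]$, let $x_S$ be a point with $x_S\cap[k]=S$ and total weight $t+1$; such a point exists because $k\le t+1\le K-k$ once $K\ge k/\varepsilon$. Then $d_H(x_S,e_i)=t$ if $i\in S$ and $t+2$ otherwise. Therefore $D[\phi(x_S),\phi(e_i)]=c(t)$ or $c(t+2)$ according to whether $i\in S$, so the columns $\phi(e_1),\dots,\phi(e_k)$ are shattered, possibly after complementing, which preserves shattering. Taking $N$ large enough for Step 2 to deliver $K\ge k/\varepsilon$ blocks finishes the proof. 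Step 2 is the only genuinely nontrivial step: the rest is bookkeeping once the canonical colouring is in hand.
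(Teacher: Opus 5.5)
Your Steps 3 and 4 are correct once a radial copy is available; they are essentially the conditional argument in the paper's appendix. The gap is Step 2. As you state it, it is false, and the version that could hold is an open problem.

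\textbf{Step 2 is false as stated.} A single map $\phi$ with $D[\phi(u),\phi(v)]$ depending only on $d_H(u,v)$ forces $D[\phi(u),\phi(v)]=D[\phi(v),\phi(u)]$. Suppose $D$ fills the $\star$ entries by $D[x,y]=1$ iff $x\prec y$, for a total order $\prec$ on $Q_N$. Then this symmetry fails on every copy of the kind you need. A balanced near-spanning copy of $Q_K$ with $K$ large contains pairs at distance about $N/2$, where $H_\varepsilon^N$ is unspecified. So you must either symmetrize first, or allow different row and column maps. The appendix symmetrizes via $c(x,y)\vee c(y,x)$, controlling VC-dimension with Assouad's dual bound and a union bound.

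\textbf{The Dense Block theorem does not give block-permutation invariance.} This is the more serious problem. The dense block property only says the colour depends on the $G$-orbit of each input coordinate separately. Apply it to the diagonal flip action of $C_2$ on $\{0,1\}^2$, which is the natural reading of your four letters. It yields two maps $\iota_1,\iota_2\colon Q_K\to Q_N$, which differ because the fixed letters may be off-diagonal pairs. The colour $D[\iota_1(u),\iota_2(v)]$ then depends on the disagreement set $\{j: u_j\ne v_j\}$, not on its size.

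You cannot ``symmetrise'' this afterwards. Freezing all but a few blocks destroys near-spanning. Graham--Rothschild gives homogeneity only with uncontrolled part sizes, which ruins the uniform scaling needed to transfer the gap. Merging blocks into equal-sized groups on which the colour depends only on the number of disagreeing groups is a homogeneous dual Ramsey statement. The paper isolates it as the Near-Spanning Radial Ramsey Conjecture and derives it only from the open Homogeneous Dual Ramsey Conjecture of Kechris, Soki\'c and Todor\v{c}evi\'c.

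\textbf{Step 4 also fails under disagreement-set canonicity alone.} Each cube coordinate has only two values, and no value is equidistant from both. So you cannot let a witness vary on a coordinate $j\ne i$ without changing its pattern against the $i$-th column point.

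\textbf{How the paper proceeds.} It gives each coordinate more distance levels and accepts a profile-dependent colour with two embeddings.
\begin{enumerate}
\item It works on a torus $(\mathbb Z_R)^n$ with $R=2\cdot 3^s$. It applies the Dense Block theorem to the diagonal action of the dihedral group $D_R$ on $\mathbb Z_R\times\mathbb Z_R$, whose orbits are exactly the cyclic distances.
\item This yields $\iota_1,\iota_2$ such that $c(\iota_1(x),\iota_2(y))$ depends only on the coordinatewise distance profile $\delta(x,y)$. Since $d(\iota_1(x),\iota_2(y))=k\,d(x,y)+\Delta$, an $\varepsilon/2$-gap is preserved.
\item It takes a $0$-coloured profile $a\in\{1,3,\dots,3^s\}^n$ of maximum coordinate sum. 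The gap forces a large set $J$ of coordinates with $a_j<3^s$.
\item Let $x^i$ equal $2a_j$ on $J\setminus\{i\}$ and $0$ elsewhere, and let $y^S$ equal $3a_j$ on $S$ and $a_j$ elsewhere. Since $2a_j$ is equidistant from $a_j$ and $3a_j$ while $0$ is not, the profiles of $(y^S,x^i)$ are exactly $a$ or $a+2a_ie_i$. This gives a shattered set.
\item Finally, the isometric embedding of $\mathbb Z_{2r}$ into $Q_r$, repeated over blocks of coordinates, pulls any gap colouring of $Q_N$ back to such a torus.
\end{enumerate}
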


The standard embedding of the Hamming cube as normalized sign vectors transfers this conclusion to the sphere, yielding \cref{thm:main-sphere}.

The principal Ramsey input is Pálvölgyi's Dense Block theorem~\cite{Palvolgyi26}.
We apply it to obtain a structured restriction on a discrete torus, and show that this structure, together with the gap constraints, forces a large shattered set.
Every even cycle embeds isometrically into a Hamming cube of the same diameter, allowing us to transfer the conclusion from tori to cubes.

Related sign-rank lower bounds for spherical and Gap-Hamming matrices were obtained in~\cite{HHM23,FHV26}.
For VC-dimension, Frick, Hosseini, and Vasileuski~\cite[Conjecture~39]{FHV26} conjecture a lower bound tending to infinity with the distance threshold, uniformly in the ambient dimension.
\Cref{thm:main-cube} establishes the case in which this threshold is a fixed positive proportion of the dimension.

The disambiguation-universal families $\mathcal G$ and $\mathcal M_\varepsilon$ impose different restrictions on their pure $2\times2$ submatrices.
This suggests asking whether a disambiguation-universal family can avoid every nonconstant $2\times2$ pattern.
Such a construction would strengthen the obstruction sought in the forbidden-pattern question of Cheung, Hatami, Hatami, and Hosseini~\cite[Problem~27]{CHHH23}, which singles out the pure pattern
$\begin{bmatrix}1&1\\0&1\end{bmatrix}$.
We obtain this stronger conclusion using the girth Ramsey theorem of Reiher and R\"odl~\cite{RR23}.

\begin{restatable}{theorem}{mainmatrix}\label{thm:matrix-not-disambig}
    For every $k\in\mathbb N$, there exists a finite partial matrix $M$ such that every pure $2\times2$ submatrix of $M$ is constant, while every disambiguation of $M$ has VC-dimension at least $k$.
    In particular, $M$ has VC-dimension at most $1$.
\end{restatable}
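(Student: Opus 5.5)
The plan is to build $M$ from a small \emph{gadget} partial matrix $P$, and then use the girth Ramsey theorem of Reiher and R\"odl~\cite{RR23} to amplify $P$ so that every disambiguation of $M$ contains a copy of $P$ in which all the $\star$ entries received the same value. This is one definite route; the gadget it needs is not yet constructed, and that is where it may fail.

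\textbf{Step 1 (gadget).} Find a finite partial matrix $P$ with two properties. First, filling all its $\star$ entries with $0$ gives a matrix of VC-dimension at least $k$, and so does filling them all with $1$. Second, the specified entries of $P$ contain no non-constant pure $2\times 2$ submatrix.

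A first candidate uses rows $i',i''$ for $i\in[k]$ and columns $S\subseteq[k]$, with
\begin{itemize}
\item $P[i',S]=1$ if $i\in S$, and $\star$ otherwise;
\item $P[i'',S]=0$ if $i\notin S$, and $\star$ otherwise.
\end{itemize}
The all-$0$ fill makes the rows $i'$ shatter $[k]$, and the all-$1$ fill makes the rows $i''$ shatter it. However, this candidate contains pure $2\times2$ submatrices mixing $0$ and $1$, for instance rows $i',j''$ with $i\in S\setminus\{j\}$. So Step~1 really requires a sparser design. The idea is to spread each shattering pattern over many rows and columns so that the specified $1$-entries and specified $0$-entries never share two rows and two columns. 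Concretely, I would give each row and each column of the shattered set its own private witness columns and rows, so that the specified entries form a bipartite graph of girth larger than $4$ across the two values.

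\textbf{Step 2 (Ramsey amplification).} View a partial matrix as a bipartite structure whose edges are the $\star$-pairs. Apply the girth Ramsey theorem to the ordered bipartite graph formed by the $\star$-entries of $P$. This yields a host structure $G$ with the following property: for every $2$-colouring of its edges, some copy of $P$'s $\star$-graph is monochromatic. Moreover, $G$ can be chosen to have large girth in the sense of~\cite{RR23}, meaning that distinct copies of $P$ meet in a tree-like fashion and every short cycle lies inside a single copy.

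Define $M$ on the vertices of $G$ as follows. The edges of $G$ are $\star$. The specified entries of every copy of $P$ are imported into $M$; this is consistent because copies overlap in at most one row or one column at a time. All remaining entries are set to $\star$.

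A disambiguation of $M$ is then a $2$-colouring of the edges of $G$. Hence some copy of $P$ has all its $\star$ entries filled with a single value $c$, and by Step~1 this copy has VC-dimension at least $k$.

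\textbf{Step 3 (purity).} Take a pure $2\times2$ submatrix of $M$. Its four entries form a $4$-cycle of specified pairs. By the girth property, such a cycle cannot use specified entries coming from two different copies of $P$, so it lies inside one copy. There it is constant, by Step~1.

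The final claim of the theorem then follows: a pure submatrix shattering $2$ columns would contain a non-constant pure $2\times2$ submatrix, so $M$ has VC-dimension at most $1$.

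\textbf{Main obstacle.} The hardest parts are the gadget and the exact form of the Ramsey statement. The gadget must make both monochromatic fills shatter a $k$-set while keeping its specified part free of mixed $4$-cycles. If no single gadget works, I would try first to let the gadget have two or three classes of $\star$-entries determined by the order type of the pair. I would then use the ordered (canonical) version of the girth Ramsey theorem, so that each class is monochromatic, with possibly different colours. Finally, I would design $P$ so that every one of the resulting $2^{O(1)}$ fills still shatters $k$.
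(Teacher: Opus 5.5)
\textbf{Gap in Step 1: the gadget is never constructed.} Your Steps 2 and 3 follow the same plan as the paper's proof: side-respecting girth Ramsey applied to the bipartite graph of $\star$-entries of a gadget, import the specified entries from each copy, then check purity by localizing a $2\times 2$ submatrix into one copy. But you never build the gadget, and the whole argument rests on it.

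The missing idea is simple. Your first candidate fails only because the $1$-rows $i'$ and the $0$-rows $i''$ share their columns. Give the two halves disjoint columns as well, and put $\star$ on every cross pair. Concretely (transposed, so that the shattered set consists of columns, as in the paper's definition):
\begin{itemize}
\item let $S$ be the $2^k\times k$ matrix with rows indexed by $A\subseteq[k]$, columns by $i\in[k]$, and $S[A,i]=1$ iff $i\in A$;
\item let $S^+$ (resp.\ $S^-$) be obtained from $S$ by turning its $0$s (resp.\ $1$s) into $\star$;
\item set $U=\begin{bmatrix} S^+ & \star\\ \star & S^-\end{bmatrix}$.
\end{itemize}
The all-$0$ fill restores $S$ on the $S^+$ block, and the all-$1$ fill restores $S$ on the $S^-$ block. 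Moreover, no row or column of $U$ contains both a $0$ and a $1$.

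That last property is what you actually need, and it is stronger than any girth-type sparsity. In a pure $2\times 2$ submatrix, each row is constant because its row of $U$ carries only one specified value. Each column likewise carries only one specified value, so the two rows' values agree. You therefore need no private witnesses and no ordered or canonical variant of girth Ramsey with several $\star$-classes; that fallback is speculative and should be dropped.

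\textbf{Two claims in Steps 2--3 also need correcting.}

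First, consistency of the imported entries. Copies in the girth-Ramsey family may share both ends of an edge of the host $H$ (a row and a column), not only a single vertex. Consistency survives because such a pair is a $\star$ entry. For a non-edge $\widehat r\widehat c$ of $H$, at most one copy contains both ends: put the two copies in a forest using $\ell\ge 2$, and note that in a forest two copies meet in at most a vertex or an $H$-edge. You also need the copies to be \emph{induced}. Otherwise an imported specified entry could land on an edge of $H$, and a copy would not be an exact copy of $U$ in $M$.

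Second, the theorem does not say that every short cycle lies in one copy. Copies glued along an $H$-edge already create $4$-cycles of the union graph that meet both. What it gives is that any $\ell$ copies lie in a forest of copies. The localization then goes as follows.
\begin{itemize}
\item Take $\ell=4$, so the copies witnessing the four entries of a pure submatrix on rows $r_1,r_2$ and columns $c_1,c_2$ lie in a forest.
\item That forest yields a tree-decomposition of the union graph whose bags are the copies, with adhesion at most $2$, and every size-$2$ adhesion an $H$-edge.
\item If the bags containing $r_1$ were separated from those containing $r_2$, the separating adhesion would have to be $\{c_1,c_2\}$. This is not an $H$-edge, since $H$ is bipartite. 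The same argument applies to $c_1,c_2$.
\item Hence the four subtrees pairwise meet, and the Helly property puts $r_1,r_2,c_1,c_2$ in one copy.
\end{itemize}
By uniqueness of witnesses, all four entries are that copy's entries of $U$, hence constant.
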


The local condition equivalently says that every pure submatrix with at least two rows and two columns is constant.
One cannot additionally forbid a fixed constant matrix while retaining disambiguation-universality.
Indeed, if the forbidden matrix is all $1$, replacing every $\star$ by $0$ preserves its absence; the analogous statement holds for an all-$0$ matrix.

\Cref{thm:main-sphere,thm:main-cube,thm:matrix-not-disambig} also have consequences for online learning and growth.
Alon, Hanneke, Holzman and Moran~\cite[Open Questions 2 and 3]{AHHM22} asked whether polynomial disambiguation growth or finite Littlestone dimension guarantees a disambiguation of finite VC-dimension.
Cheung, Hatami, Hatami and Hosseini~\cite{CHHH23} answered both questions negatively, using a partial concept class of Littlestone dimension at most $2$.
They then asked whether such a disambiguation exists under either of the stronger assumptions of Littlestone dimension $1$ or linear disambiguation growth~\cite[Problems 24 and 25]{CHHH23}.
The following corollary answers both questions negatively.

\begin{corollary}
    There exists a partial concept class of Littlestone dimension $1$ such that every restriction to $n$ domain points admits a disambiguation with at most $n+1$ concepts, whereas every disambiguation of the full class has infinite VC-dimension.
\end{corollary}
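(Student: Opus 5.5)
We take the disjoint union of the matrices given by \cref{thm:matrix-not-disambig}. For each $k\in\mathbb N$, let $M_k$ be the finite partial matrix of \cref{thm:matrix-not-disambig}. Form the partial concept class $\mathcal C$ whose domain is the disjoint union of the column sets of all $M_k$, and whose concepts are the rows of all $M_k$. A row of $M_k$ keeps its entries on the columns of $M_k$ and is set to $\star$ on every other column.

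\textbf{Infinite VC-dimension.} Let $\mathcal C'$ be any disambiguation of $\mathcal C$. Restricting the rows of $M_k$ to the columns of $M_k$ gives a disambiguation of $M_k$. By \cref{thm:matrix-not-disambig} this restriction has VC-dimension at least $k$, so $\mathcal C'$ does too. Since $k$ is arbitrary, every disambiguation has infinite VC-dimension.

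\textbf{Pure $2\times2$ submatrices of $\mathcal C$.} Any pure $2\times2$ submatrix of $\mathcal C$ must lie inside a single block $M_k$. This is because a row of $M_k$ is unspecified outside the columns of $M_k$. Hence every pure $2\times2$ submatrix of $\mathcal C$ is constant.

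\textbf{Littlestone dimension $1$.} Suppose $\mathcal C$ had a shattered mistake tree of depth $2$. Its root is a point $x$. On the branch labelled $0$ there is a point $y$, together with concepts $c_1,c_2$ satisfying $c_1(x)=c_2(x)=0$, $c_1(y)=0$ and $c_2(y)=1$. The submatrix on rows $c_1,c_2$ and columns $x,y$ is then pure and equals $\begin{bmatrix}0&0\\0&1\end{bmatrix}$, which is nonconstant. This contradicts the previous paragraph, so the Littlestone dimension is at most $1$. Some column of $M_k$ for $k\ge 2$ must take both values $0$ and $1$, since otherwise a disambiguation of $M_k$ would have a constant column set incompatible with VC-dimension $\geq 2$. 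So the Littlestone dimension is exactly $1$.

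\textbf{Linear growth on finite restrictions.} Fix $n$ domain points $S$ and restrict $\mathcal C$ to $S$. This restriction is a partial class on a finite domain with Littlestone dimension at most $1$. The plan is to disambiguate it without increasing Littlestone dimension, using the standard optimal-algorithm extension from the Alon--Hanneke--Holzman--Moran theory~\cite{AHHM22}. On a finite domain, each partial concept can be completed by following the predictions of the Standard Optimal Algorithm, and this preserves Littlestone dimension. Since VC-dimension is at most Littlestone dimension, the resulting total class on $S$ has VC-dimension at most $1$. By Sauer--Shelah it therefore has at most $\binom n0+\binom n1=n+1$ distinct patterns.

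\textbf{Main obstacle.} The delicate step is this finite-domain disambiguation, because we need the completion to be genuinely total with Littlestone dimension $1$. The intended fallback is to exploit the rigid structure directly. Call two points of $S$ linked when some concept is specified on both; by the constant-$2\times2$ property, any two concepts specified on both linked points must agree there. From this structure one can hope to write down a disambiguation explicitly, for instance with a threshold or star-like pattern family, which has at most $n+1$ patterns. This step should be checked carefully.
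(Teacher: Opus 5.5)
Your construction, the infinite VC-dimension argument, and the bound "Littlestone dimension at most $1$" via constant pure $2\times 2$ submatrices are exactly the paper's. Your check that some column carries both values, which gives dimension exactly $1$, is a harmless addition.

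The gap is in the growth step. You assert that completing concepts by the Standard Optimal Algorithm on a finite domain preserves Littlestone dimension, so the completed class on $S$ has VC-dimension at most $1$. This is false, and your own VC paragraph refutes it. Take $S$ to be the column set of $M_k$ with $k\ge 2$. Any disambiguation of $\mathcal C|_S$ restricts to a disambiguation of $M_k$, so by \cref{thm:matrix-not-disambig} it has VC-dimension at least $k$. Hence no disambiguation of $\mathcal C|_S$ has VC-dimension at most $1$, and Sauer--Shelah cannot deliver the bound. Your fallback (threshold or star-like families) fails for the same reason, since those families have VC-dimension at most $1$. More generally, Littlestone-preserving disambiguations of all finite restrictions would, by compactness, give a global disambiguation of finite VC-dimension, which is precisely what the corollary forbids.

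What is needed is a direct count of the completions that does not control their VC-dimension. This is the SOA size bound the paper cites from \cite{AHHM22,CHHH23}. Order $S$ as $x_1,\ldots,x_n$. For a concept $h$, start with $V=\mathcal C|_S$. At step $i$, let $\hat y_i$ be the label $b$ maximizing the Littlestone dimension of $V_{x_i\to b}=\{g\in V: g(x_i)=b\}$, using a fixed tie-break and giving the empty class dimension $-1$.
\begin{itemize}
\item If $h(x_i)\in\{\star,\hat y_i\}$, set $\bar h(x_i)=\hat y_i$ and leave $V$ unchanged.
\item Otherwise (a mistake), set $\bar h(x_i)=h(x_i)$ and replace $V$ by $V_{x_i\to h(x_i)}$.
\end{itemize}
Because $V$ is updated only at mistakes, $h$ never leaves $V$. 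Each mistake strictly lowers the Littlestone dimension of $V$: otherwise both $V_{x_i\to 0}$ and $V_{x_i\to 1}$ would be nonempty with dimension at least that of $V$, and hanging their trees below $x_i$ would give a deeper shattered tree. So there is at most one mistake. The whole run is determined by the mistake set, so $\bar h$ is determined by a subset of $[n]$ of size at most $1$. This gives a disambiguation with at most $\binom{n}{0}+\binom{n}{1}=n+1$ concepts. Its VC-dimension is bounded only by $\log_2(n+1)$, and it is at least $k$ once $S$ contains the columns of $M_k$.

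Updating $V$ only at mistakes is essential. Updating also at $\star$ positions can expel $h$ from $V$ and destroy the mistake bound. Updating at correctly predicted defined positions makes the run depend on more than the mistake set and destroys the count.
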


Blondal, Hatami, Hatami, Lalov and Tretiak~\cite{BHHLT26} obtained the analogous conclusion for Littlestone dimension using the spherical classes. 

\begin{proof}
    The matrices $M_k$ constructed in \cref{thm:matrix-not-disambig} have Littlestone dimension at most $1$, since all their pure $2 \times 2$ submatrices are constant.
    Taking a countable disjoint union of the matrices $M_k$, with all entries between different blocks left unspecified, yields a partial matrix $M$ of Littlestone dimension at most $1$.
    
    The Standard Optimal Algorithm gives a disambiguation of size at most $\sum_{i=0}^{k}\binom{n}{i}$ for every partial concept class of Littlestone dimension at most $k$ on an $n$-point domain~\cite{AHHM22,CHHH23}.
    Applying this with $k=1$ gives the bound $n+1$.
    
    Every disambiguation of $M$ restricts to a disambiguation of each $M_k$, and hence contains every finite binary matrix.
    Its VC-dimension is therefore infinite.
\end{proof}

The above argument only requires the exclusion of pure $2\times2$ submatrices with three equal entries and one opposite entry. 
Thus the finite examples supplied by \cref{thm:main-sphere,thm:main-cube}, for sufficiently small $\varepsilon$, also yield the corollary.

We finally consider \emph{partial graphs}, in which each unordered pair of vertices is prescribed to be an edge, prescribed to be a non-edge, or left unspecified.
Symmetry imposes additional constraints on the construction, but also restricts the disambiguations it must handle.
We define the VC-dimension of a partial graph using its adjacency matrix with $\star$ on the diagonal.
An induced subgraph is \emph{pure} if all its pairs are specified, and \emph{homogeneous} if it is complete or edgeless.

Without the VC-dimension requirement on the partial graph, the following conclusion follows from a theorem of Ne\v{s}et\v{r}il and R\"odl~\cite{NR81}.

\begin{restatable}{theorem}{maingraph}\label{thm:graph-not-disambig}
    For every $k\in\mathbb N$, there exists a finite partial graph $G$ with VC-dimension at most $1$ such that every pure induced subgraph of $G$ on three vertices is homogeneous, while every disambiguation of $G$ contains every $k$-vertex graph as an induced subgraph.
\end{restatable}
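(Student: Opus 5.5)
Our plan is to build $G$ from a hypergraph of large girth, using the girth Ramsey theorem of Reiher and R\"odl~\cite{RR23} in the same way as for \cref{thm:matrix-not-disambig}. Fix $k$. Let $U$ be a universal target: a graph on a vertex set $[m]$ that contains every $k$-vertex graph as an induced subgraph. For instance, take the disjoint union of all $k$-vertex graphs, so that a single $m$-vertex pattern encodes all targets. We encode the edges and non-edges of $U$ as ``gadgets'', namely small partial structures whose only specified pairs form homogeneous pieces.

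\textbf{Step 1.} We replace each vertex $v\in[m]$ by a part, and each pair $\{u,v\}$ by a gadget. The gadget is chosen so that, in every disambiguation, the colour of the unspecified pair inside it acts as a bit. After a monochromatic selection, that bit either matches the prescribed adjacency of $U$ or matches its complement. The naive version declares the pair $\{u,v\}$ to be $\star$ and asks Ramsey to fix its colour. Then a disambiguation is simply a $2$-colouring of the $\star$ pairs. The induced Ramsey theorem of Ne\v{s}et\v{r}il--R\"odl~\cite{NR81} gives a partial graph all of whose pairs are $\star$ and such that some copy is monochromatic in a controlled ordered way. Since that copy is monochromatic, it is useless on its own. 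The way around this is to let the specified edges carry the structure of $U$. Concretely, we take a partite construction: an $m$-partite graph $\Gamma$ in which every pair inside a part is specified as a non-edge, so the parts are stable. Pairs across parts $u,v$ are specified according to $U$ on certain ``copies'' and left $\star$ elsewhere. A disambiguation colours the $\star$ pairs, and we must find a copy whose specified pairs realise $U$ exactly. Homogeneity of pure triples fails if a specified triangle mixes edge and non-edge. To prevent this, we require that every triple meeting three parts that is fully specified lies in a single copy of $U$ restricted to a homogeneous triple. This is arranged by taking a large-girth system of copies: any two copies share at most one vertex, and the copies form no short cycles.

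\textbf{Step 2.} We apply the girth Ramsey theorem~\cite{RR23}. It supplies an $m$-uniform (ordered, partite) hypergraph $\mathcal H$ with large girth such that every $2$-colouring of its pairs-in-edges yields a copy of a prescribed ``pattern'' hypergraph with all relevant pairs colour-constant in the right way. We define $G$ on $V(\mathcal H)$. A pair inside a hyperedge $e$ is specified only when it lies in a homogeneous triple-free position, and all remaining pairs are $\star$. Large girth (girth at least $4$) implies that any pure triple lies inside one hyperedge. It also gives VC-dimension at most $1$: two vertices share at most one hyperedge, so no $2$-set can be shattered. The reason is that the four patterns would require specified pairs from two different hyperedges forming a short cycle.

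\textbf{Step 3.} Given a disambiguation, we take the Ramsey copy in which each $\star$ pair class between positions $i<j$ has constant colour $c_{ij}$. The resulting induced graph on a chosen transversal is then determined by the $c_{ij}$ together with the specified pairs. We choose the gadget so that for any colouring $(c_{ij})$ it still contains every $k$-vertex graph. One option is to use several vertices per position and exploit ordering, in the style of the $I_k,J_k,T_k$ trichotomy, so that a constant colouring still yields a universal graph such as a ``half-graph''-type universal structure. This is the main obstacle: specified pairs must stay homogeneous on triples, yet constant Ramsey colourings must still generate all graphs. We first try realising the adjacency of $U$ via which positions are specified, leaving only forced-constant pairs to the colouring.
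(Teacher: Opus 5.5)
Your proposal has a genuine gap, and you name it yourself at the end of Step 3: you never construct a gadget whose specified pairs satisfy the local conditions while its Ramsey-forced completions are universal. The constructions you do sketch cannot work. In Step 1 you let the specified pairs of each copy realise the universal graph $U$. Then, once $k\ge 3$, a pure triple inside one copy can be an induced path on three vertices, which is not homogeneous. For $k$ large, a single copy already shatters $2$-sets.

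Your VC-dimension argument in Step 2 (``two vertices share at most one hyperedge'') does not fix this. After localization, all four shattering witnesses may lie in the same copy as $u,v$, so what must rule out shattering is a property of the gadget, not of the girth. In addition, a system of copies pairwise sharing at most one vertex cannot force a monochromatic copy under edge colourings: each $\star$-pair then lies in only one copy, so every copy can be coloured non-uniformly. The Reiher--R\"odl forest of copies lets two copies share an edge of $H$, which is why those pairs are declared $\star$. Your Step 3 plan (ordered classes $c_{ij}$, half-graph structures) asks for more than girth Ramsey provides and is left unresolved.

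The missing idea is a gadget in which plain monochromaticity suffices:
\begin{itemize}
\item Take a graph $D$ containing all $k$-vertex graphs as induced subgraphs.
\item Let $D^+$ keep the edges of $D$ and turn its non-edges into $\star$. Let $D^-$ keep the non-edges of $D$ and turn its edges into $\star$.
\item Join $D^+$ and $D^-$ completely by $\star$-pairs.
\end{itemize}
In the resulting partial graph $U$, no vertex is incident to both an edge and a non-edge. Yet filling every $\star$ with $1$ recovers $D$ on $V(D^-)$, and filling every $\star$ with $0$ recovers $D$ on $V(D^+)$.

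Then apply girth Ramsey to the graph $F$ of $\star$-pairs of $U$, with $r=2$ and $\ell=8$. Prescribe a non-$H$-edge according to $U$ when it lies in a copy; this is well defined because two copies meet only in a vertex or an $H$-edge. Leave every other pair $\star$. The three required properties then follow:
\begin{itemize}
\item A monochromatic copy yields $U^{(0)}$ or $U^{(1)}$, hence every $k$-vertex graph.
\item The Helly property for cliques of $\Gamma(\mathcal F')$ puts every pure triple inside one copy, where it is homogeneous.
\item If $\{u,v\}$ were shattered by $x_{00},x_{01},x_{10},x_{11}$, the three-common-neighbours localization would put $u,v,x_{01}$ in one copy. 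There $x_{01}$ would be incident to both an edge and a non-edge, contradicting the gadget.
\end{itemize}
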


\subsection*{Overview of the proofs}

We first describe a conditional approach to \cref{thm:main-cube}, then explain how a weaker Ramsey principle yields an unconditional proof.
We conclude with the constructions proving \cref{thm:matrix-not-disambig,thm:graph-not-disambig}.

\paragraph{A conditional radial Ramsey approach.}

A disambiguation of $H_\varepsilon^N$ can be viewed as a coloring $c:Q_N\times Q_N\to\{0,1\}$ that assigns color $0$ to every pair at Hamming distance at most $\varepsilon N$, color $1$ to every pair at distance at least $(1-\varepsilon)N$, and is arbitrary on the remaining pairs.
Our goal is to show that every such coloring, viewed as a binary matrix, has large VC-dimension.

For this conditional approach, it suffices to consider symmetric\footnote{A coloring $c : X \times X \to \{0,1\}$ is \emph{symmetric} if $c(x,y)=c(y,x)$ for all $x,y\in X$.} colorings.
Although a disambiguation need not be symmetric, any disambiguation of bounded VC-dimension can be transformed into a symmetric one satisfying the same gap constraints, with VC-dimension bounded by a function of the original one. 

Our starting point is then a natural Ramsey-theoretic strategy: find, inside every symmetric disambiguation, a large copy of a smaller cube with additional structural properties, and use this structure to extract a large shattered set.

Since the Gap-Hamming constraints are metric, the copy should approximately preserve normalized Hamming distances.
Ordinary subcubes are not well suited for this: an $m$-dimensional subcube of $Q_N$ has diameter $m$, and therefore loses the distinction between close and almost-antipodal pairs when $m\ll N$.
Instead, we seek a \emph{near-spanning homothetic copy}, in which distances are scaled uniformly and the diameter is close to that of the ambient cube.
Such a copy can be obtained by replacing each coordinate of $Q_m$ by a block of ambient coordinates, with all blocks of the same size and together covering almost all coordinates of $Q_N$.
The gap constraints then survive inside the copy, with an adjusted parameter.

On such a copy, the desired additional structure is \emph{radiality}: the color of a pair depends only on its Hamming distance.
This would already be enough for our purposes.
Indeed, the gap constraints prescribe one color at small distances and the other at almost-antipodal distances, so the color must change between some intermediate distance levels.
Using such a change of color, one can vary several coordinates independently to construct a large shattered set.

This suggests the following Ramsey-theoretic statement; precise definitions are given in \cref{sec:hdr}.

\begin{restatable}[Near-Spanning Radial Ramsey]{conjecture}{nearspanningradialramsey}\label{conj:radial}
For all $0 < \varepsilon < 1/2$ and $m \in \mathbb{N}$, there exists $N \in \mathbb{N}$ such that every symmetric coloring $c : Q_N \times Q_N \to \{0,1\}$ admits a $c$-radial $\varepsilon$-near-spanning homothetic copy of $Q_m$.
\end{restatable}

The form of \cref{conj:radial} naturally leads to homogeneous dual Ramsey theory.
Indeed, a homothetic copy of $Q_m$ is encoded by a partition of the ambient coordinates into $m$ active blocks of equal size, together with some fixed coordinates. 
A pair of vertices $x,y\in Q_m$ then induces a partition of these blocks into four parts, according to the values taken by $x$ and $y$ on each coordinate.
The sizes of these four parts determine, in particular, the Hamming distance $d_H(x,y)$.
This connects the search for radial copies to Ramsey questions about merging parts of partitions.

The relevant statement is the following conjecture of Kechris, Soki{\'c}, and Todor{\v{c}}evi{\'c}~\cite{KST17}.
Here a $k$-equipartition is a partition into $k$ parts of equal size, and a coarsening is obtained by merging parts.

\begin{restatable}[Homogeneous Dual Ramsey, \cite{KST17}]{conjecture}{homogeneousdualramsey}\label{conj:hdr}
Let $k,m\in\mathbb N$ with $k\mid m$. 
For every $r \in\mathbb N$, there exists an integer $n$, divisible by $m$, such that for every $r$-coloring of the $k$-equipartitions of $[n]$, there exists an $m$-equipartition $\mathcal{P}$ of $[n]$ such that all the $k$-equipartitions coarsening $\mathcal{P}$ have the same color.
\end{restatable}

The correspondence requires some care, since the four classes arising from a pair of cube vertices need not have equal sizes.
In \cref{sec:hdr}, we develop this connection and prove that \cref{conj:hdr} implies \cref{conj:radial}, which in turn implies the Gap-Hamming Disambiguation theorem.

\paragraph{A weaker Ramsey principle.}

For our main theorem, however, we can conclude with a weaker property than the one predicted by \cref{conj:radial}.
The key observation is that shattering is inherently two-sided: the columns to be shattered and the rows witnessing their shattering need not come from the same set of points.
We may therefore restrict the rows and columns along different embeddings $\phi,\psi$, obtaining a coloring of the form $c(\phi(x),\psi(y))$.

We also relax the required dependence on distance.
Rather than asking that the color depend only on the total distance, we allow it to depend on the distances in the individual coordinates.
To make this useful, we work on a discrete torus $(\mathbb Z_R)^N$, a product of cycles of sufficiently large even length $R$.
These distances are naturally captured by the dihedral group: two ordered pairs of vertices of a cycle have the same cyclic distance exactly when a rotation or reflection maps one pair to the other.

The key Ramsey input for our unconditional proof is the recent Dense Block theorem of Pálvölgyi~\cite{Palvolgyi26}.
The two relaxations above allow us to use this theorem in our setting.
We apply it to the dihedral action on ordered pairs and show that its conclusion yields two embeddings $\phi,\psi$ such that $c(\phi(x),\psi(y))$ depends only on the vector of coordinatewise distances between $x$ and $y$.
Moreover, the normalized distance between $\phi(x)$ and $\psi(y)$ differs only slightly from that between $x$ and $y$.
Thus we obtain the desired structured restriction while retaining the gap constraints after a suitable adjustment of the parameter.

The shattering argument is similar in spirit to the radial one: we exploit a change of color forced by the gap constraints to construct a large shattered set.
Here, vectors of coordinatewise distances take the place of total distances, and the extra distance levels on each cycle give us more freedom.

Finally, every even cycle embeds isometrically into a Hamming cube of the same diameter.
We use this embedding to transfer the torus statement back to the cube setting, completing the proof of the Gap-Hamming Disambiguation theorem.

\medskip

\paragraph{The matrix and graph constructions.}

Our constructions for \cref{thm:matrix-not-disambig,thm:graph-not-disambig} use a different Ramsey principle.
We first explain the idea for matrices.

A disambiguation can be viewed as a two-coloring of the unspecified entries.
Given a partial matrix $M$, consider the auxiliary bipartite graph $H$ whose vertices are the rows and columns of $M$, with a row adjacent to a column exactly when the corresponding entry is unspecified.
A disambiguation of $M$ therefore colors the edges of $H$ with $0$ and $1$.
We would like to construct $M$ so that every such coloring contains a monochromatic induced copy of a prescribed bipartite graph $F$, with its two sides corresponding to rows and columns.
Such a copy would identify a submatrix whose unspecified positions are exactly those described by the edges of $F$, and whose unspecified entries all receive the same value.
Inducedness is essential here: it ensures that the nonedges of $F$ between its two sides correspond to specified entries of $M$.

Induced Ramsey theory suggests how to obtain this property.
However, $F$ records only which positions are unspecified, and forgets the values of the specified entries.
We therefore equip $F$ with prescriptions of $0$ or $1$ on the missing pairs between its two sides.
We choose these prescriptions so that the resulting partial pattern only has constant pure $2\times2$ submatrices, but has large VC-dimension whenever its unspecified entries are filled uniformly, whether with $0$ or with $1$.
Placing these prescriptions on the copies of $F$ would then ensure that a monochromatic member yields large VC-dimension.

The difficulty is to make the prescriptions on different copies coexist.
Overlapping copies might prescribe different values to the same entry.
Even if all prescriptions are compatible, four entries contributed by different copies might together form a nonconstant pure $2\times2$ submatrix.
We therefore need a stronger variant of induced Ramsey that controls both how the copies intersect and what small configurations they collectively create.

The girth Ramsey theorem of Reiher and Rödl~\cite{RR23} supplies this additional structure.
It provides a bipartite host and a distinguished family of induced copies of $F$ respecting its two sides, such that every two-coloring of the host edges makes some member of the family monochromatic.
Moreover, every small collection of these copies can be extended to a forest of copies, assembled by taking disjoint unions and gluing along vertices or edges.
In our construction, these intersection properties ensure that no entry receives competing prescriptions.
The local forest structure also allows us to show that every pure $2\times2$ submatrix is contained in a single copy, where it is constant by construction.

For graphs, we apply the same strategy, choosing a partial graph whose two uniform disambiguations each contain every graph of the prescribed order as an induced subgraph.
The local forest structure ensures that every pure triple is homogeneous and that no pair of vertices is shattered.

\subsection*{Preliminaries}

Throughout the paper, for every $n \in \mathbb{N}$, we write $[n]$ for the set $\{1, \ldots, n\}$. 
Given two integers $a, b$, we write $\llbracket a, b\rrbracket$ for the set $\{n \in \mathbb{Z} : a \leq n \leq b\}$.

Let $M$ be a $\{0, 1, \star\}$-matrix.
A set $\mathcal{C}$ of columns of $M$ is \emph{shattered} if for every set $S \subseteq \mathcal{C}$, there exists a row $r_S$ of $M$ such that for every column $c \in \mathcal{C}$, we have $M[r_S, c] = 1$ if $c \in S$, and $M[r_S, c] = 0$ if $c \notin S$.
The \emph{VC-dimension} of $M$ is the supremum of the sizes of finite shattered sets of columns.

\subsection*{AI statement}
The authors suggested Ramsey-theoretic approaches and provided relevant literature.
Building on this input, ChatGPT (OpenAI; GPT-5.6 Sol Pro, GPT-5.6 Sol Ultra, and GPT-6 Astra) was used to discover the proofs of the main results.
It also assisted with further literature searches and the organization and exposition of the manuscript.
The authors take full responsibility for the results, proofs, references, and final text.

\section{Metric obstructions to disambiguation}

We follow the approach described in the proof overview.
We first show that a coloring satisfying a gap property on a suitable discrete torus has large VC-dimension whenever its color depends only on the coordinatewise distances. 
We then use Pálvölgyi’s Dense Block theorem to obtain such a coloring from an arbitrary disambiguation. 
An isometric embedding of even cycles into hypercubes completes the proof of the Gap-Hamming Disambiguation theorem. 
We conclude with the spherical and toroidal consequences.

\subsection{Shattering from distance profiles}

Let $(X, d)$ be a metric space. 
The \emph{diameter} of $(X, d)$ is the supremum of $d(x, y)$ over all $x, y \in X$. We denote it by $\diam_d(X)$, or simply $\diam(X)$ when $d$ is clear from context.
For $0 < \varepsilon < 1/2$, a coloring $c : X \times X \to \{0, 1\}$ is an \emph{$\varepsilon$-gap coloring} if $c(x, y) = 0$ whenever $d(x, y) \leq \varepsilon \cdot \diam(X)$ and $c(x, y) = 1$ whenever $d(x, y) \geq (1-\varepsilon) \cdot \diam(X)$.

Let $Q_n$ denote the $n$-dimensional hypercube $\{0, 1\}^n$, and let $d_{Q_n}$ denote the Hamming distance over $Q_n$.
For $S \subseteq [n]$, we denote by $\mathbbm{1}_S$ the vector $x \in Q_n$ such that $x_i = 1$ if and only if $i \in S$. 
Let $\mathbb{Z}_r = \mathbb{Z}/r\mathbb{Z}$ and for $x, y \in \mathbb{Z}_r$, let $d_{\mathbb{Z}_r}(x, y)$ be the length of the shorter arc between them on the cycle $\mathbb Z_r$. 
Formally, $d_{\mathbb{Z}_r}(x, y) = \min\{|x-y|, r - |x-y|\}$, using representatives in $\llbracket0, r-1\rrbracket$.

Let $(\mathbb Z_r)^n$ denote the $n$-dimensional discrete torus of side length $r$.
For $x, y \in (\mathbb{Z}_r)^n$, the distance between $x$ and $y$ is $d_{(\mathbb{Z}_r)^n}(x, y) = \sum_{i \in [n]}d_{\mathbb{Z}_r}(x_i, y_i)$.
Observe that $\diam((\mathbb{Z}_r)^n) = n \cdot \diam(\mathbb{Z}_r) = n \lfloor r/2 \rfloor$.
For $x, y \in (\mathbb{Z}_r)^n$, we call the vector $\delta(x, y) = (d_{\mathbb{Z}_r}(x_1, y_1), \ldots, d_{\mathbb{Z}_r}(x_n, y_n))$ the \emph{distance profile} of $(x, y)$.

A $2$-coloring $c : (\mathbb{Z}_{r})^n \times (\mathbb{Z}_{r})^n \to \{0, 1\}$ is \emph{distance-profile-canonical} if there exists a function $f : \{0, \ldots, \lfloor r/2 \rfloor\}^n \to \{0, 1\}$ such that for all $x,y \in (\mathbb{Z}_{r})^n$, we have $c(x, y) = f(\delta(x, y))$. In other words, the color of $(x, y)$ only depends on the distance profile of $(x, y)$.

Given a set $V$ and a $2$-coloring $c : V \times V \to \{0, 1\}$, we say that a set $U \subseteq V$ is \emph{shattered} by $c$ if for every subset $W \subseteq U$, there exists $w \in V$ such that for every $u \in U$ we have $c(w, u) = 1$ if and only if $u \in W$.
The \emph{VC-dimension} of $c$ is the supremum of the sizes of its finite shattered sets.
Note that the VC-dimension of $c$ is the VC-dimension of the binary matrix corresponding to $c$.

We first prove that any $\varepsilon$-gap distance-profile-canonical $2$-coloring of a discrete torus of large enough side length and dimension has large VC-dimension.
Intuitively, we choose a $0$-colored profile that is as far as possible from the origin in the torus. Every coordinate that has not yet reached the maximum then gives a direction along which increasing that coordinate changes the color from $0$ to $1$; the gap condition guarantees that there are many such directions. 
The key point is that these many independent color-changing directions can be used to encode arbitrary subsets, and hence to produce a large shattered set.

\begin{lemma}[Shattering from distance profiles]\label{lem:large-vc}
    For all $0 < \varepsilon < 1/2$ and $q \in \mathbb{N}$, there exist an integer $n$ and an even integer $R$ such that every $\varepsilon$-gap distance-profile-canonical coloring $c : (\mathbb{Z}_{R})^n \times (\mathbb{Z}_{R})^n \to \{0, 1\}$ has VC-dimension at least $q$.
\end{lemma}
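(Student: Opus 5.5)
The plan is to build the shattered set explicitly inside the torus, after using Ramsey's theorem to reduce $f$ to a function of a single integer parameter. Write $D=R/2$, and let $c(x,y)=f(\delta(x,y))$ with $f:\{0,\dots,D\}^n\to\{0,1\}$. The gap condition says that $f(p)=0$ when $\sum_i p_i\le \varepsilon nD$ and $f(p)=1$ when $\sum_i p_i\ge(1-\varepsilon)nD$. I will realise the rows $w_S$ ($S\subseteq[q]$) and the columns $u_j$ ($j\in[q]$) as perturbations of a base pair $(w_0,u_0)$, with a prescribed profile $p=\delta(w_0,u_0)$. The perturbations live on a small set $Y$ of \emph{active} coordinates and on a set $Z$ of \emph{padding} coordinates, disjoint from $Y$.

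The key device is a pair of moves on a single coordinate $i$ with $0<p_i<D$. Put $w_0$ at $a$ and $u_0$ at $a+p_i$ in that coordinate. The row move sends $a\mapsto a-1$ and the column move sends $a+p_i\mapsto a+p_i-1$. Each move alone changes the distance in that coordinate by $\pm1$, while performing both moves leaves it unchanged, because they are translates of each other.

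The row $w_S$ performs the row move on the coordinates $y_l$ with $l\in S$, plus on a padding set $P_S\subseteq Z$ chosen so that $|S|+|P_S|=K$ is constant. The column $u_j$ performs the column move on $y_j$ only. The resulting profile $\delta(w_S,u_j)$ then depends only on which coordinates were moved, and on whether $j\in S$ (cancellation) or $j\notin S$ (one extra move).

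To make the colour depend only on the number of moved coordinates, I apply Ramsey's theorem in its version for colourings of all $k$-subsets for every $k\le L$ simultaneously, by iterating the finite Ramsey theorem. This produces a large set $Y\cup Z$ of coordinates, all with the same base value $p_i=t$, on which $f(p+\text{moves on }T)$ depends only on $|T|$ (and on the type of each move), via some function $h$. The colour of $(w_S,u_j)$ then equals $h(K-1,\cdot)$ if $j\in S$ and $h(K+1,\cdot)$ otherwise. Whenever some $K$ satisfies $h(K-1)\ne h(K+1)$, the rows $w_S$ shatter $\{u_j\}$, replacing $S$ by its complement if the two values come in the opposite order.

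The main obstacle is guaranteeing that $h$ is \emph{not} constant on the relevant parity class. A Ramsey-homogeneous set of coordinates is a vanishing fraction of $[n]$, so the gap condition alone cannot force a colour change there. I would handle this by first choosing the base profile $p$ to sit exactly at a transition. Walk from the profile $t\mathbf 1$ to $(t+1)\mathbf 1$, with $t$ chosen by the gap condition so that these two constant profiles have colours $0$ and $1$. Then, by a second walk raising coordinates one at a time, find a $0$-profile $p$ that is maximal in the coordinatewise order among profiles with entries in $\{t,t+1\}$. This takes $R$ large, so that $\varepsilon nD$ and $(1-\varepsilon)nD$ lie many multiples of $n$ apart.

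Maximality gives $f(p+e_i)=1$ for every coordinate $i$ with $p_i=t$. The gap condition ensures there are at least roughly $\varepsilon n$ such coordinates, and the Ramsey step is run inside them. Maximality then supplies the needed change of value between consecutive levels $0$ and $1$ of raised coordinates. Adjusting the padding so that only two consecutive levels are ever reached, together with the parity trick above, gives $h(K-1)\ne h(K+1)$.

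I expect the delicate part to be reconciling the sign conventions of the moves, since row-only moves and column-only moves change the distance in opposite directions, with the requirement that all compared profiles stay in the maximal neighbourhood of $p$. If the sign conventions do not line up, I would fall back on coordinates with base value $p_i\in\{0,D\}$ and antipodal moves $a\mapsto a+D$. These cancel exactly and turn the problem into a hypercube one, with the transition again supplied by the maximality argument on blocks of coordinates.
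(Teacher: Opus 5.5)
Your proposal has a genuine gap. It sits where you suspected, but it is structural rather than a matter of sign conventions.

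\textbf{Misaligned regime.} When $0<p_i<D$, a row-only move and a column-only move shift the coordinate distance in opposite directions. So when $j\in S$, $\delta(w_S,u_j)$ has $K-1$ coordinates raised by one. When $j\notin S$, it has $K$ raised and one lowered, not $K+1$ raised. Both profiles have the same total $\sum_i p_i+K-1$. Take any radial coloring, for instance $f(p)=1$ if and only if $\sum_i p_i>nD/2$; it is $\varepsilon$-gap and distance-profile-canonical. Every pair $(w_S,u_j)$ then receives the same colour, so nothing is shattered.

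\textbf{Aligned regime (your fallback, base value $0$ or $D$).} Here the two cases do give levels $K-1$ and $K+1$. But a row with $|S|=q$ already moves $q$ coordinates, so you need $h(K-1)\neq h(K+1)$ for some $K\ge q$. Maximality only gives information one step away from $p$. If $p$ is coordinatewise maximal among $0$-profiles, every profile strictly above it has colour $1$, so $h(K-1)=h(K+1)=1$ for all $K\ge 2$. If $p$ is not maximal, nothing forces a change at level $\ge q$. Padding only adds moved coordinates, so it cannot bring the compared levels down to $0$ and $1$.

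\textbf{Counting coordinates at level $t$.} The gap condition does not give roughly $\varepsilon n$ coordinates equal to $t$ in a maximal $p\in\{t,t+1\}^n$. Totals of such profiles lie in a window of width $n$, a $1/D$ fraction of the diameter, where the gap condition is silent. For example, $f$ could equal $1$ on $\{t,t+1\}^n$ only at $(t+1)\mathbf 1$; the maximal $p$ then has a single coordinate equal to $t$.

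\textbf{The missing idea.} Make each column \emph{neutral} on every coordinate other than its own. Place it equidistant from the two values the rows may take there. Then the rows' choices are invisible except on coordinate $j$, and $\delta(w_S,u_j)$ differs from a fixed base profile in at most that one coordinate. On a cycle this forces a jump by a factor $3$. If the rows take $a_j$ or $3a_j$, the column sits at $2a_j$ (neutral) or at $0$ (active, distances $a_j$ versus $3a_j$).

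This is exactly the paper's argument. Take $R=2\cdot 3^s$ with $3^{-s}\le\varepsilon$ and $\varepsilon n\ge q$. Choose $a\in\{1,3,\dots,3^s\}^n$ of maximum sum with $f(a)=0$. Since $f(a)=0$ forces $\sum_j a_j\le(1-\varepsilon)3^s n$, the gap condition gives at least $\varepsilon n\ge q$ coordinates with $a_j<3^s$. For each such $i$, $\delta(y^S,x^i)$ equals $a$ if $i\notin S$, and equals $a$ with $a_i$ tripled if $i\in S$. The latter has colour $1$ by maximality. No Ramsey step or padding is needed.
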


\begin{proof}
    Let $s$ be large enough so that $\frac{1}{3^s} \leq \varepsilon$.
    Let $r = 3^s$ and $R = 2r$.
    Let $n$ be large enough so that $\varepsilon n \geq q$.
    The factor $3$ comes from a simple distance gadget: $a$ and $3a$ are equally far from $2a$, but have different distances from $0$.
    
    Let $c : (\mathbb{Z}_{R})^n \times (\mathbb{Z}_{R})^n \to \{0, 1\}$ be an $\varepsilon$-gap distance-profile-canonical coloring.
    Since $c$ is distance-profile-canonical, there exists a function $f : \{0, \ldots, r\}^n \to \{0, 1\}$ such that for all $x, y \in (\mathbb{Z}_{R})^n$, we have $c(x, y) = f(\delta(x, y))$.
    
    For every $i \in \{0, \ldots, s\}$, let $d_i = 3^i$; in particular $d_0 = 1$ and $d_s = r$.
    Let $x, y \in (\mathbb{Z}_{R})^n$.
    If $\delta(x, y) = (d_s, \ldots, d_s)$ then $d(x, y) = rn = \diam((\mathbb{Z}_{R})^n)$.
    Since $c$ is an $\varepsilon$-gap coloring, we then have $c(x, y) = 1$, so $f(d_s, \ldots, d_s) = 1$.
    If $\delta(x, y) = (d_0, \ldots, d_0)$ then $d(x, y) = n \leq \varepsilon rn = \varepsilon \cdot  \diam((\mathbb{Z}_{R})^n)$.
    Since $c$ is an $\varepsilon$-gap coloring, we then have $c(x, y) = 0$, so $f(d_0, \ldots, d_0) = 0$.
    
    Consider $a = (a_1, \ldots, a_n) \in \{1, 3, \ldots, r\}^n$ such that $f(a) = 0$ and $\sum_{j = 1}^n a_j$ is maximum with this property.
    Let $J = \{j \in [n] : a_j < r\}$.
    We show that $|J| \geq q$.
    If $x = (0, \ldots, 0)$ then $\delta(x, a) = a$ so \[d(x, a) = \sum_{j = 1}^n a_j \geq \sum_{j \notin J}r + \sum_{j \in J}1 = (n-|J|) \cdot r + |J| = rn - |J| \cdot (r-1).\]
    However, we have $0 = f(a) = f(\delta(x, a)) = c(x, a)$. 
    Since $c$ is an $\varepsilon$-gap coloring, this implies $d(x, a) \leq (1-\varepsilon) \cdot \diam((\mathbb{Z}_{R})^n) = (1-\varepsilon)rn$.
    Thus, $rn - |J| \cdot (r-1) \leq (1-\varepsilon)rn$ so $\varepsilon rn \leq (r-1) |J| \leq r|J|$ so $|J| \geq \varepsilon n \geq q$.
    
    For every $i \in J$, let $x^i \in (\mathbb{Z}_{R})^n$ be the vector such that \[x^i_j = \begin{cases}
      2a_j, &j \in J \setminus \{i\} \\
      0, & \text{ otherwise}.
    \end{cases}\]
    These points are pairwise distinct by definition.
    For every set $S \subseteq J$, let $y^S\in (\mathbb{Z}_{R})^n$ be the vector such that \[y^S_j = \begin{cases}
      3a_j, &j \in S \\
      a_j, &j \notin S.
    \end{cases}\]
    
    Note that for each $j \in J$, we have $a_j \leq d_{s-1} = r/3$ so $0, a_j, 2a_j, 3a_j$ all lie on an arc of length at most $r$ in $\mathbb{Z}_R$. Consequently, $d(0, a_j) = d(a_j, 2a_j) = d(2a_j, 3a_j) = a_j$ and $d(0, 3a_j) = 3a_j$.
    
    We now argue that for every $S \subseteq J$ and every $i \in J$, we have \[\delta(y^S, x^i) = \begin{cases}
        a+2a_ie_i, &i \in S \\
        a, &i \notin S.
    \end{cases}\] 
    By maximality of $a$, this will imply that $c(y^S, x^i) = 1$ if and only if $i \in S$.
    Let $S \subseteq J$ and let $i \in J$.
    Let $j \in [n]$. We consider several cases. \begin{itemize}
        \item If $j \notin J$ then $d(y^S_j, x^i_j) = d(a_j, 0) = d(r, 0) = r = a_j$.
        \item If $j \in J \setminus \{i\}$ then $y^S_j \in \{a_j, 3a_j\}$ and $x^i_j = 2a_j$ so $d(y^S_j, x^i_j) = a_j$.
        \item If $j = i$ and $i \in S$ then $y^S_j = 3a_j$ and $x^i_j = 0$ so $d(y^S_j, x^i_j) = 3a_j$.
        \item If $j=i$ and $i \notin S$ then $y^S_j = a_j$ and $x^i_j = 0$ so $d(y^S_j, x^i_j) = a_j$.
    \end{itemize}
    
    Thus, if $i \in S$, we have $\delta(y^S, x^i) = (a_1, \ldots, a_{i-1}, 3a_i, a_{i+1}, \ldots, a_n)$, so by maximality of $(a_1, \ldots, a_n)$, we have $c(y^S, x^i) = f(a_1, \ldots, a_{i-1}, 3a_i, a_{i+1}, \ldots, a_n) = 1$.
    However, if $i \notin S$, we have $\delta(y^S, x^i) = (a_1, \ldots, a_n) = a$ so $c(y^S, x^i) = f(a) = 0$.
    
    Therefore, the set $\{x^i : i \in J\}$ is shattered by $c$, so $c$ has VC-dimension at least $q$.
\end{proof}

The following two lemmas describe how gap constraints and VC-dimension behave when we pull back the two arguments of a coloring along possibly different maps.

\begin{lemma}[Gap preservation under pullback]\label{lem:transfer-gap-coloring}
    Let $0 < \varepsilon < 1/2$ and let $(X, d_X), (Y, d_Y)$ be two finite metric spaces such that there exist two maps $\iota_1, \iota_2 : Y \to X$ that satisfy $d_X(\iota_1(y), \iota_2(y')) = \lambda \cdot d_Y(y, y') + \Delta$ for all $y, y' \in Y$, for some $\lambda, \Delta \geq 0$.
    Let $c : X \times X \to \{0, 1\}$ be an $\varepsilon$-gap coloring and define $\widehat{c} : Y \times Y \to \{0, 1\}, (y, y') \mapsto c(\iota_1(y), \iota_2(y'))$.
    Suppose that $\lambda \cdot \diam(Y) \geq (1-\varepsilon/2) \cdot \diam(X)$.
    Then, $\widehat{c}$ is an $\varepsilon/2$-gap coloring.
\end{lemma}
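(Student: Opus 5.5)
The plan is to check the two defining conditions of an $\varepsilon/2$-gap coloring for $\widehat{c}$ directly, converting each into a statement about $d_X(\iota_1(y),\iota_2(y'))$ through the affine relation $d_X(\iota_1(y), \iota_2(y')) = \lambda \, d_Y(y, y') + \Delta$. The only inputs needed are the hypothesis $\lambda\,\diam(Y)\ge(1-\varepsilon/2)\diam(X)$ and the trivial upper bound $d_X\le\diam(X)$. The latter yields the upper bounds on $\lambda$ and $\Delta$ used below.

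\emph{Preliminary bounds.} Since $Y$ is finite, we may pick $y_0,y_0'\in Y$ with $d_Y(y_0,y_0')=\diam(Y)$. Then
\[
\lambda\,\diam(Y)+\Delta = d_X(\iota_1(y_0),\iota_2(y_0'))\le \diam(X).
\]
Because $\Delta\ge 0$, this gives $\lambda\,\diam(Y)\le\diam(X)$. Combined with the hypothesis, it also gives
\[
\Delta\le \diam(X)-\lambda\,\diam(Y)\le (\varepsilon/2)\,\diam(X).
\]
This control of the additive shift $\Delta$ is the key step. The hypothesis only bounds $\lambda$ from below, so the bound on $\Delta$ has to come from the fact that the images of a diametral pair still lie in $X$.

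\emph{Close pairs.} Suppose $d_Y(y,y')\le(\varepsilon/2)\diam(Y)$. Then
\[
d_X(\iota_1(y),\iota_2(y'))\le (\varepsilon/2)\,\lambda\,\diam(Y)+\Delta \le (\varepsilon/2)\diam(X)+(\varepsilon/2)\diam(X)=\varepsilon\,\diam(X).
\]
Since $c$ is an $\varepsilon$-gap coloring, this forces $\widehat{c}(y,y')=0$.

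\emph{Far pairs.} Suppose $d_Y(y,y')\ge(1-\varepsilon/2)\diam(Y)$. Using $\Delta\ge0$ and the hypothesis,
\[
d_X(\iota_1(y),\iota_2(y'))\ge (1-\varepsilon/2)\,\lambda\,\diam(Y)\ge(1-\varepsilon/2)^2\diam(X)\ge(1-\varepsilon)\diam(X).
\]
Hence $\widehat{c}(y,y')=1$.

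Both conditions hold, so $\widehat{c}$ is an $\varepsilon/2$-gap coloring. Degenerate cases such as $Y$ empty or $\diam(X)=0$ are trivial, and nothing else in the argument is delicate.
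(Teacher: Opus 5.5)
Your proposal is correct and follows essentially the same route as the paper. Both use a diametral pair of $Y$ to get $\lambda\,\diam(Y)\le\diam(X)$ and $\Delta\le(\varepsilon/2)\diam(X)$, then verify the close-pair bound and the far-pair bound via $(1-\varepsilon/2)^2\ge 1-\varepsilon$ exactly as you wrote.
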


\begin{proof}
    Let $y, y' \in Y$ such that $d_Y(y, y') = \diam(Y)$.
    Then, \[d_X(\iota_1(y), \iota_2(y')) = \lambda \cdot d_Y(y, y') + \Delta \geq \lambda \cdot \diam(Y),\] so $\diam(X) \geq \lambda \cdot \diam(Y)$.
    Furthermore, we have \[\diam(X) \geq d_X(\iota_1(y), \iota_2(y')) = \lambda \cdot d_Y(y, y') + \Delta = \lambda \cdot \diam(Y) + \Delta \geq (1-\varepsilon/2) \cdot \diam(X) + \Delta,\] so $\Delta \leq \varepsilon/2 \cdot \diam(X)$.

    Let $y, y' \in Y$ such that $d_Y(y, y') \leq \varepsilon/2 \cdot \diam(Y)$.
    Then, \begin{align*}
        d_X(\iota_1(y), \iota_2(y')) &= \lambda \cdot d_Y(y, y') + \Delta \leq \lambda \cdot \varepsilon/2 \cdot \diam(Y) + \varepsilon/2 \cdot \diam(X) \\
        &\leq \varepsilon/2 \cdot \diam(X) + \varepsilon/2 \cdot \diam(X) = \varepsilon \cdot \diam(X).
    \end{align*} Thus, since $c$ is an $\varepsilon$-gap coloring, we have $\widehat{c}(y, y') = c(\iota_1(y), \iota_2(y')) = 0$.
    \smallskip
    
    Let $y, y' \in Y$ such that $d_Y(y, y') \geq (1-\varepsilon/2) \cdot \diam(Y)$.
    Then, \begin{align*}
        d_X(\iota_1(y), \iota_2(y')) &= \lambda \cdot d_Y(y, y') + \Delta \geq \lambda \cdot (1-\varepsilon/2) \cdot \diam(Y) \\
        &\geq (1-\varepsilon/2) \cdot (1-\varepsilon/2) \cdot \diam(X) \geq (1 - \varepsilon) \cdot \diam(X).
    \end{align*} 
    Thus, since $c$ is an $\varepsilon$-gap coloring, we have $\widehat{c}(y, y') = c(\iota_1(y), \iota_2(y')) = 1$.
    This concludes the proof that $\widehat{c}$ is an $\varepsilon/2$-gap coloring.
\end{proof}

\begin{lemma}[VC-dimension preservation under pullback]\label{lem:transfer-VC-dim}
    Let $X, Y$ be two sets and let $\iota_1, \iota_2 : Y \to X$ be two maps.
    Let $c : X \times X \to \{0, 1\}$ be a coloring and define $\widehat{c} : Y \times Y \to \{0, 1\}, (y, y') \mapsto c(\iota_1(y), \iota_2(y'))$.
    If $\widehat{c}$ has VC-dimension at least $q \in \mathbb{N}$ then so does $c$.
\end{lemma}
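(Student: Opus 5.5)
The plan is to push a shattered set for $\widehat{c}$ forward along $\iota_2$, and its witnesses along $\iota_1$. Fix a finite set $U \subseteq Y$ with $|U| \geq q$ that is shattered by $\widehat{c}$; it exists because $\widehat{c}$ has VC-dimension at least $q$. The candidate shattered set for $c$ is $\iota_2(U) \subseteq X$, and the witnesses will be images under $\iota_1$ of the witnesses for $\widehat{c}$.

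First, $\iota_2$ must be injective on $U$, so that $|\iota_2(U)| = |U|$. Suppose $u \neq u'$ in $U$ with $\iota_2(u) = \iota_2(u')$. Shattering gives a witness $w \in Y$ for the subset $\{u\}$, so $\widehat{c}(w,u) = 1$ and $\widehat{c}(w,u') = 0$. But
\[
\widehat{c}(w,u) = c(\iota_1(w),\iota_2(u)) = c(\iota_1(w),\iota_2(u')) = \widehat{c}(w,u'),
\]
which is a contradiction. Hence $\iota_2$ restricted to $U$ is a bijection onto $\iota_2(U)$.

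Second, we check that $\iota_2(U)$ is shattered by $c$. Every subset of $\iota_2(U)$ has the form $\iota_2(W)$ for a unique $W \subseteq U$. Let $w \in Y$ be the witness for $W$ under $\widehat{c}$, and take $\iota_1(w) \in X$ as the witness for $c$. For $u \in U$ we have $c(\iota_1(w),\iota_2(u)) = \widehat{c}(w,u)$, and this equals $1$ if and only if $u \in W$. By injectivity, $u \in W$ holds if and only if $\iota_2(u) \in \iota_2(W)$. So $\iota_2(U)$ is a shattered set of size at least $q$.

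The argument is routine. The only point that needs care is the injectivity step, since without it the image could be smaller than $U$.
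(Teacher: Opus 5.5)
Your proposal is correct and follows essentially the same route as the paper: push the shattered set forward along $\iota_2$ and its witnesses along $\iota_1$. The only difference is that you prove injectivity of $\iota_2$ on $U$ explicitly via the witness for $\{u\}$, whereas the paper simply notes that a shattered family must consist of pairwise distinct elements.
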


\begin{proof}
    Suppose that $\widehat{c}$ has VC-dimension at least $q$.
    Let $\{y^i : i \in [q]\} \subseteq Y$ be shattered by $\widehat{c}$: for every $S \subseteq [q]$, there exists $y^S \in Y$ such that for every $i \in [q]$, we have $\widehat{c}(y^S, y^i) = 1$ if and only if $i \in S$.
    By definition of $\widehat{c}$, this means that for every $S \subseteq [q]$ and every $i \in [q]$, we have $c(\iota_1(y^S), \iota_2(y^i)) = 1$ if and only if $i \in S$.
    Thus, the set $\{\iota_2(y^i) : i \in [q]\}$ is shattered by $c$.
    In particular, this implies that its elements are pairwise distinct, so $c$ has VC-dimension at least $q$.
\end{proof}

\subsection{Canonicalization on a torus}

In view of \cref{lem:large-vc}, we would like to find a restriction on which the color depends only on the orbit of each coordinate pair. 
We also need the variable coordinates to occupy almost the entire ambient space, so that the gap condition survives the restriction.
Our key tool to find this is a result of Pálvölgyi \cite{Palvolgyi26}. First, we need to introduce some notation.

Let $G$ be a finite group acting on a finite set $X$.
A \emph{$G$-block map} of \emph{dimension} $m$, \emph{length} $n$, and \emph{block size} $k$ is a map $\phi : X^m \to X^n$ for which there are pairwise disjoint sets $I_1, \ldots, I_m \subseteq [n]$, each of cardinality $k$, labels $\gamma_{\ell} \in G$ for $\ell \in I_1 \cup \ldots \cup I_m$, and fixed letters $z_\ell \in X$ outside this union, such that for every $x \in X^m$, we have \[
  \phi(x)_\ell=
  \begin{cases}
    \gamma_\ell x_j,&\ell\in I_j,\\
    z_\ell,&\ell\notin I_1\cup\cdots\cup I_m.
  \end{cases}
\]
The sets $I_1,\ldots,I_m$ are the \emph{blocks} of the map, and $mk/n$ is its \emph{active proportion}.

The action of $G$ on $X$ has the \emph{dense block property} if for every $r, m \in \mathbb{N}$ and every $0 < \eta < 1$, there are $n, k \in \mathbb{N}$ such that $mk \geq (1 - \eta)n$, and every coloring $c : X^n \to [r]$ admits a $G$-block map $\phi : X^m \to X^n$ of block size $k$ satisfying \[c(\phi(x)) = c(\phi(y)) \text{ for all $x, y \in X^m$ such that $y_j \in G \cdot x_j$ for every $j \in [m]$.}\]

Said differently, the induced coloring $c\circ\phi$ depends only on the $G$-orbit of each input coordinate, while the active blocks cover at least a $1-\eta$ proportion of the ambient coordinates.
For instance, observe that if the action of $G$ on $X$ is transitive, then $c(\phi(x))$ is constant over all $x \in X^m$.

The following theorem of Pálvölgyi shows that solvability of the acting group suffices to guarantee the dense block property.

\begin{theorem}[Pálvölgyi’s Dense Block theorem {\cite[Theorem 11]{Palvolgyi26}}]\label{lem:palvolgyi}
    Every action of a finite solvable group on a finite set has the dense block property.
\end{theorem}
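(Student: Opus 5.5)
Since this is a cited result (\cite[Theorem 11]{Palvolgyi26}), I only sketch how I would prove it. The approach is induction along a subnormal series
\[
1=G_0\trianglelefteq G_1\trianglelefteq\cdots\trianglelefteq G_t=G
\]
with cyclic factors $G_{i}/G_{i-1}$ of prime order. The induction rests on one bookkeeping fact, which I would verify first: \emph{the composition of block maps is a block map}. If $\psi : X^m\to X^{n'}$ is a $G$-block map of block size $k'$ and $\phi : X^{n'}\to X^n$ is a $G$-block map of block size $k$, then $\phi\circ\psi$ is a $G$-block map of block size $kk'$. Its labels are products $\gamma_\ell\gamma'_{j}$, and its fixed letters are the old fixed letters together with the images $\gamma_\ell z'_j$. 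Its active proportion is the product of the two active proportions. Given a target $\eta$, I would therefore aim for active proportion $1-\eta/(t+1)$ at each of the $t+1$ stages, so that the product stays at least $1-\eta$.

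The inductive step would go as follows. Fix $N\trianglelefteq H$ with $H/N\cong\mathbb Z_p$, and assume the dense block property for the action of $N$ on $X$. Given $c : X^n\to[r]$, I would first apply the property for $N$ to obtain an $N$-block map $\phi$ with $c\circ\phi$ depending only on the $N$-orbits of the coordinates. Then $c\circ\phi$ descends to a coloring of $(X/N)^{m'}$, and $H/N$ acts on $X/N$ because $N$ is normal. It remains to treat a cyclic group of prime order acting on the finite set $X/N$ (a cyclic orbit structure), and to lift: a label $hN$ is lifted to any representative $h$. This produces an $H$-block map whose induced coloring depends only on $H$-orbits, by normality of $N$. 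The same idea handles the base case: the trivial group is immediate with $k=1$ and $n=m$.

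The key lemma is therefore the dense block property for $\mathbb Z_p$ acting on a finite set $Y$. On each orbit this is a Hales--Jewett-type statement with group labels: find variable blocks on which the coloring is invariant under independently rotating each block's letter within its orbit. Group-labelled Hales--Jewett (or Graham--Rothschild for parameter words over $\mathbb Z_p$) gives such a monochromatic structure, but only with an uncontrolled, possibly huge, set of fixed coordinates. The \textbf{main obstacle} is the density requirement $mk\ge(1-\eta)n$. I would try to reach it by an amplification argument. Split $[n]$ into $L$ consecutive chunks, each of length $\ell$, and in each chunk reserve a small fixed region. Apply the Ramsey statement to a coloring of the chunk-level parameters, namely which rotation each chunk's active part receives. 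This turns the fixed coordinates of the Hales--Jewett solution into whole \emph{chunks} that are frozen, rather than a dense set of frozen letters. Then apply Hales--Jewett with many variables, making each final block a union of $\approx L/m$ chunks. Frozen chunks could then be absorbed by relabelling them as extra copies inside existing variable blocks, exploiting the fact that $\mathbb Z_p$-orbits are permuted transitively by rotation. Making this absorption respect the exact equal-block-size requirement, while keeping the frozen fraction below $\eta$, is the step I expect to need the most care.
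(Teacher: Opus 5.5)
The paper does not prove this theorem. It imports it as a black box from P\'alv\"olgyi, so your sketch has to stand on its own, and it has a genuine gap exactly where you expect one.

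The reduction is fine. Composing block maps multiplies block sizes and active proportions. For $N\trianglelefteq H$, the property for $N$ on $X$ together with the property for $H/N$ on $X/N$ yields it for $H$ on $X$: fix the $H/N$-parameters $m',k_2$ first, apply the $N$-property at dimension $m'$, and lift both the labels and the fixed letters of the quotient block map to representatives. But this is the routine part. If you drop the near-spanning requirement and let the common block size depend on the colouring, the Hales--Jewett theorem over the alphabet $X^m$ already gives a block map with identity labels on which $c$ is even constant. Indeed, a monochromatic combinatorial line in $(X^m)^L\cong X^{mL}$ is exactly such a map. So the entire content of the theorem is the condition $mk\ge(1-\eta)n$ with equal blocks, i.e.\ your key lemma for $\mathbb Z_p$-sets, and that lemma is not proved.

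The amplification sketch does not supply it, for two reasons.

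\emph{Chunking controls nothing.} Hales--Jewett applied to the chunk-level colouring gives no lower bound on the number of variable chunks, and all but a vanishing fraction of the $L$ chunks may be frozen. So ``each final block a union of $\approx L/m$ chunks'' is just the unproved density statement restated one level up.

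\emph{The absorption step is invalid.} The invariance of $c$ is known only at points of the subspace you found. A frozen chunk carries letters independent of $x$, whereas a relabelled copy $\gamma x_j$ moves with $x_j$. Transitivity of rotation on an orbit lets you match a frozen letter for one value of $x_j$, never for all of them, and not at all when $x_j$ lies in another orbit. After relabelling, the block map leaves the canonized subspace, and the colour of the new points is uncontrolled. For the same reason, variable sets can never be padded or trimmed to equalize sizes.

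Density also forces an essential use of the labels. Take $C_2$ acting on $\{0,1\}$ and let $c$ indicate whether the Hamming weight exceeds $n/2$. Then any block map with identity labels and active proportion above $1/2$ gives $\phi(0^m)$ and $\phi(1^m)$ different colours, so only balanced labellings can work.

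What is missing is a Ramsey argument that produces near-spanning, equal-size blocks with suitably chosen labels directly, rather than by post-processing a Hales--Jewett or Graham--Rothschild solution. Note that a closely related statement without labels, the Homogeneous Dual Ramsey Conjecture (\cref{conj:hdr}), is still open.
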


We apply this theorem to the diagonal action of the dihedral group on pairs of cycle vertices to obtain a distance-profile-canonical restriction while preserving the gap constraints.

\begin{lemma}[Canonicalization on a torus]\label{lem:make-nice}
    For all $0 < \varepsilon < 1/2$ and $R, n \in \mathbb{N}$, there exists $N \in \mathbb{N}$ with the following property.
    For every $\varepsilon$-gap coloring $c : (\mathbb{Z}_R)^N \times (\mathbb{Z}_R)^N \to \{0, 1\}$, there exist two maps $\iota_1, \iota_2 : (\mathbb{Z}_R)^n \to (\mathbb{Z}_R)^N$ such that the coloring $\widehat{c} : (\mathbb{Z}_{R})^n \times (\mathbb{Z}_{R})^n \to \{0, 1\}$ defined by $\widehat{c}(x, y) = c(\iota_1(x), \iota_2(y))$ is an $\varepsilon/2$-gap distance-profile-canonical coloring.  
\end{lemma}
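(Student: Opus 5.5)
We apply \cref{lem:palvolgyi} to the set $X=\mathbb Z_R\times\mathbb Z_R$ of ordered pairs of cycle vertices, with the dihedral group $D_R$ (order $2R$) acting diagonally: $g\cdot(u,v)=(gu,gv)$. The group $D_R$ is solvable, since it has the cyclic normal subgroup of rotations with quotient of order $2$, so this action has the dense block property. The orbits of $(u,v)$ under this action are exactly the sets of pairs with a given cyclic distance $d_{\mathbb Z_R}(u,v)$. Indeed, the action preserves distance; conversely, rotations move $u$ to $0$, and a reflection fixing $0$ sends $v$ to $-v$, so any two pairs at the same distance are related. A coloring $c$ of $(\mathbb Z_R)^N\times(\mathbb Z_R)^N$ is identified with a $2$-coloring of $X^N$ via $((x_1,y_1),\dots,(x_N,y_N))\leftrightarrow(x,y)$. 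We take $r=2$, $m=n$ and $\eta=\varepsilon/2$, and let $N,k$ be given by the dense block property, so that $nk\ge(1-\varepsilon/2)N$.

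Given $c$, we obtain a $D_R$-block map $\phi:X^n\to X^N$ with blocks $I_1,\dots,I_n$ of size $k$, labels $\gamma_\ell$ and fixed letters $z_\ell=(z^1_\ell,z^2_\ell)$. We define $\iota_1(x)_\ell=\gamma_\ell x_j$ for $\ell\in I_j$ and $\iota_1(x)_\ell=z^1_\ell$ otherwise; $\iota_2$ is defined the same way using $z^2_\ell$. Since the action is diagonal, $\phi((x_1,y_1),\dots,(x_n,y_n))$ is exactly the pair $(\iota_1(x),\iota_2(y))$. The dense block conclusion says that $\widehat c(x,y)=c(\phi(\cdot))$ depends only on the orbit of each $(x_j,y_j)$, that is, only on $\delta(x,y)$. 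So $\widehat c$ is distance-profile-canonical.

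For the gap property we use \cref{lem:transfer-gap-coloring}. Each $\gamma_\ell$ is an isometry of $\mathbb Z_R$, so
\[d(\iota_1(x),\iota_2(y))=k\,d(x,y)+\Delta,\qquad \Delta=\sum_{\ell\notin\bigcup I_j}d_{\mathbb Z_R}(z^1_\ell,z^2_\ell).\]
This gives $\lambda=k$. Moreover,
\[\lambda\diam((\mathbb Z_R)^n)=kn\lfloor R/2\rfloor\ge(1-\varepsilon/2)N\lfloor R/2\rfloor=(1-\varepsilon/2)\diam((\mathbb Z_R)^N).\]
Hence \cref{lem:transfer-gap-coloring} shows that $\widehat c$ is an $\varepsilon/2$-gap coloring.

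The main obstacle is the identification of the dihedral orbits on ordered pairs with cyclic distances; everything else is routine bookkeeping. One point to check is that the labels act on both coordinates simultaneously, which is precisely what makes the induced maps on rows and columns share their blocks and labels while differing only in the fixed letters. Finally, $N$ depends only on $\varepsilon,R,n$, as required.
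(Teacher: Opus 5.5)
Your proposal is correct and follows the paper's proof step by step. Like the paper, you apply Pálvölgyi's theorem to the diagonal $D_R$-action on $\mathbb Z_R\times\mathbb Z_R$ with $r=2$, dimension $n$ and $\eta=\varepsilon/2$, identify orbits with cyclic distances, factor the block map into $\iota_1,\iota_2$ sharing blocks and labels, and finish with the gap-preservation lemma using $\lambda=k$. Your orbit argument (rotate $u$ to $0$, then reflect $v\mapsto -v$ if needed) is just a slightly more explicit version of the paper's.
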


Note that allowing different row and column embeddings lets us obtain a distance-profile-canonical restriction without first symmetrizing the original coloring.

\begin{proof}
    Consider the dihedral group $D_R$ (the group of symmetries of the $R$-gon). 
    Note that $D_R \cong C_R \rtimes C_2$ is a finite solvable group.
    By identifying $\mathbb{Z}_R$ with the $R$-gon, $D_R$ naturally acts on $\mathbb{Z}_R$.
    Consider the diagonal action of $D_R$ on $\mathbb{Z}_R \times \mathbb{Z}_R$, defined by $g \cdot (x, y) = (g \cdot x, g \cdot y)$.
    Let $N, k$ be given by \cref{lem:palvolgyi} for this action of $D_R$ on $\mathbb{Z}_R \times \mathbb{Z}_R$, for $r=2$, $n$ and $\eta = \varepsilon/2$.
    In particular, we have $nk \geq (1-\varepsilon/2)N$.
    
    For every $m \geq 1$, we identify the sets $(\mathbb{Z}_R)^m \times (\mathbb{Z}_R)^m$ and $(\mathbb{Z}_R \times \mathbb{Z}_R)^m$ via the bijection \[((x_1, \ldots, x_m), (y_1, \ldots, y_m)) \mapsto ((x_1, y_1), \ldots, (x_m, y_m)).\]
    Let $c : (\mathbb{Z}_R)^N \times (\mathbb{Z}_R)^N \to \{0, 1\}$ be an $\varepsilon$-gap coloring.
    We view $c$ as a $2$-coloring of $(\mathbb{Z}_R \times \mathbb{Z}_R)^N$.
    Applying \cref{lem:palvolgyi} to the diagonal action of $D_R$ on $\mathbb{Z}_R \times \mathbb{Z}_R$ for $r=2$, $n$ and $\eta = \varepsilon/2$, we obtain a $D_R$-block map $\phi : (\mathbb{Z}_R \times \mathbb{Z}_R)^n \to (\mathbb{Z}_R \times \mathbb{Z}_R)^N$ of block size $k$ such that for all $(x, y), (x', y') \in (\mathbb{Z}_R \times \mathbb{Z}_R)^n$, we have $c(\phi(x, y)) = c(\phi(x', y'))$ whenever $(x'_i, y'_i) \in D_{R} \cdot (x_i, y_i)$ for every $i \in [n]$.
    
    Let $I_1, \ldots, I_n \subseteq [N]$ be the corresponding blocks, recall that each has cardinality $k$, let $g_{t} \in D_R$ for $t \in I_1 \cup \ldots \cup I_n$ be the corresponding labels, and $z_{t} = (a_t, b_t) \in \mathbb{Z}_R \times \mathbb{Z}_R$ for $t$ outside of this union be the corresponding fixed letters.
    
    Observe that for $(a, b), (a', b') \in \mathbb{Z}_R \times \mathbb{Z}_R$, we have $(a', b') \in D_R \cdot (a, b)$ if and only if $d_{\mathbb{Z}_R}(a, b) = d_{\mathbb{Z}_R}(a', b')$.
    Indeed, after rotating $a$ to $a'$, a reflection, if necessary, matches the orientation of the displacement from $a$ to $b$ with that from $a'$ to $b'$.
    Thus, for all $x, y, x', y' \in (\mathbb{Z}_R)^n$, we have $c(\phi(x, y)) = c(\phi(x', y'))$ whenever $\delta(x, y) = \delta(x', y')$.
    
    Since the action of $D_R$ on $\mathbb{Z}_R \times \mathbb{Z}_R$ is diagonal, the map $\phi : (\mathbb{Z}_R \times \mathbb{Z}_R)^n \to (\mathbb{Z}_R \times \mathbb{Z}_R)^N$ factors into two maps $\iota_1, \iota_2 : (\mathbb{Z}_R)^n \to (\mathbb{Z}_R)^N$ so that $\phi(x, y) = (\iota_1(x), \iota_2(y))$ for all $x, y \in (\mathbb{Z}_R)^n$.
    
    Thus, for all $x, y, x', y' \in (\mathbb{Z}_R)^n$ such that $\delta(x, y) = \delta(x', y')$, we have \[\widehat{c}(x, y) = c(\iota_1(x), \iota_2(y)) = c(\phi(x, y)) = c(\phi(x', y')) = c(\iota_1(x'), \iota_2(y')) = \widehat{c}(x', y').\]
    This proves that $\widehat{c}$ is distance-profile-canonical.
    
    Finally, let $x, y \in (\mathbb{Z}_R)^n$.
    Set $\Delta = \sum_{t \notin \bigcup_{j \in [n]}I_j}d_{\mathbb{Z}_R}(a_t, b_t)$.
    Using that $D_R$ acts on $\mathbb{Z}_R$ by isometries, we have \begin{align*}
        d(\iota_1(x), \iota_2(y)) &= \sum_{t \in [N]}d_{\mathbb{Z}_R}(\iota_1(x)_t, \iota_2(y)_t) \\
            &= \sum_{j \in [n]}\sum_{t \in I_j} d_{\mathbb{Z}_R}(g_t \cdot x_j, g_t \cdot y_j) + \sum_{t \notin \bigcup_{j \in [n]}I_j}d_{\mathbb{Z}_R}(a_t, b_t) \\
            &= \sum_{j \in [n]}\sum_{t \in I_j} d_{\mathbb{Z}_R}(x_j, y_j) + \Delta \\
            &= k \sum_{j \in [n]}d_{\mathbb{Z}_R}(x_j, y_j) + \Delta \\
            &= k \cdot d(x, y) + \Delta.
    \end{align*} 
    We want to apply \cref{lem:transfer-gap-coloring} to $((\mathbb{Z}_R)^N, d_{(\mathbb{Z}_R)^N})$ and $((\mathbb{Z}_R)^n, d_{(\mathbb{Z}_R)^n})$ with the maps $\iota_1, \iota_2 : (\mathbb{Z}_R)^n \to (\mathbb{Z}_R)^N$.
    For this, we need to verify that $k \cdot \diam((\mathbb{Z}_R)^n) \geq (1 - \varepsilon/2) \cdot \diam((\mathbb{Z}_R)^N)$.
    Indeed, we have \[k \cdot \diam((\mathbb{Z}_R)^n) = kn \cdot \diam(\mathbb{Z}_R) \geq (1 - \varepsilon/2) \cdot N \cdot \diam(\mathbb{Z}_R) = (1 - \varepsilon/2) \cdot \diam((\mathbb{Z}_R)^N).\]
    Thus, $\widehat{c}$ is an $\varepsilon/2$-gap coloring by \cref{lem:transfer-gap-coloring}.
\end{proof}

\subsection{From tori to the hypercube}

Combining the canonicalization in \cref{lem:make-nice} with the shattering argument of \cref{lem:large-vc}, we obtain large VC-dimension for every gap coloring of a suitable discrete torus.

\begin{theorem}\label{thm:exists-torus}
    For all $0 < \varepsilon < 1/2$ and $q \in \mathbb{N}$, there exist an integer $N$ and an even integer $R$ such that every $\varepsilon$-gap coloring $c : (\mathbb{Z}_R)^N \times (\mathbb{Z}_R)^N \to \{0, 1\}$ has VC-dimension at least $q$.
\end{theorem}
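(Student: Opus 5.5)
The plan is to chain \cref{lem:large-vc}, \cref{lem:make-nice} and \cref{lem:transfer-VC-dim}, keeping track of the parameters so that the gap constant halves exactly once.

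First fix the sizes. Given $0<\varepsilon<1/2$ and $q$, apply \cref{lem:large-vc} with the parameter $\varepsilon/2$ (still in $(0,1/2)$) and $q$. This gives an integer $n$ and an even integer $R$ such that every $\varepsilon/2$-gap distance-profile-canonical coloring of $(\mathbb Z_R)^n\times(\mathbb Z_R)^n$ has VC-dimension at least $q$. Then apply \cref{lem:make-nice} with $\varepsilon$, this $R$ and this $n$ to get $N$. These $N$ and $R$ are the ones claimed in the statement.

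Now let $c:(\mathbb Z_R)^N\times(\mathbb Z_R)^N\to\{0,1\}$ be an arbitrary $\varepsilon$-gap coloring. By \cref{lem:make-nice} there are maps $\iota_1,\iota_2:(\mathbb Z_R)^n\to(\mathbb Z_R)^N$ such that $\widehat c(x,y)=c(\iota_1(x),\iota_2(y))$ is an $\varepsilon/2$-gap distance-profile-canonical coloring. By the choice of $n$ and $R$, \cref{lem:large-vc} shows that $\widehat c$ has VC-dimension at least $q$. Finally, \cref{lem:transfer-VC-dim} transfers this bound back to $c$.

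There is no real obstacle here. The only point needing care is the order of quantifiers: $R$ and $n$ must be fixed by \cref{lem:large-vc}, applied at $\varepsilon/2$, before \cref{lem:make-nice} is invoked. This works because \cref{lem:make-nice} accepts arbitrary $R$ and $n$ and outputs $N$ depending on them.
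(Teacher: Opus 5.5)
Your proof is correct and is exactly the paper's argument: you fix $n$ and the even $R$ from \cref{lem:large-vc} at parameter $\varepsilon/2$, then take $N$ from \cref{lem:make-nice} at $\varepsilon$, and chain \cref{lem:make-nice}, \cref{lem:large-vc} and \cref{lem:transfer-VC-dim}. The only point needing care is the order of quantifiers, and you handle it correctly.
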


\begin{proof}
    Given $0 < \varepsilon < 1/2$ and $q \in \mathbb{N}$, let $n$ be the integer and $R$ be the even integer given by \cref{lem:large-vc} for $\varepsilon/2$ and $q$, and let $N$ be given by \cref{lem:make-nice} for $\varepsilon, R$ and $n$.
    
    Let $c : (\mathbb{Z}_R)^N \times (\mathbb{Z}_R)^N \to \{0, 1\}$ be an $\varepsilon$-gap coloring.
    By \cref{lem:make-nice}, there exist two maps $\iota_1, \iota_2 : (\mathbb{Z}_R)^n \to (\mathbb{Z}_R)^N$ such that the coloring $\widehat{c} : (\mathbb{Z}_{R})^n \times (\mathbb{Z}_{R})^n \to \{0, 1\}$ defined by $\widehat{c}(x, y) = c(\iota_1(x), \iota_2(y))$ is an $\varepsilon/2$-gap distance-profile-canonical coloring.
    By \cref{lem:large-vc}, the coloring $\widehat{c}$ has VC-dimension at least $q$.
    By \cref{lem:transfer-VC-dim}, $c$ also has VC-dimension at least $q$.
\end{proof}

Using the isometric embedding of $\mathbb Z_{2r}$ into $Q_r$, we transfer this result to the hypercube and obtain \cref{thm:main-cube}, which we restate for convenience.

\gaphamming*

\begin{proof}
    Given $0 < \varepsilon < 1/2$ and $k \in \mathbb{N}$, let $n$ be the integer and $R$ be the even integer given by \cref{thm:exists-torus} for $\varepsilon/2$ and $k$.
    Let $r = R/2$ and set $N_0=\lceil 2rn/\varepsilon\rceil$.
    Let $N \geq N_0$ and consider an arbitrary disambiguation of $H_{\varepsilon}^N$.
    This disambiguation naturally corresponds to an $\varepsilon$-gap coloring $c : Q_N \times Q_N \to \{0, 1\}$.
    Write $N=arn+b$, where $0\leq b<rn$.
    Since $N\geq N_0$, we have 
    \[b<rn\leq\frac{\varepsilon N}{2} \qquad \text{ and } \qquad arn=N-b\geq(1-\varepsilon/2)N.\]
    
    The key observation is that the vertices $0^r, 10^{r-1}, 110^{r-2}, \ldots, 1^{r-1}0, 1^r, 01^{r-1}, 001^{r-2}, \ldots, 0^{r-1}1$ form an isometric cycle of length $2r$ in $Q_r$.
    Indeed, if $\varphi(t)$ denotes the $t$-th listed vertex, then $d_{Q_r}(\varphi(s), \varphi(t)) = \min\{|s-t|, 2r-|s-t|\} = d_{\mathbb{Z}_{2r}}(s, t)$.
    This gives an isometry $\varphi : \mathbb{Z}_R \to Q_r$, which naturally extends to an isometry $\phi : (\mathbb{Z}_R)^{n} \to (Q_r)^{n}, x \mapsto (\varphi(x_1), \ldots, \varphi(x_{n}))$.
    Identifying $((Q_r)^{n})^a \times Q_b$ with $Q_{arn+b} = Q_N$, we obtain a map \[\psi : (\mathbb{Z}_R)^n \to Q_N, x \mapsto (\underbrace{\phi(x),\ldots,\phi(x)}_{a\text{ times}},\underbrace{0,\ldots,0}_{b\text{ times}}).
    \]
    Then, for all $x, y \in (\mathbb{Z}_R)^n$, we have $d(\psi(x), \psi(y)) = a \cdot d(x, y)$.
    
    We want to apply \cref{lem:transfer-gap-coloring} to $(Q_N, d_{Q_N})$ and $((\mathbb{Z}_R)^n, d_{(\mathbb{Z}_R)^n})$ with the maps $\psi, \psi : (\mathbb{Z}_R)^n \to Q_N$.
    For this, we need to verify that $a \cdot \diam((\mathbb{Z}_R)^n) \geq (1 - \varepsilon/2) \cdot \diam(Q_N)$.
    Indeed, we have \[a \cdot \diam((\mathbb{Z}_R)^n) = arn \geq (1-\varepsilon/2) \cdot N = (1 - \varepsilon/2) \cdot \diam(Q_N).\]
    Thus, the coloring $\widehat{c} : (\mathbb{Z}_R)^n \times (\mathbb{Z}_R)^n \to \{0, 1\}$ defined by $\widehat{c}(x, y) = c(\psi(x), \psi(y))$ is an $\varepsilon/2$-gap coloring of $(\mathbb{Z}_R)^n$.
    By our choice of $n$ and by \cref{thm:exists-torus}, we obtain that $\widehat{c}$ has VC-dimension at least $k$.
    Then, $c$ has VC-dimension at least $k$ by \cref{lem:transfer-VC-dim}.
    This proves that the corresponding disambiguation of $H_{\varepsilon}^N$ has VC-dimension at least $k$.
\end{proof}

\subsection{Spherical and toroidal consequences}

Recall that, for $0<\varepsilon<\pi/2$ and $d \in \mathbb{N}$, $M_\varepsilon^d$ is the partial matrix indexed by pairs of points of the unit sphere $\mathbb S^d$, where
\[
M_\varepsilon^d[x,y]=
\begin{cases}
0, & d_{\mathbb S}(x,y)\leq\varepsilon,\\
1, & d_{\mathbb S}(x,y)\geq\pi-\varepsilon,\\
\star, & \text{otherwise},
\end{cases}
\]
and $d_{\mathbb S}$ denotes spherical distance.

As explained in the introduction, \cref{thm:main-sphere} follows quickly from \cref{thm:main-cube}. We restate it for convenience.

\mainsphere*

\begin{proof}
    Let $0 < \varepsilon < \pi/2$ and $k \geq 1$ be given.
    Set $\eta = \frac{1-\cos\varepsilon}{2}$.
    Let $d$ be large enough so that every disambiguation of $H_{\eta}^{d+1}$ has VC-dimension at least $k$; this is possible by \cref{thm:main-cube}.
    Let \[\phi : Q_{d+1} \to \mathbb{S}^{d}, x \mapsto \frac{1}{\sqrt{d+1}}\left((-1)^{x_1}, \ldots, (-1)^{x_{d+1}}\right).\]
    Then, for all $x, y \in Q_{d+1}$, we have $\langle \phi(x), \phi(y)\rangle = 1-\frac{2d(x, y)}{d+1}$ so $d_{\mathbb{S}^{d}}(\phi(x), \phi(y)) = \arccos\left(1-\frac{2d(x, y)}{d+1}\right)$.
    Thus, if $d(x, y) \leq \eta (d+1)$ then $d_{\mathbb{S}^{d}}(\phi(x), \phi(y)) \leq \varepsilon$ and if $d(x, y) \geq (1-\eta) (d+1)$ then $d_{\mathbb{S}^{d}}(\phi(x), \phi(y)) \geq \pi - \varepsilon$.
    
    Any disambiguation of $M_{\varepsilon}^d$ gives a coloring $c : \mathbb{S}^{d} \times \mathbb{S}^{d} \to \{0, 1\}$.
    By the above, the coloring $\widehat{c} : Q_{d+1} \times Q_{d+1} \to \{0, 1\}, (x, y) \mapsto c(\phi(x), \phi(y))$ is an $\eta$-gap coloring.
    By \cref{thm:main-cube}, $\widehat{c}$ has VC-dimension at least $k$.
    By \cref{lem:transfer-VC-dim}, $c$ also has VC-dimension at least $k$.
    In other words, any disambiguation of $M_{\varepsilon}^d$ has VC-dimension at least $k$.
\end{proof}
 
The intermediate torus theorem (\cref{thm:exists-torus}) allowed us to choose both the side length and the dimension.
Having established the Gap-Hamming Disambiguation theorem, we can now bootstrap this result to obtain a statement uniform over all side lengths. 
The reason is that every discrete torus contains a scaled isometric copy of the hypercube with the same diameter.

\begin{corollary}\label{cor:main-torus}
    For all $0 < \varepsilon < 1/2$ and $q \in \mathbb{N}$, there exists an integer $N_0$ such that for every $N \geq N_0$ and every integer $R \geq 2$, every $\varepsilon$-gap coloring $c : (\mathbb{Z}_R)^N \times (\mathbb{Z}_R)^N \to \{0, 1\}$ has VC-dimension at least $q$.
\end{corollary}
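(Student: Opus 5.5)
The plan is to deduce \cref{cor:main-torus} directly from \cref{thm:main-cube}, by embedding the hypercube $Q_N$ into the torus $(\mathbb Z_R)^N$ as a scaled isometric copy with the same diameter. Given $\varepsilon$ and $q$, I would take $N_0$ to be the threshold from \cref{thm:main-cube} applied to $\varepsilon$ and $k=q$. Then every disambiguation of $H_\varepsilon^N$ has VC-dimension at least $q$ for all $N\geq N_0$. This choice of $N_0$ is independent of $R$, which is exactly the uniformity the corollary asks for.

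Fix $N\geq N_0$, an integer $R\geq 2$, and an $\varepsilon$-gap coloring $c$ of $(\mathbb Z_R)^N$. Let $h=\lfloor R/2\rfloor\geq 1$, and define $\psi:Q_N\to(\mathbb Z_R)^N$ by $\psi(x)_i = h\,x_i$, so each coordinate goes to $0$ or $h$. Since $d_{\mathbb Z_R}(0,h)=\min\{h,R-h\}=h$, we get coordinatewise $d_{\mathbb Z_R}(\psi(x)_i,\psi(y)_i)=h\,|x_i-y_i|$. Summing over coordinates gives
\[
d_{(\mathbb Z_R)^N}(\psi(x),\psi(y)) = h\cdot d_{Q_N}(x,y).
\]
Moreover $h\cdot\diam(Q_N)=hN=\diam((\mathbb Z_R)^N)$, so the copy is spanning.

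Next I would check that $\widehat c(x,y)=c(\psi(x),\psi(y))$ is itself an $\varepsilon$-gap coloring of $Q_N$. This is direct and loses nothing in $\varepsilon$, unlike \cref{lem:transfer-gap-coloring}, which only gives $\varepsilon/2$.
\begin{itemize}
\item If $d_{Q_N}(x,y)\leq\varepsilon N$, then the image distance is at most $\varepsilon hN=\varepsilon\diam((\mathbb Z_R)^N)$, so the color is $0$.
\item If $d_{Q_N}(x,y)\geq(1-\varepsilon)N$, then the image distance is at least $(1-\varepsilon)\diam((\mathbb Z_R)^N)$, so the color is $1$.
\end{itemize}
Hence $\widehat c$ is a disambiguation of $H_\varepsilon^N$, and by the choice of $N_0$ it has VC-dimension at least $q$. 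Finally, \cref{lem:transfer-VC-dim} with $\iota_1=\iota_2=\psi$ transfers this bound to $c$.

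There is no real obstacle here. The only points needing care are the small cases $R\in\{2,3\}$, where $h=1$ and the embedding is still isometric, and odd $R$, where $d_{\mathbb Z_R}(0,h)=h$ still holds because $R-h\geq h$. Both are covered by the computation above.
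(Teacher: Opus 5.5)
Your proposal is correct and follows the paper's proof essentially step for step. It embeds $Q_N$ into $(\mathbb{Z}_R)^N$ via $x\mapsto \lfloor R/2\rfloor x$, notes that distances scale uniformly and diameters match so the pulled-back coloring is an $\varepsilon$-gap coloring of $Q_N$, applies \cref{thm:main-cube} with a threshold independent of $R$, and transfers the bound back with \cref{lem:transfer-VC-dim}.
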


\begin{proof}
    Given $0 < \varepsilon < 1/2$ and $q \in \mathbb{N}$, let $N_0 \in \mathbb{N}$ be given by \cref{thm:main-cube} for $\varepsilon$ and $q$.
    Let $N \geq N_0$, let $R \geq 2$ and let $c : (\mathbb{Z}_R)^N \times (\mathbb{Z}_R)^N \to \{0, 1\}$ be an $\varepsilon$-gap coloring.
    Set $r = \lfloor R/2\rfloor$ and let \[\iota : Q_N \to (\mathbb{Z}_R)^N, x \mapsto (rx_1, \ldots, rx_N).\]
    Then, for all $x, y \in Q_N$, we have $d(\iota(x), \iota(y)) = rd(x, y)$.
    Furthermore, $r \cdot \diam(Q_N) = rN = \diam((\mathbb{Z}_R)^N)$ so the coloring \[\widehat{c} : Q_N \times Q_N \to \{0, 1\}, (x, y) \mapsto c(\iota(x), \iota(y))\] is an $\varepsilon$-gap coloring.
    By \cref{thm:main-cube}, $\widehat{c}$ has VC-dimension at least $q$.
    By \cref{lem:transfer-VC-dim}, $c$ also has VC-dimension at least $q$.
\end{proof}

\section{Girth Ramsey and Disambiguations}

In this section, we prove
\cref{thm:matrix-not-disambig,thm:graph-not-disambig} following the approach outlined in the proof overview.

We first present the girth Ramsey theorem and establish the localization properties needed for both constructions.
We then treat matrices and graphs separately, choosing the appropriate partial pattern within each construction.
For matrices, both uniform fillings of the pattern have large VC-dimension.
For graphs, both uniform fillings contain every graph of the prescribed order as an induced subgraph.
In each case, we apply the girth Ramsey theorem to arrange copies of this pattern so that every disambiguation completes one copy uniformly.
The localization properties ensure that the prescriptions are compatible and preserve the required local restrictions.

\subsection{The girth-Ramsey theorem}

The induced Ramsey theorem states that for every graph $F$, there exists a graph $H$ such that every $2$-coloring of the edges of $H$ contains a monochromatic induced copy of $F$.
It was proved independently by Deuber~\cite{Deuber75}, Erd\H{o}s, Hajnal, and P\'osa~\cite{EHP75}, and R\"odl~\cite{Rodl73}.

As discussed in the proof overview, our constructions require additional control over how these copies intersect and what small configurations they collectively form.
This control is needed both to avoid conflicting prescriptions and to preserve our local restrictions.
We therefore use the girth Ramsey theorem of Reiher and R\"odl~\cite{RR23}, which strengthens the induced Ramsey theorem by providing additional control over the local structure of a family of copies.
To state the form we need, we first introduce the notion of a forest of copies.

A family $\mathcal{F}$ of copies of a graph $F$ in a graph $H$ is a \emph{forest of copies of $F$} if its members can be ordered as $F_1, \ldots, F_m$ so that for every $j \geq 1$, the set $V(F_j) \cap \left(\bigcup_{i < j}V(F_i)\right)$ is either empty, a single vertex, or the two ends of an edge of $H$ belonging both to $F_j$ and some earlier $F_i$.

\begin{theorem}[Girth Ramsey {\cite{RR23}}]\label{thm:girth-ramsey}
    Let $F$ be a graph and let $r, \ell \in \mathbb{N}$.
    There exists a graph $H$ and a set $\mathcal{F}$ of induced copies of $F$ in $H$ with the following properties: \begin{itemize}
        \item Every family $\mathcal{S} \subseteq \mathcal{F}$ of size at most $\ell$ is contained in a forest $\mathcal{S'} \subseteq \mathcal{F}$ of copies of $F$.
        \item For every coloring $\chi : E(H) \to [r]$, some member of $\mathcal{F}$ is monochromatic.
    \end{itemize}
    Moreover, if $F$ is equipped with a bipartition then $H$ may be chosen bipartite and the members of $\mathcal{F}$ may be required to respect the two sides.
\end{theorem}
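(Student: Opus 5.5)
Our plan is to deduce \cref{thm:girth-ramsey} from the main theorem of Reiher and R\"odl~\cite{RR23}, not to reprove their Ramsey argument from scratch. Their theorem concerns \emph{systems of copies}: pairs $(H,\mathcal F)$ with $\mathcal F$ a family of induced copies of $F$ in $H$. For every $F$, $r$ and $g$ it provides a system of \emph{girth} greater than $g$ such that every $r$-coloring of $E(H)$ makes some member of $\mathcal F$ monochromatic. The work therefore splits into two steps. First, we match their girth notion with the forest condition stated here. Second, we extract the bipartite refinement from the structure of their construction.

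\textbf{Step 1: the forest condition.} In \cite{RR23}, a system has girth greater than $g$ if it contains no ``Berge-type'' cycle of copies of length at most $g$. In such a cycle, consecutive copies meet in a vertex or an edge, and the cycle does not collapse into a tree-like gluing. The form of this that we use is the forest lemma of \cite{RR23}: in a system of girth greater than $g$, every subfamily of at most $g$ copies is contained in a forest of copies in the sense defined above. Here it may be necessary to add a bounded number of further copies from $\mathcal F$, which is why the statement says ``contained in'' a forest $\mathcal S'\subseteq\mathcal F$. We would apply their theorem with $g=\ell$, or with $g$ a suitable function of $\ell$ and $|V(F)|$ if their forest lemma loses a factor. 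We would then check that their gluing operations coincide with ours: a copy $F_j$ may meet the union of earlier copies in nothing, in a single vertex, or in the two ends of an edge that lies in both $F_j$ and some earlier copy. Checking this is a matter of reading definitions, but it is where care is needed.

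\textbf{Step 2: the bipartite refinement.} Here we use the fact that the Reiher--R\"odl construction, like the Ne\v{s}et\v{r}il--R\"odl partite construction, is \emph{$F$-partite}. This means $H$ comes equipped with a map $\pi:V(H)\to V(F)$ with two properties. First, every edge of $H$ is mapped to an edge of $F$, so $\pi$ is a homomorphism. Second, every member of $\mathcal F$ is mapped bijectively, and hence isomorphically, onto $F$. If their statement is not phrased in this form, we would run their amalgamation within the partite category, which is how such constructions are built anyway. Given such a $\pi$ and a bipartition $V(F)=A\cup B$, we set $V(H)=\pi^{-1}(A)\cup\pi^{-1}(B)$. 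Every edge of $H$ projects to an edge of $F$, which joins $A$ and $B$, so $H$ is bipartite with respect to this partition. Every copy in $\mathcal F$ projects isomorphically onto $F$, so it places the vertices of $A$ on the first side and those of $B$ on the second, and therefore respects the two sides.

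The main obstacle is Step 1, namely reconciling the precise notion of girth in \cite{RR23} with the forest-of-copies formulation used here, including the need to enlarge $\mathcal S$ to $\mathcal S'$. We would begin from their explicit statement that small subsystems of high-girth systems are forests, and adjust $g$ accordingly. The Ramsey property itself is taken verbatim from their theorem.
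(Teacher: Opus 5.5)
Your Step~1 is fine in substance but more roundabout than needed. The paper takes the first assertion directly from the graph case of \cite[Theorem~1.7]{RR23}, which is already phrased as ``small subfamilies are contained in forests of copies''. So no girth-to-forest conversion and no adjustment of $g$ is required.

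The genuine gap is in Step~2: the structure you postulate cannot exist. Suppose $\pi\colon V(H)\to V(F)$ is a homomorphism mapping every member of $\mathcal F$ bijectively onto $V(F)$, and suppose $F$ has at least two edges and $r\ge 2$. Color $E(F)$ non-monochromatically, and give each edge $uv$ of $H$ the color of $\pi(u)\pi(v)$. For $Q\in\mathcal F$, the restriction $\pi|_{V(Q)}$ is injective, so it maps $E(Q)$ injectively, hence bijectively, onto $E(F)$. Thus $Q$ is not monochromatic, contradicting the Ramsey property. For instance, with $F=P_3$ you simply color the edges over the two edges of $F$ differently. The Ne\v{s}et\v{r}il--R\"odl partite construction is partite over the \emph{initial Ramsey host} on which picture zero is built (e.g.\ a large complete graph), not over $F$. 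Your pullback therefore has the wrong target.

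What the bipartite refinement actually needs is only a two-class partition of $V(H)$, with every edge of $H$ between the classes and every member of $\mathcal F$ respecting it. The paper obtains this as follows:
\begin{itemize}
\item It regards the sides as an $f$-partition with $f=(1,1)$.
\item In the proof of \cite[Theorem~13.12]{RR23}, it chooses the initial Ramsey construction in the partite form supplied by \cite[Proposition~3.8]{RR23}, so the starting point already respects the two sides.
\item It then invokes \cite[Section~3.6]{RR23} to see that the subsequent girth-Ramsey construction preserves the prescribed classes and the side-respecting copies.
\end{itemize}
Your fallback of ``running the amalgamation within the partite category'' points in this direction. As written, however, it aims at an impossible partite structure. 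Moreover, preservation of the two classes through the later stages of the girth-Ramsey construction is exactly what has to be verified, not assumed.
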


The first assertion is the graph case of the girth-Ramsey theorem \cite[Theorem 1.7]{RR23}. 
For the bipartite refinement, regard the prescribed sides as an $f$-partition with $f=(1,1)$.
In the proof of \cite[Theorem 13.12]{RR23}, choose the initial Ramsey construction in the partite form supplied by \cite[Proposition 3.8]{RR23}. 
\cite[Section 3.6]{RR23} shows that the subsequent construction preserves the prescribed classes and the side-respecting copies.

The formulation “is contained in a forest of copies”, rather than “is itself a forest of copies”, is necessary because the property of being a forest of copies is not inherited by subfamilies, see \cite{RR23}.

\subsection{Localization in forests of copies}

The localization arguments below use the fact that a forest of copies is assembled along sets of at most two vertices.
We make this structure explicit by replacing each copy by a clique and describing the resulting graph by a tree-decomposition whose bags are precisely the vertex sets of the copies.
For this, we first need some definitions.

A \emph{tree-decomposition} of a graph $G$ is a pair $(T, \mathcal{V})$ where $T$ is a tree and $\mathcal{V} = (V_t)_{t \in V(T)}$ is a family of subsets of $V(G)$, called \emph{bags}, with the following properties: \begin{itemize}
    \item For every $v \in V(G)$, the set $T_v = \{t \in V(T) : v \in V_t\}$ induces a nonempty subtree of $T$, and
    \item For every edge $uv \in E(G)$, there exists a node $t \in V(T)$ such that $\{u, v\} \subseteq V_t$.
\end{itemize}

Let $(T, \mathcal{V})$ be a tree-decomposition of a graph $G$.
Given an edge $e = tt'$ of $T$, the set $V_t \cap V_{t'}$ is the \emph{adhesion set} at $e$.
The size of a largest adhesion set is the \emph{adhesion} of $(T, \mathcal{V})$.
We will later use the following simple property of tree-decompositions: Let $e$ be an edge of $T$ and let $T_1, T_2$ be the two connected components of $T-e$. Let $V_1 = \bigcup_{t \in V(T_1)}V_t$ and $V_2 = \bigcup_{t \in V(T_2)}V_t$. If $w \in V(G)$ has a neighbor in $V_1 \setminus V_2$ and a neighbor in $V_2 \setminus V_1$ then $w$ belongs to the adhesion set at $e$. 

\medskip

If $\mathcal{F}$ is a nonempty forest of copies of a graph $F$ in a graph $H$, we define the graph $\Gamma(\mathcal{F})$ on vertex set $\bigcup_{Q \in \mathcal{F}}V(Q)$, where two distinct vertices $u, v \in V(\Gamma(\mathcal{F}))$ are adjacent if and only if they belong to a common member of $\mathcal{F}$.

\begin{observation}
    For every nonempty forest of copies of a graph $F$ in a graph $H$, the graph $\Gamma(\mathcal{F})$ has a tree-decomposition of adhesion at most $2$, where every adhesion set of size $2$ induces an edge of $H$, and whose bags are precisely the sets $V(Q)$ for $Q\in\mathcal F$. \\
    We call it the \emph{standard tree-decomposition} of $\Gamma(\mathcal{F})$.
\end{observation}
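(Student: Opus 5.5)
We build the decomposition inductively along the ordering $F_1,\ldots,F_m$ witnessing that $\mathcal F$ is a forest of copies. For each $j\le m$, let $\mathcal F_j=\{F_1,\ldots,F_j\}$. We claim by induction on $j$ that $\Gamma(\mathcal F_j)$ has a tree-decomposition $(T_j,\mathcal V_j)$ with the required properties, whose nodes are $t_1,\ldots,t_j$ and whose bags are $V_{t_i}=V(F_i)$. Since the copies are listed as a family, repeated vertex sets cause no trouble: each copy gets its own node. For $j=1$, the graph $\Gamma(\mathcal F_1)$ is the complete graph on $V(F_1)$, and the single bag $V(F_1)$ gives a valid decomposition with no adhesion sets.

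For the inductive step, let $A_j=V(F_j)\cap\bigcup_{i<j}V(F_i)$. We first note that $\Gamma(\mathcal F_j)$ is $\Gamma(\mathcal F_{j-1})$ plus a clique on $V(F_j)$. Its new vertices are $V(F_j)\setminus A_j$, which appear in no earlier bag.

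We choose the attachment node for $t_j$ according to $A_j$. If $A_j$ is a single vertex $v$, we pick any $i<j$ with $v\in V(F_i)$. If $A_j=\{u,v\}$, the forest condition gives an edge $uv$ of $H$ lying in both $F_j$ and some earlier $F_i$, and we take that $i$, so that $\{u,v\}\subseteq V(F_i)$. If $A_j$ is empty, we attach $t_j$ to an arbitrary earlier node. In each case we add $t_j$ as a leaf adjacent to $t_i$, with bag $V(F_j)$. The new adhesion set is $V(F_i)\cap V(F_j)$, which is contained in $A_j$ and, by the choice of $i$, contains it, so it equals $A_j$. It therefore has size at most $2$, and when it has size $2$ it is exactly $\{u,v\}$, an edge of $H$.

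It remains to check the two axioms. Every edge of $\Gamma(\mathcal F_j)$ lies in some $V(F_i)$ by definition, so the edge-covering axiom holds. For connectivity, consider a vertex $w$. If $w\notin V(F_j)$, its set of nodes is unchanged. If $w\in V(F_j)\setminus A_j$, it appears only in the new leaf. If $w\in A_j$, its previous node set is a subtree containing $t_i$, since $w\in A_j\subseteq V(F_i)$, so adding the adjacent leaf $t_j$ keeps it connected.

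The only delicate point is this subtree condition for vertices of $A_j$. It is exactly why, in the two-vertex case, the attachment node must be a single earlier copy containing both vertices, which is what the edge clause of the forest definition guarantees. Taking $j=m$ gives the standard tree-decomposition.
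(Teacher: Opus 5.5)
Your proof is correct and follows the paper's own argument. Like the paper, you attach each $F_j$ as a leaf to an earlier copy whose vertex set contains $V(F_j)\cap\bigcup_{i<j}V(F_i)$; such a copy exists by the forest condition, in particular by its edge clause. You are more explicit than the paper in checking that each adhesion set equals this intersection and that the subtree condition is preserved, but the route is the same.
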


\begin{proof}
    Fix a forest ordering $F_1, \ldots, F_m$ of $\mathcal{F}$.
    We build a tree $T$ whose vertices are the members of $\mathcal{F}$, as follows.
    Start from the tree with the unique vertex $F_1$.
    Iteratively, for every $2 \leq j \leq m$ add the vertex $F_j$ to $T$ and attach it to an arbitrary vertex $F_k$ with $k < j$ such that $V(F_j) \cap (\bigcup_{i < j}V(F_i)) \subseteq V(F_k)$; such a vertex exists since $\mathcal{F}$ is a forest of copies of $F$.
    Let $\mathcal{V} = (V(F_j))_{F_j \in \mathcal{F}}$.
    Since $\mathcal{F}$ is a forest of copies of $F$ in $H$, it follows that $(T, \mathcal{V})$ is a tree-decomposition of $\Gamma(\mathcal{F})$ of adhesion at most $2$, and every adhesion set of size $2$ induces an edge of $H$.
    Indeed, when $F_j$ is attached to $F_k$, every vertex of $F_j$ occurring in an earlier copy also belongs to $F_k$. 
    Thus the bags containing any fixed vertex remain connected. 
    Every edge of $\Gamma(\mathcal F)$ lies in a bag by definition.
\end{proof}

The intersection property below will ensure that prescriptions on different copies cannot conflict.
The remaining statements will allow us to verify our local restrictions by reducing potential violations to configurations within individual copies.

\begin{lemma}[Localization in forests of copies]\label{lem:properties-forest}
    Let $\mathcal{F}$ be a forest of copies of a graph $F$ in a graph $H$.
    Then the following statements hold. \begin{enumerate}[label=\alph*)]
        \item\label{itm:forest-no-nonedge} Any two distinct members $Q,Q'$ intersect in at most one vertex or in the two ends of an edge of $H$.
        \item\label{itm:clique-common-member} Every clique in $\Gamma(\mathcal{F})$ is contained in a common member of $\mathcal{F}$.
        \item\label{itm:3-nbrs-common-member} If two vertices $u$ and $v$ have at least three common neighbors in $\Gamma(\mathcal{F})$ then they occur in a common member of $\mathcal{F}$. Moreover, every common neighbor $w$ of $u$ and $v$ occurs together with both $u$ and $v$ in a member of $\mathcal{F}$.
        \item\label{itm:C4-common-member} If $u_1u_2u_3u_4$ is a $4$-cycle in $\Gamma(\mathcal{F})$ and $u_1u_3$ and $u_2u_4$ are non-edges of $H$ then some member of $\mathcal{F}$ contains $u_1, u_2, u_3$ and $u_4$.
    \end{enumerate}
\end{lemma}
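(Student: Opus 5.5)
All four parts are proved through the standard tree-decomposition $(T,\mathcal V)$ of $\Gamma(\mathcal F)$. Its bags are the sets $V(Q)$, its adhesion is at most $2$, and every adhesion set of size $2$ induces an edge of $H$.

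\emph{Part a).} Let $Q=F_j$ and $Q'=F_i$ with $i<j$ in the forest ordering. Then $V(Q)\cap V(Q')\subseteq V(F_j)\cap\bigcup_{i'<j}V(F_{i'})$. By definition this set is empty, a single vertex, or the two ends of an edge of $H$ lying in both $F_j$ and an earlier copy. A subset of such a set is again empty, a vertex, or those two ends, which gives the claim.

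\emph{Part b).} This is the Helly property of subtrees. For each vertex $v$ of a clique $K$, the node set $T_v$ is a subtree. Any two of these subtrees intersect, because adjacent vertices lie in a common bag. Pairwise intersecting subtrees of a tree have a common node, so some bag $V(Q)$ contains all of $K$.

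\emph{Parts c) and d).} Both use the separation property recalled before the observation.

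For part d), suppose no bag contains all of $u_1,\dots,u_4$. By b), each edge $u_iu_{i+1}$ lies in some bag. I would pick a tree edge $e$ separating the bags containing $\{u_1,u_2\}$ from those containing $\{u_3,u_4\}$, or else handle the configuration via the "diagonal" pairs. The cleaner route is to take bags $t_{12},t_{23},t_{34},t_{41}$ containing the four cycle edges. If these cannot all be chosen equal, some tree edge $e$ separates them into two nonempty groups. Each $u_i$ lies in two of these bags, and $T_{u_i}$ is connected. So every $u_i$ whose two bags lie on different sides of $e$ belongs to the adhesion set at $e$.

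The cyclic structure forces at least two of the $u_i$ to cross $e$. If the crossing vertices are opposite, say $u_1,u_3$, then $\{u_1,u_3\}$ is an adhesion set of size $2$ and hence an edge of $H$, a contradiction. The same holds for $u_2,u_4$. If they are consecutive, say $u_1,u_2$, then the adhesion set is exactly $\{u_1,u_2\}$. In that case $u_3,u_4$ each lie entirely on one side, and they are on the same side because $u_3u_4$ is covered by a bag. The edges $u_2u_3$ and $u_4u_1$ then lie on that side too, so $t_{12}$ can be re-chosen there, lowering the number of separating edges. Iterating this (induction on the size of the minimal subtree spanning the chosen bags) yields a single bag.

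For part c), let $w_1,w_2,w_3$ be common neighbours of $u$ and $v$, and suppose $u,v$ share no bag. Then some tree edge $e$ separates $T_u$ from $T_v$. Each $w_i$ is adjacent to $u$ in some bag and to $v$ in another, so $T_{w_i}$ crosses $e$ and $w_i$ lies in the adhesion set at $e$. The three $w_i$ would then lie in an adhesion set of size at most $2$, a contradiction. The proof of the ``moreover'' clause runs the same argument on the triangle $uvw$: once $u$ and $v$ are adjacent in $\Gamma(\mathcal F)$, $\{u,v,w\}$ is a clique, and b) gives a bag containing it.

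The main obstacle is part d). The rerouting argument needs care: the separating edge must be chosen minimally so that the consecutive-crossing case genuinely shrinks the configuration, and a degenerate case where the four bags coincide pairwise must be excluded.
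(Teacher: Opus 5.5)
Your parts a)–c) match the paper's proof. That is: the forest ordering for a), Helly for b), and for c) a separating edge whose adhesion would have to contain all three common neighbours, with the ``moreover'' clause again via Helly.

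For d) you take a different and more laborious route, and it is correct. Take witness bags $t_{12},t_{23},t_{34},t_{41}$ and any edge $e$ of their minimal spanning subtree. The $u_i$ whose two witness bags lie on opposite sides of $e$ form an even number, at least $2$ and at most $2$, so exactly two. If they are opposite, they form an adhesion $\{u_1,u_3\}$ or $\{u_2,u_4\}$ that is not an $H$-edge, a contradiction. If they are consecutive, say $u_1,u_2$, then $t_{12}$ is isolated on its side. It can be moved to the endpoint of $e$ on the other side, which strictly shrinks the spanning subtree.

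You should state why that move is allowed: that endpoint of $e$ contains the adhesion set $\{u_1,u_2\}$. Your remark that ``the edges $u_2u_3$ and $u_4u_1$ lie on that side'' is not the reason.

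The paper instead reruns the argument of c) on the diagonals. If $T_{u_1}\cap T_{u_3}=\emptyset$, a separating edge has both common neighbours $u_2,u_4$ in its adhesion, which is then $\{u_2,u_4\}$ and hence an $H$-edge, a contradiction. Symmetrically $T_{u_2}\cap T_{u_4}\neq\emptyset$. So the four subtrees pairwise intersect, and Helly gives a common bag. This is exactly the ``diagonal pairs'' option you mention and then set aside.

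The paper's version needs no witness bags, no crossing count and no induction. All four parts then become instances of the same two tools: adhesion sets separate, and subtrees have the Helly property. Your route is more hands-on but buys nothing extra here.

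Your closing worries are unnecessary. Any edge of the minimal spanning subtree can serve as $e$, so no minimal choice is needed. Coinciding witness bags are harmless: if, say, $t_{12}=t_{34}$, that bag already contains all four vertices.
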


\begin{proof}
    \ref{itm:forest-no-nonedge} Let $F_i,F_j$ be distinct members of $\mathcal{F}$ with $i<j$ in a forest ordering. Then
        \[V(F_j)\cap V(F_i) \subseteq V(F_j)\cap\left(\bigcup_{h<j}V(F_h)\right).\]
    Since $\mathcal{F}$ is a forest of copies of $F$, the right-hand side is empty, a singleton, or the endpoints of an edge of $H$. 
    The assertion follows.

    % \ref{itm:forest-no-nonedge} Let $F_1 \neq F_2 \in \mathcal{F}$ and suppose that $u \neq v \in V(H)$ are nonadjacent in $H$ and $u, v \in V(F_1) \cap V(F_2)$.
    % Without loss of generality, we may assume that $F_2$ comes after $F_1$ in the forest ordering of $\mathcal{F}$.
    % When $F_2$ is added to the forest, the vertices $u$ and $v$ already belong to the union of the preceding copies, and they are nonadjacent in $H$, contradicting that $\mathcal{F}$ is a forest of copies of $F$ in $H$.
    
    \ref{itm:clique-common-member} Let $(T, \mathcal{V})$ be the standard tree-decomposition of $\Gamma(\mathcal{F})$.
    For every $x \in V(\Gamma(\mathcal{F}))$, the set $T_x = \{F_j \in \mathcal{F} : x \in V(F_j)\}$ induces a nonempty connected subtree of $T$.
    \cref{itm:clique-common-member} then follows from the Helly property of subtrees of a tree.
    
    \ref{itm:3-nbrs-common-member} Let $(T, \mathcal{V})$ be the standard tree-decomposition of $\Gamma(\mathcal{F})$.
    Consider $u, v \in V(\Gamma(\mathcal{F}))$ with at least three common neighbors.
    Suppose for a contradiction that the sets $T_u$ and $T_v$ are disjoint.
    Since each of them is connected, there is an edge $e$ of $T$ that separates them.
    Then, every common neighbor of $u$ and $v$ in $\Gamma(\mathcal{F})$ must lie in the adhesion set at $e$.
    This is a contradiction since the tree-decomposition has adhesion at most $2$ whereas $u$ and $v$ have at least three common neighbors in $\Gamma(\mathcal{F})$.
    Thus, $T_u \cap T_v \neq \emptyset$.
    Let $F_j \in T_u \cap T_v$. 
    By definition, we have $u, v \in V(F_j)$, so $u$ and $v$ occur in a common member of $\mathcal{F}$.
    Given a common neighbor $w$ of $u$ and $v$, the sets $T_u, T_v$ and $T_w$ intersect pairwise, so the Helly property of subtrees of a tree gives $F_j \in \mathcal{F}$ such that $F_j \in T_u \cap T_v \cap T_w$.
    Again, this means that $F_j$ contains all three of $u, v$ and $w$.
    
    \ref{itm:C4-common-member} Consider again the tree-decomposition $(T, \mathcal{V})$ of $\Gamma(\mathcal{F})$ defined above.
    Let $u_1u_2u_3u_4$ be a $4$-cycle in $\Gamma(\mathcal{F})$ such that $u_1u_3$ and $u_2u_4$ are non-edges of $H$.
    Suppose for a contradiction that the sets $T_{u_1}$ and $T_{u_3}$ are disjoint.
    Since each of them is connected, there is an edge $e$ of $T$ that separates them.
    Then, every common neighbor of $u_1$ and $u_3$ in $\Gamma(\mathcal{F})$ must lie in the adhesion corresponding to $e$, so this adhesion contains $u_2$ and $u_4$.
    Since every adhesion has size at most $2$, the adhesion at $e$ is $\{u_2, u_4\}$.
    This is a contradiction since $u_2u_4 \notin E(H)$ whereas every adhesion of size $2$ induces an edge of $H$.
    Thus, $T_{u_1} \cap T_{u_3} \neq \emptyset$.
    Similarly, we have $T_{u_2} \cap T_{u_4} \neq \emptyset$.
    Thus, the subtrees $T_{u_1}, T_{u_2}, T_{u_3}$ and $T_{u_4}$ intersect pairwise.
    By the Helly property of subtrees, they all contain some node $F_j \in V(T)$.
    Again, this means that $F_j$ contains $u_1, u_2, u_3$ and $u_4$.
\end{proof}

\subsection{The matrix construction}

In this section, we prove \cref{thm:matrix-not-disambig}.
We start by constructing a very tame partial matrix $U$ whose two uniform disambiguations have VC-dimension at least $k$.

\begin{lemma}\label{lem:universal-matrix-simple}
    For every $k \in \mathbb{N}$, there exists a partial matrix $U$ with the following properties: \begin{itemize}
        \item No row or column of $U$ contains both a $0$ entry and a $1$ entry.
        \item The matrix obtained by casting all the $\star$ entries of $U$ into $0$ entries (resp. $1$ entries) has VC-dimension at least $k$.
    \end{itemize}
\end{lemma}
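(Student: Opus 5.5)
The plan is to build $U$ as a block-diagonal matrix with two blocks. One block is shattered when the $\star$ entries are filled with $1$, the other when they are filled with $0$. Every entry outside the two diagonal blocks is left unspecified. Within each block only one specified value appears, and that is what gives the first property.

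Fix a set of $k$ column labels $[k]$. The first block $A$ has rows $r_S$ for $S\subseteq[k]$ and columns $c_1,\dots,c_k$, with
\[
A[r_S,c_i]=\begin{cases}\star,& i\in S,\\ 0,& i\notin S.\end{cases}
\]
The second block $B$ has rows $r'_S$ for $S\subseteq[k]$ and columns $c'_1,\dots,c'_k$, with
\[
B[r'_S,c'_i]=\begin{cases}1,& i\in S,\\ \star,& i\notin S.\end{cases}
\]
Let $U$ have row set $\{r_S\}\cup\{r'_S\}$ and column set $\{c_i\}\cup\{c'_i\}$. It agrees with $A$ and $B$ on the diagonal blocks and has $\star$ in every entry pairing a row of one block with a column of the other.

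The first property is checked by inspection. A row $r_S$ or a column $c_i$ has specified entries only inside $A$, and all of these are $0$. A row $r'_S$ or a column $c'_i$ has specified entries only inside $B$, and all of these are $1$.

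For the second property, suppose first that every $\star$ is replaced by $1$. Then $r_S$ has entry $1$ on $c_i$ exactly when $i\in S$, and entry $0$ otherwise. So the rows $r_S$ realize every pattern on $\{c_1,\dots,c_k\}$, and this set is shattered. Suppose instead that every $\star$ is replaced by $0$. Then $r'_S$ has entry $1$ on $c'_i$ exactly when $i\in S$, so $\{c'_1,\dots,c'_k\}$ is shattered. In both cases the VC-dimension is at least $k$.

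No step is a real obstacle. The one point needing care is the choice of the off-diagonal entries. Filling them with $\star$ keeps each row and column of one specified value, and it does not interfere with the shattering, because the witnessing rows and the shattered columns always lie in the same block.
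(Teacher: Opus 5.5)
Your proof is correct and is the paper's construction: your block $B$ is the paper's $S^+$ (the shattering matrix $S$ with its $0$ entries turned into $\star$), your block $A$ is its $S^-$ (with its $1$ entries turned into $\star$), and the off-diagonal blocks are all $\star$. The only difference is the order of the two diagonal blocks, which does not matter, and your verification of both properties matches the paper's.
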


\begin{proof}
    Let $S$ be the $2^k \times k$ matrix whose rows are indexed by the subsets $A \subseteq [k]$ and whose columns are indexed by the elements $i \in [k]$, such that $S[A, i] = 1$ if $i \in A$ and $S[A, i] = 0$ if $i \notin A$.
    The set of columns of $S$ is shattered, so $S$ has VC-dimension at least $k$.
    Let $S^+$ (resp. $S^-$) be the partial matrix obtained from $S$ by turning all $0$ entries (resp. $1$ entries) of $S$ into $\star$ entries.
    Let \[ U =
    \begin{bmatrix}
        S^+ & \star \\
        \star & S^-
    \end{bmatrix}.\]
    There are no $0$ entries in $S^+$ and no $1$ entries in $S^-$ so no row or column of $U$ contains both a $0$ entry and a $1$ entry.
    If all the $\star$ entries of $U$ are cast into $0$ entries (resp. $1$ entries), the resulting matrix contains $S$ as a submatrix, hence has VC-dimension at least $k$.
\end{proof}

We can now prove \cref{thm:matrix-not-disambig}, which we restate for convenience.

\mainmatrix*

\begin{proof}
    Let $k \in \mathbb{N}$ be given, and let $U$ be a partial matrix such that no row or column of $U$ contains both a $0$ entry and a $1$ entry, and such that for every $b \in \{0, 1\}$ the matrix $U^{(b)}$ obtained by casting all the $\star$ entries of $U$ into $b$ entries has VC-dimension at least $k$; such a matrix exists by \cref{lem:universal-matrix-simple}.
    
    Denote by $\mathcal{R}$ the set of rows of $U$ and by $\mathcal{C}$ the set of columns of $U$.
    Let $F$ be the bipartite graph with bipartition $(\mathcal{R}, \mathcal{C})$ such that $r \in \mathcal{R}$ is adjacent to $c \in \mathcal{C}$ if and only if $U[r, c] = \star$.
    Applying Girth Ramsey to the bipartite graph $F$ with prescribed bipartition $(\mathcal{R}, \mathcal{C})$, $r = 2$ and $\ell = 4$, we obtain a bipartite graph $H$ with prescribed bipartition $(\widehat{\mathcal{R}}, \widehat{\mathcal{C}})$ and a set $\mathcal{F}$ of induced copies of $F$ in $H$ with the following properties: \begin{itemize}
        \item Every family $\mathcal{S} \subseteq \mathcal{F}$ of size at most $4$ is contained in a forest of copies of $F$.
        \item For every coloring $\chi : E(H) \to \{0, 1\}$, some member of $\mathcal{F}$ is monochromatic.
    \end{itemize}
    For every $Q \in \mathcal{F}$, fix an isomorphism $\phi_Q : V(F) \to V(Q)$ that maps $\mathcal{R}$ inside $\widehat{\mathcal{R}}$ and $\mathcal{C}$ inside $\widehat{\mathcal{C}}$; such an isomorphism exists by the `Moreover' part of \cref{thm:girth-ramsey}.
    
    Let $\widehat{u}\widehat{v} \notin E(H)$. 
    We show that there is at most one copy $Q \in \mathcal{F}$ that contains both $\widehat{u}$ and $\widehat{v}$.
    Suppose for a contradiction that there are two distinct such copies $Q, Q' \in \mathcal{F}$.
    Since $\ell \geq 2$, there is a forest $\mathcal{F}' \subseteq \mathcal{F}$ of copies of $F$ that contains both $Q$ and $Q'$.
    By \cref{lem:properties-forest}~\ref{itm:forest-no-nonedge} applied to $\mathcal{F'}$, $Q$ and $Q'$ cannot share vertices that are nonadjacent in $H$, which contradicts that they share $\widehat{u}$ and $\widehat{v}$.
    
    We now define a partial matrix $M$ with row set $\widehat{\mathcal{R}}$ and column set $\widehat{\mathcal{C}}$, as follows. 
    For $\widehat{r} \in \widehat{\mathcal{R}}$ and $\widehat{c} \in \widehat{\mathcal{C}}$, set
    \[M[\widehat{r}, \widehat{c}] = \begin{cases}
      \star &\text{if $\widehat{r}\widehat{c} \in E(H)$ or no $Q \in \mathcal{F}$ contains both $\widehat{r}$ and $\widehat{c}$,} \\
      U[\phi_Q^{-1}(\widehat{r}), \phi_Q^{-1}(\widehat{c})] &\text{if $\widehat{r}\widehat{c} \notin E(H)$ and $Q \in \mathcal{F}$ contains both $\widehat{r}$ and $\widehat{c}$.}
    \end{cases}\]
    This matrix is well-defined by the previous observation and since $\phi_Q$ maps $\mathcal{R}$ inside $\widehat{\mathcal{R}}$ and $\mathcal{C}$ inside $\widehat{\mathcal{C}}$.
    Furthermore, for every $Q \in \mathcal{F}$, we have $M[\phi_Q(\mathcal{R}), \phi_Q(\mathcal{C})] = U$, if the columns and the rows of $M[\phi_Q(\mathcal{R}), \phi_Q(\mathcal{C})]$ are ordered according to the preimage under $\phi_Q$. This follows from the definition and from the fact that edges of $H$ correspond to $\star$-entries in $M$, and also correspond to edges of $F$, hence $\star$-entries of $U$.
    Thus, every $Q \in \mathcal{F}$ is also a copy of $U$ in $M$.
    
    We now show that $M$ has the desired properties.
    First, let $M'$ be a disambiguation of $M$.
    Then, $M'$ naturally corresponds to a $2$-coloring $\chi$ of the $\star$ entries of $M$ with colors $0$ and $1$.
    Recall that $H$ is bipartite with bipartition $(\widehat{\mathcal{R}}, \widehat{\mathcal{C}})$, so every edge of $H$ corresponds to a $\star$ entry of $M$ by definition of $M$ so $\chi$ restricts to a coloring $E(H) \to \{0, 1\}$, which we still call $\chi$.
    By construction of $H$, some member $Q \in \mathcal{F}$ is monochromatic for $\chi$, with color $b \in \{0, 1\}$.
    We saw that $M[\phi_Q(\mathcal{R}),\phi_Q(\mathcal{C})]$ is isomorphic to $U$.
    Since $Q \in \mathcal{F}$ is monochromatic for $\chi$, this means that in the copy $M[\phi_Q(\mathcal{R}),\phi_Q(\mathcal{C})]$ of $U$ in $M$, all the $\star$ entries were cast to $b$.
    Thus, $M'[\phi_Q(\mathcal{R}),\phi_Q(\mathcal{C})]$ is isomorphic (up to a permutation of the rows and of the columns) to $U^{(b)}$.
    By the construction of $U$, $U^{(b)}$ has VC-dimension at least $k$.
    Importantly, this holds regardless of the permutation of the rows and of the columns.
    
    Finally, we argue that every pure $2 \times 2$ submatrix of $M$ is constant.
    Consider distinct rows $r_1, r_2 \in \widehat{\mathcal{R}}$ and distinct columns $c_1, c_2 \in \widehat{\mathcal{C}}$ such that the submatrix $M[\{r_1, r_2\}, \{c_1, c_2\}]$ is pure.
    Each of the four corresponding entries is witnessed by a member $Q$ of $\mathcal{F}$.
    By assumption, since there are at most $4$ such members, there is a forest $\mathcal{F}' \subseteq \mathcal{F}$ of copies of $F$ in $H$ that contains all of them.
    Then, $r_1, c_1, r_2, c_2$ form a $4$-cycle in $\Gamma(\mathcal{F}')$.
    Furthermore, $r_1r_2$ and $c_1c_2$ are non-edges of $H$ since $r_1, r_2 \in \widehat{\mathcal{R}}$ and $c_1, c_2 \in \widehat{\mathcal{C}}$ and $H$ is bipartite with bipartition $(\widehat{\mathcal{R}}, \widehat{\mathcal{C}})$.
    By \cref{lem:properties-forest}~\ref{itm:C4-common-member}, some member $Q$ of $\mathcal{F}'$ contains $r_1, r_2, c_1$ and $c_2$.
    By construction of $M$, we have $M[\{r_1, r_2\}, \{c_1, c_2\}] = U[\phi_Q^{-1}(\{r_1, r_2\}), \phi_Q^{-1}(\{c_1, c_2\})]$.
    Therefore, the matrix $U[\phi_Q^{-1}(\{r_1, r_2\}), \phi_Q^{-1}(\{c_1, c_2\})]$ is pure.
    However, no row or column of $U$ contains both a $0$ entry and a $1$ entry, so the matrix $U[\phi_Q^{-1}(\{r_1, r_2\}), \phi_Q^{-1}(\{c_1, c_2\})]$ is constant.
    Thus, $M[\{r_1, r_2\}, \{c_1, c_2\}]$ is constant too.
\end{proof}

\subsection{The graph construction}

In this section, we prove \cref{thm:graph-not-disambig}.
The proof closely follows the proof of \cref{thm:matrix-not-disambig}.

A \emph{partial graph} $G$ is a pair $(V, \mathcal{E})$ where $\mathcal{E}$ is a partition of $\binom{V}{2}$ into three (potentially empty) sets $E_1$, $E_\star$ and $E_0$.
Given a partial graph $G$, we denote the corresponding sets by $E_1(G), E_{\star}(G)$ and $E_0(G)$.
We refer to the elements of $E_1(G)$ as the \emph{$1$-edges}, or simply as the \emph{edges}, to the elements of $E_{\star}(G)$ as the \emph{$\star$-edges}, and to the elements of $E_0(G)$ as the \emph{$0$-edges}, or simply the \emph{non-edges}.
For every $v \in V(G)$, we set $N_1(v) = \{u \in V(G) : uv \in E_1(G)\}$, $N_{\star}(v) = \{v\} \cup \{u \in V(G) : uv \in E_\star(G)\}$ and $N_0(v) = \{u \in V(G) : uv \in E_0(G)\}$.
Each graph $G = (V, E)$ can be viewed as a partial graph by setting $E_1(G) = E(G)$, $E_\star(G) = \emptyset$ and $E_0(G) = \binom{V(G)}{2} \setminus E_1(G)$.
Recall that the VC-dimension of a partial graph is that of its adjacency matrix (with $\star$ on the diagonal).

A \emph{disambiguation} of a partial graph $G$ is a graph $G'$ such that $V(G') = V(G)$, $E_1(G) \subseteq E(G')$ and $E_0(G) \cap E(G') = \emptyset$.
In other words, $G'$ is a disambiguation of $G$ when $G'$ can be obtained from $G$ by specifying for each $\star$-edge of $G$ whether it should become an edge or a non-edge in $G'$.

Again, we start by constructing a very tame partial graph $U$ whose two uniform disambiguations contain all $k$-vertex graphs as induced subgraphs.

\begin{lemma}\label{lem:universal-graph-simple}
    For every $k \in \mathbb{N}$, there exists a partial graph $U$ with the following properties: \begin{itemize}
        \item No vertex of $U$ is incident to both an edge and a non-edge.
        \item The graph obtained by casting all the $\star$s in $U$ into edges (resp. non-edges) contains all $k$-vertex graphs as induced subgraphs.
    \end{itemize}
\end{lemma}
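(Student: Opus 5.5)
The plan mirrors \cref{lem:universal-matrix-simple}: we build $U$ as a union of two pieces, one carrying only $1$-edges and $\star$-edges, the other carrying only $0$-edges and $\star$-edges, with every pair between the two pieces a $\star$-edge. Let $G_1,\ldots,G_m$ enumerate all graphs on vertex set $[k]$, and let $K$ be their disjoint union, so $K$ contains every $k$-vertex graph as an induced subgraph.

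We define two partial graphs on $V(K)$. In $U^+$, every edge of $K$ is a $1$-edge and every non-edge of $K$ is a $\star$-edge. In $U^-$, every non-edge of $K$ is a $0$-edge and every edge of $K$ is a $\star$-edge. Take $U$ to be the disjoint union of $U^+$ and $U^-$, with all pairs between the two copies declared $\star$-edges. Then a vertex of the $U^+$ part meets only $1$-edges and $\star$-edges, and a vertex of the $U^-$ part meets only $0$-edges and $\star$-edges, so the first property holds.

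For the second property, cast every $\star$ into a non-edge. The $U^+$ part becomes exactly $K$: its edges are the $1$-edges, and its non-edges come from the cast $\star$s. Since induced subgraphs of the $U^+$ part are unaffected by pairs leaving it, $K$ appears as an induced subgraph, and therefore so does every $k$-vertex graph. Symmetrically, casting every $\star$ into an edge turns the $U^-$ part into $K$: the former $\star$-edges, which were the edges of $K$, become edges, and the $0$-edges stay non-edges.

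There is no real obstacle. The only point to check is that the cross pairs between the two parts do not matter, and they do not, because we only look at subgraphs induced within a single part.
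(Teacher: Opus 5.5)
Your proposal is correct and is essentially the paper's proof: the paper also takes $D$ to be a disjoint union of all $k$-vertex graphs, forms $D^+$ (non-edges turned into $\star$s) and $D^-$ (edges turned into $\star$s), and joins them by $\star$s. Like you, it reads off $D$ on $V(D^+)$ when all $\star$s become non-edges and on $V(D^-)$ when they become edges.
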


\begin{proof}
    Let $D$ be a graph that contains all $k$-vertex graphs as induced subgraphs (up to isomorphism); the disjoint union of all $k$-vertex graphs (up to isomorphism) is such a graph.
    Let $D^+$ (resp. $D^-$) be the partial graph obtained from $D$ by turning all non-edges (resp. edges) of $D$ into $\star$s.
    Let $U$ be obtained from a copy of $D^+$ and a copy of $D^-$ by adding a $\star$ between any vertex of $D^+$ and any vertex of $D^-$.
    There are no edges in $D^-$ and no non-edges in $D^+$ and we only added $\star$s to construct $U$ from $D^+$ and $D^-$ so no vertex of $U$ is incident to both an edge and a non-edge.
    If all the $\star$s of $U$ are cast into edges (resp. non-edges), the resulting graph contains an induced copy of $D$ on vertex set $V(D^-)$ (resp. $V(D^+)$), hence contains all $k$-vertex graphs as induced subgraphs.
\end{proof}

Note that the first property implies that every pure induced subgraph of $U$ is either a clique or a stable set.
We can now prove \cref{thm:graph-not-disambig}, which we restate for convenience.

\maingraph*

\begin{proof}
    Let $k \in \mathbb{N}$ be given, and let $U$ be a partial graph such that no vertex of $U$ is incident to both an edge and a non-edge, and such that the graph $U^{(1)}$ (resp. $U^{(0)}$) obtained by casting all the $\star$s in $U$ into edges (resp. non-edges) contains all $k$-vertex graphs as induced subgraphs; such a graph exists by \cref{lem:universal-graph-simple}.
    
    Let $F$ be the graph on vertex set $V(U)$ whose edges correspond exactly to the $\star$s in $U$: for all $u, v \in V(F)$, $uv \in E(F) \iff uv \in E_\star(U)$.
    Applying Girth Ramsey to $F$ with $r = 2$ and $\ell = 8$, we obtain a graph $H$ and a set $\mathcal{F}$ of induced copies of $F$ in $H$ with the following properties: \begin{itemize}
        \item Every family $\mathcal{S} \subseteq \mathcal{F}$ of size at most $8$ is contained in a forest of copies of $F$.
        \item For every coloring $\chi : E(H) \to \{0, 1\}$, some member of $\mathcal{F}$ is monochromatic.
    \end{itemize}
    For every $Q \in \mathcal{F}$, fix an isomorphism $\phi_Q : V(F) \to V(Q)$.
    
    Let $uv \notin E(H)$. 
    We show that there is at most one copy $Q \in \mathcal{F}$ that contains both $u$ and $v$.
    Suppose for a contradiction that there are two distinct such copies $Q, Q' \in \mathcal{F}$.
    Since $\ell \geq 2$, there is a forest $\mathcal{F}' \subseteq \mathcal{F}$ of copies of $F$ that contains both $Q$ and $Q'$.
    By \cref{lem:properties-forest}~\ref{itm:forest-no-nonedge} applied to $\mathcal{F'}$, $Q$ and $Q'$ cannot share vertices that are nonadjacent in $H$, which contradicts that they share $u$ and $v$.
    
    We now define a partial graph $G$ on vertex set $V(H)$, as follows. For $u \neq v \in V(H)$, set
    \[G(u, v) = \begin{cases}
      \star &\text{if $uv \in E(H)$ or no $Q \in \mathcal{F}$ contains both $u$ and $v$,} \\
      U(\phi_Q^{-1}(u), \phi_Q^{-1}(v)) &\text{if $uv \notin E(H)$ and $Q \in \mathcal{F}$ contains both $u$ and $v$.}
    \end{cases}\]
    This graph is well-defined by the previous observation.
    Furthermore, for every $Q \in \mathcal{F}$, $\phi_Q : V(U) = V(F) \to V(Q)$ is an isomorphism between the partial graphs $U$ and $G[V(Q)]$. This follows from the definition and from the fact that edges of $H$ correspond to $\star$-edges in $G$, and also correspond to edges of $F$, hence $\star$-edges of $U$.
    Thus, every $Q \in \mathcal{F}$ is also a copy of $U$ in $G$.
    
    We now show that $G$ has the desired properties.
    First, let $G'$ be a disambiguation of $G$.
    Then, $G'$ naturally corresponds to a $2$-coloring $\chi : E_\star(G) \to \{0, 1\}$.
    By definition of $G$, we have $E(H) \subseteq E_\star(G)$ so $\chi$ restricts to a coloring $E(H) \to \{0, 1\}$, which we still call $\chi$.
    By construction of $H$, some member $Q \in \mathcal{F}$ is monochromatic for $\chi$, with color $b \in \{0, 1\}$.
    This means that in the copy $Q$ of $U$ in $G$, all the $\star$-edges were cast according to $b$.
    Thus, $G'[V(Q)]$ is isomorphic to the graph $U^{(b)}$.
    By the construction of $U$, $U^{(b)}$ contains all $k$-vertex graphs as induced subgraphs, so $G'$ contains all $k$-vertex graphs as induced subgraphs.
    
    We now argue that every pure induced subgraph of $G$ on $3$ vertices is homogeneous.
    Let $u, v, w$ form a pure induced subgraph of $G$ on $3$ vertices.
    By construction of $G$, any two vertices in $\{u, v, w\}$ lie in a common member of $\mathcal{F}$.
    Denote these members of $\mathcal{F}$ by $Q_{u, v}, Q_{u, w}, Q_{v, w}$.
    By assumption, there is a forest $\mathcal{F}' \subseteq \mathcal{F}$ of copies of $F$ in $H$ that contains $Q_{u, v}, Q_{u, w}$, and $Q_{v, w}$.
    Thus, $u, v$ and $w$ form a clique in the graph $\Gamma(\mathcal{F}')$, so by \cref{lem:properties-forest}~\ref{itm:clique-common-member}, some $Q \in \mathcal{F}'$ contains all three of $u, v, w$.
    By construction of $G$, we have $G[\{u, v, w\}] \cong U[\phi_Q^{-1}(\{u, v, w\})]$.
    However, no vertex of $U$ is incident to both an edge and a non-edge.
    Since $G[\{u, v, w\}]$ is pure, this implies that $G[\{u, v, w\}]$ is homogeneous.
    
    Finally, we argue that $G$ has VC-dimension at most $1$.
    Suppose for a contradiction that there exist vertices $u, v \in V(G)$ and vertices $x_{00},x_{01},x_{10},x_{11} \in V(G)$ such that \begin{align*}
        &ux_{10}, ux_{11}, vx_{01}, vx_{11} \in E_1(G), \text{ and } \\
        &ux_{00}, ux_{01}, vx_{00}, vx_{10} \in E_0(G).
    \end{align*}
    First, observe that the sets $\{u, v\}$ and $\{x_{00}, x_{01}, x_{10}, x_{11}\}$ are disjoint since $uu, vv \notin E_0(G) \cup E_1(G)$.
    Each of these edges and non-edges is witnessed by a member $Q$ of $\mathcal{F}$.
    By assumption, since there are at most $8$ such members, there is a forest $\mathcal{F}' \subseteq \mathcal{F}$ of copies of $F$ in $H$ that contains all of them.
    Thus, $x_{00},x_{01},x_{10},x_{11}$ are pairwise distinct common neighbors of $u$ and $v$ in $\Gamma(\mathcal{F}')$.
    By \cref{lem:properties-forest}~\ref{itm:3-nbrs-common-member}, some member $Q$ of $\mathcal{F}'$ contains $u, v, x_{01}$.
    By construction of $G$, we have $G[\{u, v, x_{01}\}] \cong U[\phi_Q^{-1}(\{u, v, x_{01}\})]$.
    Therefore, $\phi_Q^{-1}(x_{01})$ is incident to both an edge and a non-edge in $U$.
    This contradicts the definition of $U$.
    Thus, $G$ has no shattered set of size $2$, so has VC-dimension at most $1$.
\end{proof}

\bibliographystyle{alphaurl}
\bibliography{biblio.bib}

@article{A83,
  title = {Densit{\'e} et dimension},
  author = {Patrice Assouad},
  journal = {Annales de l'Institut Fourier},
  year = {1983},
  volume = {33},
  pages = {233-282},
  doi = {10.5802/aif.938},
}

@article{AKM22,
  author  = {Idan Attias and Aryeh Kontorovich and Yishay Mansour},
  title   = {Improved Generalization Bounds for Adversarially Robust Learning},
  journal = {Journal of Machine Learning Research},
  year    = {2022},
  volume  = {23},
  number  = {175},
  pages   = {1--31},
  url     = {http://jmlr.org/papers/v23/20-1353.html},
  eprint={1810.02180},
  archivePrefix={arXiv},
}

@misc{Palvolgyi26,
      title={A nearcircumsphere-{R}amsey Theorem for Solvable Transitive Configurations}, 
      author={Dömötör Pálvölgyi},
      year={2026},
      eprint={2608.10865},
      archivePrefix={arXiv},
      primaryClass={math.CO},
}

@incollection{KST17,
  author    = {Kechris, Alexander S. and Soki{\'c}, Miodrag and Todor{\v{c}}evi{\'c}, Stevo},
  title     = {Ramsey properties of finite measure algebras and topological dynamics of the group of measure preserving automorphisms: some results and an open problem},
  booktitle = {Modus Operandi: Essays in honor of Hugh Woodin},
  series    = {Contemporary Mathematics},
  volume    = {690},
  pages     = {215--249},
  publisher = {American Mathematical Society},
  address   = {Providence, RI},
  year      = {2017},
  note      = {Dedicated to Hugh Woodin on his 60th birthday},
  DOI = {10.1090/conm/690}
}

@article{Shelah72,
  author = {Saharon Shelah},
  title = {{A combinatorial problem; stability and order for models and theories in infinitary languages.}},
  volume = {41},
  journal = {Pacific Journal of Mathematics},
  number = {1},
  publisher = {Pacific Journal of Mathematics, A Non-profit Corporation},
  pages = {247--261},
  year = {1972},
}

@article{Sauer72,
  author = {Norbert Sauer},
  title = {On the density of families of sets},
  journal = {Journal of Combinatorial Theory, Series A},
  year = {1972},
  volume = {13},
  number = {1},
  pages = {145-147},
  issn = {0097-3165},
  doi = {https://doi.org/10.1016/0097-3165(72)90019-2},
}

@INPROCEEDINGS{AHHM22,
  author={Alon, Noga and Hanneke, Steve and Holzman, Ron and Moran, Shay},
  booktitle={2021 IEEE 62nd Annual Symposium on Foundations of Computer Science (FOCS)}, 
  title={A Theory of {PAC} Learnability of Partial Concept Classes}, 
  year={2022},
  volume={},
  number={},
  pages={658-671},
  eprint={2107.08444},
  archivePrefix={arXiv},
  primaryClass={math.CO},
  doi={10.1109/FOCS52979.2021.00070}}

@article{BBGJK21,
author = {Balodis, Kaspars and Ben-David, Shalev and G{\"o}{\"o}s, Mika and Jain, Siddhartha and Kothari, Robin},
title = {Unambiguous {DNF}s and {A}lon–{S}aks–{S}eymour},
journal = {SIAM Journal on Computing},
volume = {0},
number = {0},
pages = {FOCS21-157-FOCS21-173},
year = {2021},
doi = {10.1137/22M1480616},
eprint = {2102.08348},
archivePrefix={arXiv}
}

@inproceedings{BHHLT26,
author = {Blondal, Ari and Hatami, Hamed and Hatami, Pooya and Lalov, Chavdar and Tretiak, Sivan},
title = {Borsuk-{U}lam and Replicable Learning of Large-Margin Halfspaces},
year = {2026},
isbn = {9798400725364},
publisher = {Association for Computing Machinery},
address = {New York, NY, USA},
doi = {10.1145/3798129.3800771},
booktitle = {Proceedings of the 58th Annual ACM Symposium on Theory of Computing},
pages = {529–540},
numpages = {12},
location = {Salt Lake City, UT, USA},
series = {STOC '26},
eprint={2503.15294},
archivePrefix={arXiv}
}

@InProceedings{CHHH23,
  author =	{Cheung, Tsun-Ming and Hatami, Hamed and Hatami, Pooya and Hosseini, Kaave},
  title =	{Online Learning and Disambiguations of Partial Concept Classes},
  booktitle =	{50th International Colloquium on Automata, Languages, and Programming (ICALP 2023)},
  pages =	{42:1--42:13},
  series =	{Leibniz International Proceedings in Informatics (LIPIcs)},
  ISBN =	{978-3-95977-278-5},
  ISSN =	{1868-8969},
  year =	{2023},
  volume =	{261},
  editor =	{Etessami, Kousha and Feige, Uriel and Puppis, Gabriele},
  publisher =	{Schloss Dagstuhl -- Leibniz-Zentrum f{\"u}r Informatik},
  address =	{Dagstuhl, Germany},
  URN =		{urn:nbn:de:0030-drops-180946},
  doi =		{10.4230/LIPIcs.ICALP.2023.42},
  eprint={2303.17578},
  archivePrefix={arXiv}
}

@misc{CMW25,
      title={Spherical dimension}, 
      author={Bogdan Chornomaz and Shay Moran and Tom Waknine},
      year={2025},
      eprint={2503.10240},
      archivePrefix={arXiv},
      primaryClass={cs.DM},
}

@INPROCEEDINGS{FHMST24,
  author={Fioravanti, Simone and Hanneke, Steve and Moran, Shay and Schefler, Hilla and Tsubari, Iska},
  booktitle={2024 IEEE 65th Annual Symposium on Foundations of Computer Science (FOCS)}, 
  title={Ramsey Theorems for Trees and a General ‘Private Learning Implies Online Learning’ Theorem}, 
  year={2024},
  volume={},
  number={},
  pages={1983-2009},
  doi={10.1109/FOCS61266.2024.00119},
  eprint={2407.07765},
  archivePrefix={arXiv},
  }

@misc{IJN26,
      title={The {VC} dimension of partial concept classes via {R}adon's theorem}, 
      author={Grigory Ivanov and Attila Jung and Márton Naszódi},
      year={2026},
      eprint={2607.10751},
      archivePrefix={arXiv},
      primaryClass={cs.LG},
}

@misc{Pabbaraju26,
      title={Optimal Unambiguous {DNF}s and {A}lon-{S}aks-{S}eymour}, 
      author={Chirag Pabbaraju},
      year={2026},
      eprint={2608.02533},
      archivePrefix={arXiv},
      primaryClass={cs.CC},
}

@INPROCEEDINGS{Goos15,
  author={Göös, Mika},
  booktitle={2015 IEEE 56th Annual Symposium on Foundations of Computer Science}, 
  title={Lower Bounds for Clique vs. Independent Set}, 
  year={2015},
  volume={},
  number={},
  pages={1066-1076},
  doi={10.1109/FOCS.2015.69}}

@misc{RR23,
      title={The girth {R}amsey theorem}, 
      author={Christian Reiher and Vojtěch Rödl},
      year={2023},
      eprint={2308.15589},
      archivePrefix={arXiv},
      primaryClass={math.CO},
}

@article{HK26,
   title={Twenty years of {N}ešetřil’s classification programme of {R}amsey classes},
   volume={59},
   ISSN={1574-0137},
   DOI={10.1016/j.cosrev.2025.100814},
   journal={Computer Science Review},
   publisher={Elsevier BV},
   author={Hubička, Jan and Konečný, Matěj},
   year={2026},
   month=Feb, pages={100814},
   eprint={2501.17293},
   archivePrefix={arXiv},
}

@article{GR71,
 ISSN = {00029947},
 author = {Ronald. L. Graham and Bruce. L. Rothschild},
 journal = {Transactions of the American Mathematical Society},
 pages = {257--292},
 publisher = {American Mathematical Society},
 title = {Ramsey's Theorem for $n$-Parameter Sets},
 urldate = {2026-09-08},
 volume = {159},
 year = {1971},
 DOI = {10.2307/1996010}
}

@article{FR90,
 ISSN = {08940347, 10886834},
 author = {Péter Frankl and Vojtěch Rödl},
 journal = {Journal of the American Mathematical Society},
 number = {1},
 pages = {1--7},
 publisher = {American Mathematical Society},
 title = {A Partition Property of Simplices in {E}uclidean Space},
 urldate = {2026-09-08},
 volume = {3},
 year = {1990},
 DOI = {10.2307/1990982}
}

@INPROCEEDINGS{BCT25,
  author={Bourneuf, Romain and Charbit, Pierre and Thomassé, Stéphan},
  booktitle={2025 IEEE 66th Annual Symposium on Foundations of Computer Science (FOCS)}, 
  title={A Dense Neighborhood Lemma: Applications of Partial Concept Classes to Domination and Chromatic Number}, 
  year={2025},
  volume={},
  number={},
  pages={1-37},
  doi={10.1109/FOCS63196.2025.00007},
  eprint={2504.02992},
  archivePrefix={arXiv},
  }

@Article{NR81,
author={Ne{\v{s}}et{\v{r}}il, Jaroslav
and R{\"o}dl, Vojt{\v{e}}ch},
title={Simple proof of the existence of restricted {R}amsey graphs by means of a partite construction},
journal={Combinatorica},
year={1981},
month={Jun},
day={01},
volume={1},
number={2},
pages={199-202},
issn={1439-6912},
doi={10.1007/BF02579274},
}

@inproceedings{L01,
author = {Long, Philip M.},
title = {On Agnostic Learning with \{0, *, 1\}-Valued and Real-Valued Hypotheses},
year = {2001},
isbn = {3540423435},
publisher = {Springer-Verlag},
address = {Berlin, Heidelberg},
booktitle = {Proceedings of the 14th Annual Conference on Computational Learning Theory and and 5th European Conference on Computational Learning Theory},
pages = {289–302},
numpages = {14},
series = {COLT '01/EuroCOLT '01},
DOI = {10.1007/3-540-44581-1_19}
}

@inproceedings{HHM23,
author = {Hatami, Hamed and Hosseini, Kaave and Meng, Xiang},
title = {A {B}orsuk-{U}lam Lower Bound for Sign-Rank and Its Applications},
year = {2023},
isbn = {9781450399135},
publisher = {Association for Computing Machinery},
address = {New York, NY, USA},
doi = {10.1145/3564246.3585210},
booktitle = {Proceedings of the 55th Annual ACM Symposium on Theory of Computing},
pages = {463–471},
numpages = {9},
location = {Orlando, FL, USA},
series = {STOC 2023}
}

@misc{FHV26,
      title={A $\mathbb{Z}_2$-Topological Framework for Sign-rank Lower Bounds}, 
      author={Florian Frick and Kaave Hosseini and Aliaksei Vasileuski},
      year={2026},
      eprint={2604.01510},
      archivePrefix={arXiv},
      primaryClass={math.CO},
}

@incollection{Deuber75,
  author    = {Deuber, Walter},
  title     = {A generalization of {Ramsey}'s theorem},
  booktitle = {Infinite and Finite Sets, Vol. I},
  series    = {Colloquia Mathematica Societatis J{\'a}nos Bolyai},
  volume    = {10},
  publisher = {North-Holland},
  address   = {Amsterdam/London},
  year      = {1975},
  pages     = {323--332}
}

@incollection{EHP75,
  author    = {Erd{\H{o}}s, Paul and Hajnal, András and P{\'o}sa, Lajos},
  title     = {Strong embeddings of graphs into colored graphs},
  booktitle = {Infinite and Finite Sets, Vol. I},
  series    = {Colloquia Mathematica Societatis J{\'a}nos Bolyai},
  volume    = {10},
  publisher = {North-Holland},
  address   = {Amsterdam/London},
  year      = {1975},
  pages     = {585--595}
}

@mastersthesis{Rodl73,
  author = {R{\"o}dl, Vojtěch},
  title  = {The dimension of a graph and generalized {Ramsey} theorems},
  school = {Charles University},
  year   = {1973}
}

@article{Rosenblatt58,
  author  = {Rosenblatt, Frank},
  title   = {The perceptron: A probabilistic model for information
             storage and organization in the brain},
  journal = {Psychological Review},
  volume  = {65},
  number  = {6},
  pages   = {386--408},
  year    = {1958},
  doi     = {10.1037/h0042519}
}

@inproceedings{Novikoff63,
  author    = {Novikoff, Albert B. J.},
  title     = {On convergence proofs for perceptrons},
  booktitle = {Proceedings of the Symposium on Mathematical Theory
               of Automata},
  editor    = {Fox, Jerome},
  series    = {Microwave Research Institute Symposia Series},
  volume    = {12},
  pages     = {615--622},
  publisher = {Polytechnic Press},
  address   = {Brooklyn, NY},
  year      = {1963},
}

\appendix

\section{The radial Ramsey approach}\label{sec:hdr}

The unconditional proof of \cref{thm:main-cube} obtains distance-profile-canonicality using two potentially different embeddings for the rows and columns.
The radial approach seeks a stronger restriction for symmetric colorings: a single near-spanning homothetic copy on which the color depends only on total distance.
In this appendix, we develop the conditional approach outlined in the introduction.
We first show that the Near-Spanning Radial Ramsey Conjecture implies the Gap-Hamming Disambiguation theorem.
We then derive the radial conjecture from the Homogeneous Dual Ramsey Conjecture, through a Ramsey statement about partitions with prescribed part sizes.

We begin with the terminology used in \cref{conj:radial}.
Let $(X,d_X)$ and $(Y,d_Y)$ be metric spaces.
A map $\phi:X\to Y$ is a \emph{homothety} if there exists a constant $\lambda>0$, called its \emph{ratio}, such that
\[
    d_Y(\phi(x),\phi(x'))=\lambda d_X(x,x')
    \qquad\text{for all }x,x'\in X.
\]
We call $\phi$ \emph{$\varepsilon$-near-spanning} if
\[
    \diam_{d_Y}(\phi(X))
    \geq (1-\varepsilon)\diam_{d_Y}(Y).
\]
Given a coloring $c:Y\times Y\to\{0,1\}$, we call $\phi$
\emph{$c$-radial} if $c(\phi(x),\phi(x'))$ depends only on
$d_X(x,x')$.

We can now restate the Near-Spanning Radial Ramsey Conjecture.

\nearspanningradialramsey*

The symmetry assumption is necessary.
Indeed, fix a total order $\prec$ on $Q_n$ and set $c(x,y)=1$
exactly when $x\prec y$.
For any two distinct vertices $u,v$, we then have
$c(u,v)\neq c(v,u)$, whereas radiality requires these two
colors to agree.

The first step is to establish the following implication.

\begin{restatable}{lemma}{radialimpliesghs}\label{lem:radial-implies-ghs}
    The Near-Spanning Radial Ramsey Conjecture implies the
    Gap-Hamming Disambiguation theorem.
\end{restatable}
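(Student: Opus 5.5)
The plan is to argue by contradiction through a symmetrization step, after which the conjecture and a parity-adjusted version of the shattering argument of \cref{lem:large-vc} finish the proof. Fix $0<\varepsilon<1/2$ and $k$. Suppose that for infinitely many $N$ some disambiguation of $H_\varepsilon^N$, that is, some $\varepsilon$-gap coloring $c:Q_N\times Q_N\to\{0,1\}$, has VC-dimension less than $k$.

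\textbf{Step 1: symmetrization.} Define $c'(x,y)=c(x,y)\wedge c(y,x)$. Since Hamming distance is symmetric, $c'$ is again a symmetric $\varepsilon$-gap coloring. Its row concepts are $\{y: c'(x,y)=1\}=A_x\cap B_x$, where $A_x=\{y:c(x,y)=1\}$ ranges over the row class of $c$ and $B_x=\{y:c(y,x)=1\}$ ranges over the column (dual) class of $c$. The row class has VC-dimension $<k$, and by Assouad's dual bound the dual class has VC-dimension $<2^{k}$. A class of pairwise intersections of members of two classes of VC-dimension at most $d$ has VC-dimension $O(d\log d)$ by the standard Sauer--Shelah counting. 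Hence $c'$ has VC-dimension $<q$ for some $q=q(k)$ independent of $N$. It therefore suffices to show that every symmetric $\varepsilon$-gap coloring of $Q_N$ has VC-dimension at least $q$ once $N$ is large.

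\textbf{Step 2: apply the conjecture.} Choose $m$ large, specifically $m\geq 8q/\varepsilon+8$. Apply \cref{conj:radial} with parameters $\varepsilon/2$ and $m$ to obtain a $c'$-radial $\varepsilon/2$-near-spanning homothety $\phi:Q_m\to Q_N$ with ratio $\lambda$. Near-spanning gives $\lambda m=\diam\phi(Q_m)\geq(1-\varepsilon/2)N$. Thus \cref{lem:transfer-gap-coloring} (with $\iota_1=\iota_2=\phi$ and $\Delta=0$) shows that $\widehat c(x,y)=c'(\phi(x),\phi(y))$ is an $\varepsilon/2$-gap coloring of $Q_m$. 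By radiality, $\widehat c(x,y)=f(d(x,y))$ for some $f:\{0,\dots,m\}\to\{0,1\}$, with $f(t)=0$ for $t\leq\varepsilon m/2$ and $f(t)=1$ for $t\geq(1-\varepsilon/2)m$. By \cref{lem:transfer-VC-dim}, it suffices to shatter $q$ points with $\widehat c$.

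\textbf{Step 3: shattering, and the parity obstacle.} This is the step I expect to need care. With $x^i=\mathbbm 1_{\{i\}}$ for $i\in[q]$, the parity of $d(y,x^i)$ does not depend on $i$, so a single row $y$ can only realize distance levels that differ by $2$. A plain threshold $f(t_0)=0$, $f(t_0+1)=1$ is therefore useless. Instead, take $t_0$ to be the largest even integer with $f(t_0)=0$; such an integer exists since $f(0)=0$.
\begin{itemize}
\item If $t_0+2\leq m$, maximality gives $f(t_0+2)=1$.
\item Since $f=1$ on $[(1-\varepsilon/2)m,m]$, an interval containing an even integer, we get $t_0+2\leq m$.
\item The bound $f=0$ on $[0,\varepsilon m/2]$ gives $t_0\geq\varepsilon m/2-1\geq q+2$.
\end{itemize}
For $S\subseteq[q]$, let $y^S$ have $y^S_j=1$ exactly for $j\in[q]\setminus S$, together with $w_S=t_0+1-(q-|S|)\geq0$ further ones on coordinates outside $[q]$. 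This needs $m\geq q+t_0$; if that fails, put the filler ones in the complement pattern instead, which works because $t_0\leq m-2$ and $m$ is large.

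Then
\[
d(y^S,x^i)=w_S+(q-|S|)+1-2\,[i\notin S]=t_0+2-2\,[i\notin S].
\]
This equals $t_0+2$ if $i\in S$, giving color $1$, and $t_0$ otherwise, giving color $0$. So $\{x^i: i\in[q]\}$ is shattered by $\widehat c$, contradicting Step 1. The remaining checks are the exact bookkeeping for the filler coordinates (choosing $m$ large enough relative to $q/\varepsilon$) and the quantitative form of the intersection bound in Step 1; neither should cause trouble.
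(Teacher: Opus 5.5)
There is one genuine gap, in Step~2. Otherwise your steps are sound and follow the paper's argument closely. Symmetrizing with $\wedge$ and bounding intersections, instead of with $\vee$ and unions, is cosmetic. So is taking the largest even zero $t_0$ of $f$ with levels $t_0,t_0+2$, instead of the smallest one $\tau$ with levels $\tau-2,\tau$; both are the same parity trick.

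The gap is a quantifier error. \cref{conj:radial} is existential in the ambient dimension. For given $(\varepsilon/2,m)$ it supplies \emph{one} $N$ for which every symmetric coloring of $Q_N$ has a radial near-spanning copy of $Q_m$, and says nothing about other dimensions. Monotonicity in $N$ is not automatic: a homothety $Q_N\to Q_{N'}$ with $N'\approx 2N$ has ratio $1$, so it only spans about half of $Q_{N'}$. Yet you apply the conjecture at an arbitrary $N$ from your infinite family of bad dimensions. As written, you only show that gap colorings of $Q_N$ have large VC-dimension for the particular $N$ the conjecture supplies. \cref{thm:main-cube} asks for all sufficiently large $N$, and the paper explicitly flags this step.

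The repair is the coordinate-repetition argument used to deduce \cref{thm:main-cube} from \cref{thm:exists-torus}:
\begin{enumerate}
\item Choose $m$ for $\varepsilon/2$, and let $N_0$ be given by the conjecture for $\varepsilon/4$ and $m$.
\item Take a bad dimension $N'\geq 2N_0/\varepsilon$, and write $N'=aN_0+b$ with $0\le b<N_0$.
\item Pull the bad coloring back along $x\mapsto(x,\dots,x,0^b)$. This is a homothety of ratio $a$ with $aN_0\geq(1-\varepsilon/2)N'$.
\item By \cref{lem:transfer-gap-coloring,lem:transfer-VC-dim}, this gives an $\varepsilon/2$-gap coloring of $Q_{N_0}$ of VC-dimension $<k$. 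Your Steps~1--3, run with $\varepsilon/2$ in place of $\varepsilon$, then give the contradiction.
\end{enumerate}
Your ``infinitely many bad $N$'' setup is well suited to this, but the step is missing.

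A smaller bookkeeping point in Step~3: since the filler count $w_S$ is as large as $t_0+1$, you need $m\ge q+t_0+1$, not $m\ge q+t_0$. This holds automatically, because $t_0<(1-\varepsilon/2)m$ and $\varepsilon m/2\ge 4q$. So the vague ``complement pattern'' fallback is unnecessary and can be dropped.
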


We then connect the radial conjecture to homogeneous dual Ramsey theory.
Ordinary Ramsey theory concerns subsets and inclusion, while dual Ramsey theory concerns partitions and coarsening.
Viewing subsets as injections and partitions as surjections makes this reversal explicit.
The so-called Dual Ramsey theorem was proved by Graham and Rothschild \cite{GR71}.
The Homogeneous Dual Ramsey Conjecture imposes the additional
requirement that the parts have equal size.
It was first conjectured by Kechris, Soki{\'c}, and Todor{\v{c}}evi{\'c} \cite{KST17}, and was still recorded as an open statement in a recent survey by Hubička and Konečný \cite{HK26}.
Recall that a $k$-equipartition is a partition into $k$ parts of equal size, and that a coarsening is obtained by merging parts.

\homogeneousdualramsey*

Our main result in this appendix is the following.

\begin{theorem}\label{thm:hdr-implies-radial}
    The Homogeneous Dual Ramsey Conjecture implies the
    Near-Spanning Radial Ramsey Conjecture.
\end{theorem}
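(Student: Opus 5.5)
We will prove \cref{thm:hdr-implies-radial} by encoding a homothetic copy of $Q_m$ inside $Q_N$ as an equipartition-based block structure, and by encoding the pair-type of two cube vertices as a coarsening. Fix $\varepsilon$ and $m$, and a symmetric coloring $c:Q_N\times Q_N\to\{0,1\}$. Choose a large integer $L$ (to be set later), let $t$ be the number of active blocks and let each block have size $L$, with $N=tL+b$ and $b$ small, so that the copy is $\varepsilon$-near-spanning. A homothetic copy of $Q_m$ will be obtained in the form $x\mapsto$ the vector which equals $x_j$ on every coordinate of the $j$-th block, after first grouping the $N$ coordinates into $m$ blocks. Given two vertices $x,x'\in Q_m$, the coordinates $[N]$ are split into the four sets where $(x_j,x'_j)\in\{00,01,10,11\}$. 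By symmetry of $c$ we may swap $x$ and $x'$, and complementation $x\mapsto\bar x$ induces a further symmetry if we are careful. The color $c(\phi(x),\phi(x'))$ is therefore a function of an \emph{ordered partition of $[N]$ into four labelled parts}, and the goal is to make it depend only on $|\{x\neq x'\}|$.

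The key steps are as follows. (1) Reduce to a dual Ramsey statement with prescribed part sizes. For each possible profile $(a_{00},a_{01},a_{10},a_{11})$ of block counts summing to $m$, the pairs of vertices with that profile correspond to coarsenings of the $m$-block partition into four labelled parts of sizes $a_{ij}L$. We need a statement of the form: for every finite list of size profiles, there is an $M$-equipartition of $[n]$ such that, for each profile, all coarsenings realizing that profile get the same color. (2) Derive this prescribed-size statement from \cref{conj:hdr}. Take a common refinement: if all sizes are multiples of $1/K$ of the total, then a partition with parts of sizes $a_{ij}L$ is a coarsening of some $K$-equipartition. Color each $K$-equipartition by the vector of colors of all the labelled 4-part coarsenings obtained by merging it according to each admissible profile-merging pattern. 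This is a finite number $r$ of colors, since the merging patterns form a finite set. Apply \cref{conj:hdr}, iterating over the finitely many choices of $K$ by nesting (applying the conjecture inside the previous output partition), to get an $M$-equipartition $\mathcal P$ all of whose $K$-equipartition coarsenings carry the same color vector. Labels on the four parts must be handled by recording the labelled data in the color and using that the admissible labellings are symmetric under permuting parts of equal size. (3) Build the copy. Group the $M$ parts of $\mathcal P$ into $m$ groups of $M/m$ parts; these are the blocks. Then any two vertex pairs with the same ordered profile map to partitions of the same shape that coarsen $\mathcal P$, so they receive the same color. Using symmetry of $c$ and the fact that swapping $x,x'$ exchanges $a_{01}$ and $a_{10}$, together with relabelling the coordinates, we conclude that the color depends only on the multiset of block counts, and then we must further reduce to dependence only on $a_{01}+a_{10}$.

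The main obstacle is this final reduction. The profile $(a_{00},a_{01},a_{10},a_{11})$ carries more information than the Hamming distance, and the statement about coarsenings only equalizes colors within one profile. To handle it, I would not use the full cube but a sub-copy: embed $Q_{m'}$ into $Q_m$ by the map $x\mapsto(x,\bar x)$ blockwise (pairing each coordinate with its complement). Under this embedding every pair has $a_{00}+a_{11}=a_{01}+a_{10}$ balanced, with $a_{00}=a_{11}$ and $a_{01}=a_{10}$. The embedding is still a homothety, so the profile becomes a function of the distance alone. This is the doubling trick that exploits the ``unequal four classes'' care mentioned in the introduction. We then choose $L$ and $b$ so that the near-spanning condition survives.
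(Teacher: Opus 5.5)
\textbf{There is a genuine gap.}

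Your doubling trick $x\mapsto(x,\bar x)$ is exactly what the paper does: each bit $x_i$ is carried by a pair of equal blocks $U_i^1,U_i^0$. It does make the size data a function of the distance, namely $a_{00}=a_{11}=m'-d$ and $a_{01}=a_{10}=d$. Your extra claim $a_{00}+a_{11}=a_{01}+a_{10}$ is false unless $d=m'/2$, but it is not needed.

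The gap is the passage from labelled four-part partitions to colorings of partitions. The Homogeneous Dual Ramsey Conjecture, and any profile version of it, only colors unlabelled partitions, i.e.\ canonically ordered ones. But $c(\phi(x),\phi(x'))=c(\mathbbm{1}_{P_{10}\cup P_{11}},\mathbbm{1}_{P_{01}\cup P_{11}})$ depends on which canonical part carries which label. The order in which the labels $00,01,10,11$ first appear in $[n]$ is an invariant of the labelled coarsening, and the color may still depend on it after any Ramsey step on coarsenings of $\mathcal P$.

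In your copy this order varies among pairs at the same distance. The pairs $(x,x')$ and $(\bar x,\bar x')$ exchange the roles of $P_{00}$ and $P_{11}$, and also of $P_{01}$ and $P_{10}$. Symmetry of $c$ only absorbs the second exchange. Complementation is not a symmetry of an arbitrary symmetric $c$.

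Recording all labellings in a color vector does not help. The vector becomes constant, but the entry giving the actual color is selected by the varying labelling. Concretely, every coordinate of your ground set is active. Let $t$ be its least coordinate and take the symmetric coloring $c(u,v)=1$ iff $u_t=v_t=1$. Then the distance-$0$ pairs $(\phi(x),\phi(x))$ receive both colors, so no copy of the form you build is $c$-radial.

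\textbf{The missing idea is anchoring.} The paper applies Multi-Profile Dual Ramsey with $\beta=\unif{2m+2}$. It makes the \emph{first two} parts $U_0^1,U_0^0$ of the output partition $\pi$ constant, via $\phi(x)=\mathbbm{1}_{U_0^1\cup\bigcup_i U_i^{x_i}}$. Then for every pair, $P_{11}\supseteq U_0^1$ is the first canonical part and $P_{00}\supseteq U_0^0$ is the second. The order of $P_{10},P_{01}$ is settled by the symmetry of $c$. Hence $c_j(P_{11},P_{00},P_{10},P_{01})=c(\mathbbm{1}_{P_{11}\cup P_{10}},\mathbbm{1}_{P_{11}\cup P_{01}})$ is a genuine coloring of $\alpha^{(j)}$-partitions, where $\alpha^{(j)}$ depends only on $j$ and $\alpha^{(0)}=(1/2,1/2)$ handles diagonal pairs.

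The anchors must be produced by the Ramsey theorem as the first parts of its output. Coordinates fixed in advance outside the ground set do not help, by the same counterexample applied to the least active coordinate. The anchors cost a $\frac1{m+1}$ fraction of the diameter, which the paper recovers by composing with the repetition map $Q_m\to Q_{mt}$.

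\textbf{A secondary gap in your step (2).} Even for unlabelled profiles, a constant color vector is not enough. A coarsening $Q$ of $\mathcal P$ can be written as $\psi\circ E$ only for merging patterns $\psi$ constrained by how the parts of $Q$ interleave in the canonical order of $\mathcal P$. So a constant vector $V$ only yields $c(Q)=V_{\psi_Q}$, with $\psi_Q$ depending on $Q$. The paper handles this with its Amalgamation Lemma, building $\gamma$ and copies $\chi_i$ with $\psi_i\circ\phi=\kappa\circ\chi_i$.
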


Together with \cref{lem:radial-implies-ghs}, this gives the
conditional route to our main theorem.

\begin{corollary}\label{thm:hdr-implies-main}
    The Homogeneous Dual Ramsey Conjecture implies the
    Gap-Hamming Disambiguation theorem.
\end{corollary}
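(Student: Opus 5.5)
The corollary is the composition of two implications, so I would chain them directly. First, assume the Homogeneous Dual Ramsey Conjecture (\cref{conj:hdr}). By \cref{thm:hdr-implies-radial}, the Near-Spanning Radial Ramsey Conjecture (\cref{conj:radial}) then holds. By \cref{lem:radial-implies-ghs}, the Near-Spanning Radial Ramsey Conjecture in turn implies the Gap-Hamming Disambiguation theorem (\cref{thm:main-cube}).

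The only thing to check is that both implications are stated as implications between the \emph{full} statements, quantified over all parameters. They are: \cref{thm:hdr-implies-radial} derives \cref{conj:radial} for all $\varepsilon$ and $m$, and \cref{lem:radial-implies-ghs} uses \cref{conj:radial} in that generality. So no parameter bookkeeping is needed at the level of this corollary, and the composition is immediate.

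No step of this argument is substantive; the real content lies in the two cited results. For \cref{lem:radial-implies-ghs}, I expect the route to be the following. Symmetrize a disambiguation of bounded VC-dimension, losing only a function of the VC-dimension. Then pass to a radial near-spanning homothetic copy, where the gap property survives by \cref{lem:transfer-gap-coloring} with $\iota_1=\iota_2$. Finally, run a shattering argument analogous to \cref{lem:large-vc} at a distance level where the color changes. For \cref{thm:hdr-implies-radial}, the main obstacle is that the four classes induced by a pair of cube vertices have unequal sizes. This has to be reduced to equipartitions before \cref{conj:hdr} can be applied.
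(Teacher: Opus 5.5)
Your proposal is correct and matches the paper's argument: the corollary is obtained exactly by composing \cref{thm:hdr-implies-radial} with \cref{lem:radial-implies-ghs}, both of which are stated between the full, universally quantified statements. Your sketches of the two ingredients are also in line with how the paper proves them. These are symmetrization at a bounded cost in VC-dimension, shattering at the first color change on a radial near-spanning copy, and handling the unequal four-class profiles through a multi-profile extension of the Homogeneous Dual Ramsey Conjecture.
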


\begin{remark}[Comparison with vertex colorings]
For vertex colorings, monochromatic near-spanning homothetic copies of cubes are already guaranteed.
For every prescribed dimension and every $\varepsilon>0$, such copies exist in every two-coloring of a sufficiently large cube.
This follows from Pálvölgyi's Dense Block theorem (\cref{lem:palvolgyi}), applied to the natural action of
$C_2$ on $\{0,1\}$, and can also be derived from earlier work of Frankl and Rödl~\cite{FR90}.

Both the scaling and the near-spanning allowance are necessary. Coloring vertices by the parity of their Hamming weight excludes monochromatic isometric copies of $Q_1$. 
Coloring them by their first coordinate excludes monochromatic spanning copies of $Q_1$.

For pair colorings, even a near-spanning homothetic copy cannot generally be monochromatic.
Indeed, color pairs in $Q_n$ according to whether their distance is at most $n/2$ or greater than $n/2$.
Every $\varepsilon$-near-spanning homothetic copy of $Q_2$ then contains pairs of both colors.
This motivates seeking radiality rather than monochromaticity.
\end{remark}

\subsection{Near-Spanning Radial Ramsey implies Gap-Hamming Disambiguation}

We first show how to make a gap coloring symmetric while keeping control over its VC-dimension.
The argument uses a bound on the VC-dimension of unions, which we formulate in the language of set systems.

A \emph{set system} on a ground set $V$ is a collection $\mathcal S$ of subsets of $V$.
Its VC-dimension is that of the binary matrix $M$ whose rows are indexed by $\mathcal S$, whose columns are indexed by $V$, and whose entries satisfy $M[S,v]=1$ if $v\in S$ and $M[S,v]=0$ otherwise.

The following bound of Attias, Kontorovich, and Mansour~\cite{AKM22} follows from the Sauer--Shelah Lemma~\cite{Sauer72,Shelah72}.

\begin{lemma}[{\cite[Lemma 16]{AKM22}}] \label{lem:vcdim-union}
    Let $\mathcal{S}_1, \mathcal{S}_2$ be two set systems on the same ground set, with VC-dimensions $d_1$ and $d_2$, respectively.
    Then the set system $\{A_1 \cup A_2 : A_1 \in \mathcal{S}_1, A_2 \in \mathcal{S}_2\}$ has VC-dimension at most $2\log_2(6)(d_1+d_2)$.
\end{lemma}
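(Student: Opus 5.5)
The plan is to derive the union bound from the Sauer--Shelah Lemma by counting traces on a shattered set. Let $\mathcal S=\{A_1\cup A_2 : A_1\in\mathcal S_1, A_2\in\mathcal S_2\}$. Suppose a finite set $X$ of size $m$ is shattered by $\mathcal S$. The goal is to show $m\le 2\log_2(6)(d_1+d_2)$.

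\textbf{Step 1 (counting traces).} Every trace $(A_1\cup A_2)\cap X$ equals $(A_1\cap X)\cup(A_2\cap X)$. So the number of traces of $\mathcal S$ on $X$ is at most the product of the number of traces of $\mathcal S_1$ and of $\mathcal S_2$ on $X$. By Sauer--Shelah, each factor is at most $\Phi_{d_i}(m)=\sum_{j\le d_i}\binom{m}{j}$. Shattering gives
\[
2^m\le \Phi_{d_1}(m)\,\Phi_{d_2}(m).
\]

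\textbf{Step 2 (elementary bounds on $\Phi$).} Use $\Phi_d(m)\le (em/d)^d$ for $m\ge d\ge1$. If some $d_i=0$, that factor equals $1$; and if $d_1=d_2=0$, both factors are $1$, so $2^m\le1$ forces $m=0$. If $m<d_i$ for some $i$, the bound holds trivially. Otherwise, write $D=d_1+d_2$. By concavity of $t\mapsto t\log(em/t)$, the product is at most $(2em/D)^D$. This gives
\[
m\le D\log_2\!\left(\frac{2em}{D}\right).
\]

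\textbf{Step 3 (solving the inequality).} Set $t=m/D$, so $t\le\log_2(2et)$. Check that $t=2\log_2 6\approx5.17$ violates this: $\log_2(2e\cdot5.17)\approx\log_2 28.1\approx4.81<5.17$. The map $t\mapsto t-\log_2(2et)$ is increasing for $t\ge 1/\ln 2$, so the inequality fails for every larger $t$. Hence $t\le 2\log_2 6$, i.e.\ $m\le 2\log_2(6)(d_1+d_2)$.

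The only delicate point is Step 3: making sure the constant $2\log_2 6$ actually dominates the root of $t=\log_2(2et)$, and handling the degenerate cases $d_i=0$ or $m<d_i$. These are routine checks.
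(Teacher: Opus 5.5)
Your proof is correct and takes essentially the route the paper has in mind. The paper gives no argument of its own: it cites the lemma and notes that it follows from the Sauer--Shelah Lemma. Your derivation is the standard one of that kind: bound $2^m\le\Phi_{d_1}(m)\Phi_{d_2}(m)\le(2em/D)^D$ by trace counting and concavity, then solve $t\le\log_2(2et)$. The constant $2\log_2 6$ is the $k=2$ case of the familiar $2\log_2(3k)$ bound for $k$-fold unions. Your handling of the degenerate cases $D=0$ and $m<d_i$ and your numerical check at $t=2\log_2 6$ are both fine.
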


We now use this bound to control the effect of symmetrization on VC-dimension.

\begin{lemma}\label{lem:make-c-symmetric}
    Let $0 < \varepsilon < 1/2$ and $n, q \in \mathbb{N}$.
    For every $\varepsilon$-gap coloring $c: Q_n \times Q_n \to \{0, 1\}$ of VC-dimension at most $q$, there exists a symmetric $\varepsilon$-gap coloring $\widetilde{c} : Q_n \times Q_n \to \{0, 1\}$ of VC-dimension at most $B(q) \coloneqq 2\log_2(6)(q+2^{q+1}-1)$.
\end{lemma}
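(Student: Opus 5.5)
The plan is to symmetrize by taking the pointwise maximum, $\widetilde{c}(x,y) = \max\{c(x,y), c(y,x)\}$, i.e.\ $\widetilde{c}(x,y)=1$ iff $c(x,y)=1$ or $c(y,x)=1$. This is clearly symmetric. It remains an $\varepsilon$-gap coloring because Hamming distance is symmetric. If $d(x,y)\le \varepsilon n$, then both $c(x,y)$ and $c(y,x)$ equal $0$, so $\widetilde c(x,y)=0$. If $d(x,y)\ge (1-\varepsilon)n$, then both equal $1$, so $\widetilde c(x,y)=1$.

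For the VC-dimension bound I will pass to set systems on the ground set $Q_n$. Let $\mathcal S_1=\{A_x : x\in Q_n\}$ with $A_x=\{y : c(x,y)=1\}$; these are the rows of $c$, so $\mathcal S_1$ has VC-dimension at most $q$. Let $\mathcal S_2=\{B_x : x\in Q_n\}$ with $B_x=\{y : c(y,x)=1\}$; these are the rows of the transpose of $c$. The row of $\widetilde c$ indexed by $x$ is exactly $A_x\cup B_x$. Hence the set system of rows of $\widetilde c$ is a subfamily of $\{A_1\cup A_2 : A_1\in\mathcal S_1, A_2\in\mathcal S_2\}$. VC-dimension is monotone under taking subfamilies, so \cref{lem:vcdim-union} bounds it by $2\log_2(6)(q+d_2)$, where $d_2$ is the VC-dimension of $\mathcal S_2$.

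So everything reduces to showing $d_2\le 2^{q+1}-1$, i.e.\ the classical dual VC-dimension bound of Assouad~\cite{A83}: the transpose of a binary matrix of VC-dimension at most $q$ has VC-dimension less than $2^{q+1}$. This is the only step with real content, and I would prove it directly. Suppose the transpose shatters a set of $2^{q+1}$ of its columns, i.e.\ rows $x_T$ of $c$ indexed by subsets $T\subseteq[q+1]$. Then for every set $\mathcal T$ of such indices there is a column $y_{\mathcal T}$ with $c(x_T,y_{\mathcal T})=1$ iff $T\in\mathcal T$. For $i\in[q+1]$, choose $\mathcal T_i=\{T : i\in T\}$ and set $y_i=y_{\mathcal T_i}$. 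Then for every $S\subseteq [q+1]$, the row $x_S$ satisfies $c(x_S,y_i)=1$ iff $i\in S$. Thus $\{y_1,\dots,y_{q+1}\}$ is shattered by $c$; the $y_i$ are automatically distinct since they are separated by rows. This contradicts VC-dimension at most $q$.

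Plugging $d_1\le q$ and $d_2\le 2^{q+1}-1$ into \cref{lem:vcdim-union} gives exactly $B(q)=2\log_2(6)(q+2^{q+1}-1)$. The only point to watch is that the union lemma concerns arbitrary pairs $(A_1,A_2)$, whereas $\widetilde c$ only uses the diagonal pairs $(A_x,B_x)$; monotonicity of VC-dimension under subfamilies handles this.
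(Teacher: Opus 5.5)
Your proof is correct and follows the paper's argument: you symmetrize via $\widetilde c(x,y)=c(x,y)\vee c(y,x)$, treat the rows of $\widetilde c$ as a subfamily of pairwise unions of rows of $c$ and rows of its transpose, and combine \cref{lem:vcdim-union} with Assouad's dual bound. The only difference is that you prove the dual bound $d_2\le 2^{q+1}-1$ directly instead of citing \cite{A83}, which also makes the paper's footnote about repeated rows and columns unnecessary.
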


\begin{proof}
    For all $x, y \in Q_n$, set $\widetilde{c}(x, y) = c(x, y) \vee c(y, x)$.
    Then, $\widetilde{c} : Q_n \times Q_n \to \{0, 1\}$ is a symmetric coloring.
    Furthermore, if $d(x, y) \leq \varepsilon \cdot \diam(Q_n)$ then $c(x, y) = 0 = c(y, x)$ so $\widetilde{c}(x, y) = 0$.
    Similarly, if $d(x, y) \geq (1-\varepsilon) \cdot \diam(Q_n)$ then $c(x, y) = 1 = c(y, x)$ so $\widetilde{c}(x, y) = 1$.
    Therefore, $\widetilde{c}$ is an $\varepsilon$-gap coloring.
    
    For every $u \in Q_n$, let $R_u = \{v \in Q_n : c(u, v) = 1\}$ and $C_u = \{v \in Q_n : c(v, u) = 1\}$.
    Consider the two set systems $\mathcal{R} = (Q_n, \{R_u : u \in Q_n\})$ and $\mathcal{C} = (Q_n, \{C_u : u \in Q_n\})$.
    Up to repetitions, $\mathcal{C}$ is the dual set system of $\mathcal{R}$.\footnote{Repeated rows of the original matrix give indistinguishable ground elements in the dual, while repeated columns give repeated sets. Removing either kind of repetition does not affect VC-dimension.}
    By definition, the VC-dimension of $\mathcal{R}$ is the VC-dimension of $c$, so $\mathcal{R}$ has VC-dimension at most $q$.
    By Assouad's bound \cite{A83}, it follows that $\mathcal{C}$ has VC-dimension at most $2^{q+1}-1$.
    The VC-dimension of $\widetilde{c}$ is the VC-dimension of the set system $\mathcal{S} = (Q_n, \{R_{u} \cup C_{u} : u \in Q_n\})$.
    Moreover, $\{R_{u} \cup C_{u} : u \in Q_n\} \subseteq \{R_{u_1} \cup C_{u_2} : u_1, u_2 \in Q_n\}$.
    By \cref{lem:vcdim-union}, the set system $(Q_n, \{R_{u_1} \cup C_{u_2} : u_1, u_2 \in Q_n\})$ has VC-dimension at most $2\log_2(6)(q+2^{q+1}-1)$.
    Thus, $\mathcal{S}$ has VC-dimension at most $2\log_2(6)(q+2^{q+1}-1)$, and hence so does $\widetilde{c}$.
\end{proof}

After reducing to symmetric colorings, we use the radial structure supplied by \cref{conj:radial} to construct a large shattered set.

\radialimpliesghs* 

\begin{proof}
Assume that the Near-Spanning Radial Ramsey Conjecture holds.
We prove that for all $0 < \varepsilon < 1/2$ and $q \in \mathbb{N}$, there exists $n \in \mathbb{N}$ such that every $\varepsilon$-gap coloring $c : Q_n \times Q_n \to \{0, 1\}$ has VC-dimension at least $q$.
The same coordinate-repetition argument as in the deduction of \cref{thm:main-cube} from \cref{thm:exists-torus} upgrades this statement to all sufficiently large dimensions, so we omit the details.

Let $0 < \varepsilon < 1/2$ and $q \in \mathbb{N}$ be given.
Choose an integer $s > B(q-1)$, where $B(\cdot)$ comes from \cref{lem:make-c-symmetric}.
Choose $m \in \mathbb{N}$ large enough so that $\frac{s+1}{m} \leq \varepsilon$ and $m-s+1 \geq (1-\varepsilon/2) \cdot m$.
Let $n$ be the integer given by Near-Spanning Radial Ramsey for $\varepsilon/2$ and $m$.
By \cref{lem:make-c-symmetric}, it suffices to prove that every symmetric $\varepsilon$-gap coloring $c : Q_n \times Q_n \to \{0, 1\}$ has VC-dimension at least $s$.
Indeed, if an arbitrary $\varepsilon$-gap coloring had VC-dimension less than $q$, then its VC-dimension would be at most $q-1$.
By \cref{lem:make-c-symmetric}, there would exist a symmetric $\varepsilon$-gap coloring of VC-dimension at most $B(q-1) < s$, a contradiction.

Let $c : Q_n \times Q_n \to \{0, 1\}$ be a symmetric $\varepsilon$-gap coloring.
By Near-Spanning Radial Ramsey, there exists a $c$-radial $\varepsilon/2$-near-spanning homothetic copy $\phi : Q_m \to Q_n$ of $Q_m$.
Let $f : \{0, \ldots, m\} \to \{0, 1\}$ be such that $c(\phi(x), \phi(y)) = f(d(x, y))$ for all $x, y \in Q_m$.
Let $\lambda > 0$ be the ratio of $\phi$.
Since $\phi$ is homothetic, $\diam(\phi(Q_m)) = \lambda m$.
Consequently, since $\phi$ is $\varepsilon/2$-near-spanning, we have $(1-\varepsilon/2)n \leq \lambda m \leq n$, so $\frac{1-\varepsilon/2}{m}n \leq \lambda \leq \frac{1}{m} \cdot n$.

Pick arbitrary $x, y \in Q_m$ such that $d(x, y) \leq s+1$.
Then, \[d(\phi(x), \phi(y)) \leq \lambda \cdot (s+1) \leq \frac{s+1}{m} \cdot n \leq \varepsilon \cdot n.\]
Since $c$ is an $\varepsilon$-gap coloring, this implies $c(\phi(x), \phi(y)) = 0$. 
Thus, $f(i) = 0$ for every $i \leq s+1$.

Pick arbitrary $x, y \in Q_m$ such that $d(x, y) = m-s+1$.
Then, \[d(\phi(x), \phi(y)) = \lambda \cdot (m-s+1) \geq \frac{1-\varepsilon/2}{m} \cdot n \cdot (1 - \varepsilon/2) \cdot m \geq (1 - \varepsilon) \cdot n.\]
Since $c$ is an $\varepsilon$-gap coloring, this implies $c(\phi(x), \phi(y)) = 1$. 
Thus, $f(m-s+1) = 1$.

Let $\tau \in \{0, \ldots, m\}$ be minimal such that $f(\tau) = 1$; in particular $\tau\geq s+2$ and $\tau \leq m-s+1$.
We show that $\phi(e_1),\ldots,\phi(e_{s})$ is shattered, where $e_i = \mathbbm{1}_{\{i\}}$ is the $i$-th unit vector.
For this, we construct, for each $J\subseteq[s]$, a witness at distance $\tau$ from $e_i$ exactly when $i\in J$, and at distance $\tau-2$ otherwise.
Let $J\subseteq[s]$.
Let $S_J = ([s] \setminus J) \cup \{s + i : i \in J\} \cup \llbracket2s+1, s+\tau-1\rrbracket$ and let $x_J = \mathbbm{1}_{S_J}$.
The set $[s]\setminus J$ ensures that $x_J$ contains precisely the basis coordinates indexed outside $J$, while the shifted coordinates $s+i$ compensate for those removed coordinates and hence keep the weight unchanged; the final interval then pads the support so that $|S_J|=\tau-1$.
Note that this is possible and that $|S_J| = \tau-1$ since $\tau\geq s+2$ and $\tau \leq m-s+1$, so $2s+1 \leq s+\tau-1 \leq m$.
Thus, $d(x_J, e_i) = \tau$ for every $i \in J$ and $d(x_J, e_i) = \tau-2$ for every $i \notin J$.
Therefore, for every $i \in J$, we have $c(\phi(x_J), \phi(e_i)) = f(\tau) = 1$, and for every $i \notin J$, we have $c(\phi(x_J), \phi(e_i)) = f(\tau-2) = 0$ by minimality of $\tau$.
This proves that $c$ has VC-dimension at least $s$, which concludes the proof.
\end{proof}

\subsection{From homothetic copies to balanced partitions}

This section gives some intuition on the relation between homothetic maps and balanced partitions. It will not be used in the rest of the proof but helps to explain the link between \cref{conj:radial} and \cref{conj:hdr}.

Given two integers $m \leq n$, we describe the homothetic maps $\phi : Q_m \to Q_n$.
Informally, the next result says that, up to an automorphism of $Q_n$, every homothetic map is just $x \mapsto (\underbrace{x_1,\ldots,x_1}_{\lambda\text{ times}},
    \ldots,
    \underbrace{x_m,\ldots,x_m}_{\lambda\text{ times}},
    \underbrace{0,\ldots,0}_{n-\lambda m\text{ times}}).$

\begin{lemma}\label{lem:form-homothetic-map}
    Let $m \leq n$ and let $\phi : Q_m \to Q_n$ be a map.
    Then, $\phi$ is homothetic with ratio $\lambda > 0$ if and only if there exist pairwise disjoint sets $U_0, \ldots, U_m$ with union $[n]$ such that $|U_1| = \ldots = |U_m| = \lambda$ and a vector $z \in Q_n$ such that for every $x \in Q_m$ and every coordinate $t \in [n]$, we have $\phi(x)_t = z_t$ if $t \in U_0$ and $\phi(x)_t = z_t \oplus x_i$ if $t \in U_i$ for $i \geq 1$.
\end{lemma}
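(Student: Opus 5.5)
The ``if'' direction is a direct computation. For such a $\phi$ and $x,x'\in Q_m$, the coordinates in $U_0$ contribute nothing to $d(\phi(x),\phi(x'))$. A coordinate $t\in U_i$ contributes $(z_t\oplus x_i)\oplus(z_t\oplus x'_i)=x_i\oplus x'_i$. Summing gives $d(\phi(x),\phi(x'))=\lambda\, d(x,x')$.

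For the ``only if'' direction, set $z=\phi(0)$ and replace $\phi$ by $\psi(x)=\phi(x)\oplus z$. Translation by $z$ is an isometry of $Q_n$, so $\psi$ is again homothetic with ratio $\lambda$, and $\psi(0)=0$. For each $i\in[m]$, let $U_i=\operatorname{supp}\psi(e_i)$. Then $|U_i|=d(\psi(e_i),\psi(0))=\lambda$.

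Next I show the $U_i$ are pairwise disjoint. Since $d(\psi(e_i),\psi(e_j))=2\lambda$ and $|U_i\triangle U_j|=2\lambda-2|U_i\cap U_j|$, we get $U_i\cap U_j=\emptyset$. Put $U_0=[n]\setminus\bigcup_{i\ge1}U_i$.

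The main step, and the expected obstacle, is showing $\psi(x)=\mathbbm 1_{\bigcup_{i\in S}U_i}$ for $x=\mathbbm 1_S$. I would use the equality case of the triangle inequality in the Hamming cube. The set $A=\operatorname{supp}\psi(x)$ satisfies $|A|=\lambda|S|$, since $d(\psi(x),\psi(0))=\lambda d(x,0)$. For each $i$, $d(\psi(x),\psi(e_i))=\lambda(|S|\mp1)$, with the minus sign when $i\in S$.

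For $i\in S$, we have $|A\triangle U_i|=|A|+|U_i|-2|A\cap U_i|=\lambda|S|+\lambda-2|A\cap U_i|$. Setting this equal to $\lambda(|S|-1)$ gives $|A\cap U_i|=\lambda$, so $U_i\subseteq A$. Hence $A$ contains the disjoint union $\bigcup_{i\in S}U_i$. That union already has size $\lambda|S|=|A|$, so the two sets are equal.

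This gives $\psi(x)_t=x_i$ for $t\in U_i$ and $\psi(x)_t=0$ for $t\in U_0$. Undoing the translation by $z$ yields the stated form of $\phi$. Finally, $m\lambda\le n$ follows automatically from disjointness, and $\lambda$ is a positive integer because it equals $|U_1|$.
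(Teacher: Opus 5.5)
Your proof is correct and follows essentially the same route as the paper. Both normalize at $\phi(0^m)$, take $U_i$ to be the coordinates where $\phi(e_i)$ differs from $\phi(0^m)$, get disjointness from $d(\phi(e_i),\phi(e_j))=2\lambda$, and use the distances from $\phi(x)$ to $\phi(0^m)$ and to $\phi(e_i)$ to show that $\phi(x)$ differs from $\phi(0^m)$ exactly on $\bigcup_{i\in S}U_i$. Your symmetric-difference computation just spells out the equality-case step that the paper states tersely.
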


\begin{proof}
    Suppose first that $\phi$ is homothetic with ratio $\lambda > 0$.
    Let $e_1, \ldots, e_m \in Q_m$ be the unit vectors.
    Set $z = \phi(0^m)$, for every $i \in [m]$ let $U_i$ be the set of coordinates on which $\phi(0^m)$ and $\phi(e_i)$ differ and let $U_0 = [n] \setminus (U_1 \cup \ldots \cup U_m)$.
    For every $i \in [m]$, we have $\lambda = \lambda \cdot d(0^m, e_i) = d(\phi(0^m), \phi(e_i)) = |U_i|$.
    Furthermore, for all distinct $i, j \in [m]$, the vectors $\phi(e_i)$ and $\phi(e_j)$ can only differ in coordinates in $U_i \cup U_j$. Thus, $|U_i \cup U_j| \geq d(\phi(e_i), \phi(e_j)) = \lambda \cdot d(e_i, e_j) = 2\lambda$.
    However, $|U_i \cup U_j| \leq |U_i| + |U_j| = 2\lambda$, so $|U_i \cup U_j| = 2\lambda$, which implies that $U_i$ and $U_j$ are disjoint.
    Thus, $U_0, U_1, \ldots, U_m$ form a partition of $[n]$.
    
    Now, let $x \in Q_m$ and let $S = \{i \in [m] : x_i = 1\}$.
    Note that $d(\phi(0^m), \phi(x)) = \lambda |S|$ and for every $i \in S$ we have $d(\phi(e_i), \phi(x)) = \lambda (|S|-1)$.
    However, the vectors $\phi(0^m)$ and $\phi(e_i)$ differ exactly on the $\lambda$ coordinates in $U_i$.
    Thus, the vectors $\phi(x)$ and $\phi(0^m)$ differ on all the coordinates in $U_i$.
    Since $|\bigcup_{i \in S}U_i| = \lambda |S| = d(\phi(0^m), \phi(x))$, the vectors $\phi(0^m)$ and $\phi(x)$ differ exactly on the coordinates in $\bigcup_{i \in S}U_i$.
    This proves that for every coordinate $t \in [n]$, we have $\phi(x)_t = z_t$ if $t \in U_0$ and $\phi(x)_t = z_t \oplus x_i$ if $t \in U_i$ for $i \geq 1$.
    
    Conversely, suppose that there exist a partition of $[n]$ into sets $U_0, \ldots, U_m$ with $|U_1| = \ldots = |U_m| = \lambda$ and a vector $z \in Q_n$ such that for every $x \in Q_m$ and every coordinate $t \in [n]$, we have $\phi(x)_t = z_t$ if $t \in U_0$ and $\phi(x)_t = z_t \oplus x_i$ if $t \in U_i$ for $i \geq 1$.
    Then, for all $x, y \in Q_m$, the vectors $\phi(x)$ and $\phi(y)$ disagree exactly on the coordinates in $\bigcup_{i : x_i \neq y_i}U_i$.
    Thus, $d(\phi(x), \phi(y)) = |\bigcup_{i : x_i \neq y_i}U_i| = \lambda \cdot d(x, y)$, as desired.
\end{proof}

The classification is included for context; the subsequent construction uses only the balanced block maps described below.
We take $|U_0|=|U_1|=\cdots=|U_m|=\lambda$, with $\lambda$ even, and choose $z$ so that it has $\lambda/2$ zeros and $\lambda/2$ ones inside every $U_i$. 
Splitting each $U_i$ according to the value of $z$ produces $2m+2$ equal-sized blocks.

\subsection{Yet another conjecture}

Pairs of cube vertices naturally determine partitions of the coordinates into four classes with unequal class sizes. 
We therefore need a version of Homogeneous Dual Ramsey in which these proportions are prescribed.
A \emph{profile} is a finite vector of positive rational numbers summing to $1$.
We write $p(\alpha)$ for its number of entries.

Let $\mathcal{A}=(A_1,\ldots,A_s)$ be a partition of $[n]$. 
Number its classes so that $\min(A_1) < \ldots < \min(A_s)$.
We then say that $\mathcal{A}$ is \emph{canonically ordered}.
Equivalently, $\mathcal{A}$ may be represented by the surjective function $\pi\colon [n]\to[s]$ defined by $\pi(i)=j$ if and only if $i\in A_j$.
This surjection is \emph{rigid}, meaning that $\min(\pi^{-1}(1)) < \ldots < \min(\pi^{-1}(s))$.
Similarly, let $\mathcal{A}'=(A'_1,\ldots,A'_t)$ be a canonically ordered partition of $[n]$, represented by rigid surjection $\pi'\colon[n]\to[t]$. 
Then $\mathcal{A}$ refines $\mathcal{A}'$ if and only if there exists a rigid surjection $\psi\colon[s]\to[t]$ such that $\pi'=\psi\circ\pi$. 
Indeed, the function $\psi$ records which parts of $\mathcal{A}$ are merged to form each part of $\mathcal{A}'$.
Importantly, the composition of two rigid surjections is still a rigid surjection.
From now on, we identify partitions with their corresponding rigid surjections.

The \emph{profile} of a canonically ordered partition $\mathcal{A}=(A_1,\ldots,A_s)$ is the vector $\left(\frac{|A_1|}{n},\ldots,\frac{|A_s|}{n}\right)$. 
If this vector is equal to $\alpha=(\alpha_1,\ldots,\alpha_s)$, we call $\mathcal{A}$ an \emph{$\alpha$-partition}. 
Thus, the profile of a partition $\mathcal{A}$ with corresponding rigid surjection $\pi$ is $\left(\frac{|\pi^{-1}(1)|}{n},\ldots,\frac{|\pi^{-1}(s)|}{n}\right)$.

Let $\alpha=(\alpha_1,\ldots,\alpha_s)$ and $\beta=(\beta_1,\ldots,\beta_t)$ be two profiles. 
A \emph{copy} of $\beta$ in $\alpha$ is a rigid surjection $\psi\colon[s]\to[t]$ such that for every $j\in[t]$, we have $\sum_{i\in\psi^{-1}(\{j\})}\alpha_i=\beta_j$. 
The rigid surjection $\psi$ describes how to merge the parts of an $\alpha$-partition to obtain a $\beta$-partition.
More precisely, if $\pi\colon[n]\to[s]$ is an $\alpha$-partition and $\psi : [s] \to [t]$ is a copy of $\beta$ in $\alpha$, then $\psi\circ\pi\colon[n]\to[t]$ is a $\beta$-partition. 
Conversely, every coarsening of an $\alpha$-partition with profile $\beta$ arises via a copy of $\beta$ in $\alpha$.
Therefore, there exists a copy of $\beta$ in $\alpha$ if and only if every $\alpha$-partition admits a coarsening that is a $\beta$-partition. In this case, we say that $\alpha$ \emph{refines} $\beta$.

We now state a conjecture that generalizes the Homogeneous Dual Ramsey Conjecture, and we will later show that it actually follows from the Homogeneous Dual Ramsey Conjecture.

\begin{conjecture}[Multi-Profile Dual Ramsey]\label[conjecture]{conj:mpdr}
Let $\alpha^{(1)},\ldots,\alpha^{(s)}$ and $\beta$ be profiles, and suppose that $\beta$ refines $\alpha^{(i)}$ for every $i\in[s]$.
Then, for every $r\in\mathbb{N}$, there exists an integer $N$ with the following property.

For each $i\in[s]$, let $c_i$ be an $r$-coloring of the $\alpha^{(i)}$-partitions of $[N]$.
Then there exists a $\beta$-partition $\pi$ of $[N]$ such that, for every $i\in[s]$, all the $\alpha^{(i)}$-partitions obtained by coarsening $\pi$ have the same color under $c_i$.

Equivalently, for every $i\in[s]$, the color $c_i(\psi\circ\pi)$ is independent of the choice of copy $\psi$ of $\alpha^{(i)}$ in $\beta$. However, this color may depend on the profile index $i$.
\end{conjecture}

For an integer $m$, we denote by $\unif{m}$ the \emph{homogeneous profile} $\left(\frac{1}{m}, \ldots, \frac{1}{m}\right)$. In other words, an $\unif{m}$-partition is an $m$-equipartition.
Observe that the Homogeneous Dual Ramsey Conjecture is simply the Multi-Profile Dual Ramsey Conjecture with $s = 1$ and the profiles $\alpha = \unif{k}$ and $\beta = \unif{m}$.

\subsection{Homogeneous Dual Ramsey implies Multi-Profile Dual Ramsey}\label{subsec:hdr-implies-mpdr}

We first prove an amalgamation lemma, which we then use to deduce the single-profile case of the Multi-Profile Dual Ramsey Conjecture from the Homogeneous Dual Ramsey Conjecture. It is then straightforward to establish the Multi-Profile Dual Ramsey Conjecture from its single-profile case.

Our argument gives an explicit rational-profile formulation of the method of Kechris, Sokić and Todorčević \cite{KST17}. 
The Amalgamation Lemma (\cref{lem:common-refinement}) follows the construction in the proof of their Theorem 3.1. The deduction of Profile Dual Ramsey (\cref{lem:hdr-implies-pdr}) follows their cofinal-subclass transfer, Proposition 5.3, including the iterative amalgamation in Lemma 5.4.

\begin{lemma}[Amalgamation Lemma]\label[lemma]{lem:common-refinement}
Let $\alpha, \beta, \gamma$ be profiles such that $\beta$ and $\gamma$ both refine $\alpha$.
Let $\phi_\alpha$ be a copy of $\alpha$ in $\beta$ and $\psi_\alpha$ be a copy of $\alpha$ in $\gamma$.
Then, there exists a profile $\delta$ which refines both $\beta$ and $\gamma$, a copy $\phi_\beta$ of $\beta$ in $\delta$ and a copy $\psi_{\gamma}$ of $\gamma$ in $\delta$ such that $\phi_{\alpha} \circ \phi_{\beta} = \psi_{\alpha} \circ \psi_{\gamma}$. See \cref{fig:commutative-diagram} for an illustration of the statement.
\end{lemma}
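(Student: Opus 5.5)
We build $\delta$ as a fibered product (pullback) of the two merging maps, refining each part of $\alpha$ by the common refinement of the corresponding parts of $\beta$ and $\gamma$. Write $\beta=(\beta_1,\ldots,\beta_t)$, $\gamma=(\gamma_1,\ldots,\gamma_u)$, $\alpha=(\alpha_1,\ldots,\alpha_s)$. For each $a\in[s]$, the fibre $\phi_\alpha^{-1}(a)\subseteq[t]$ is a set of parts of $\beta$ with total mass $\alpha_a$, and similarly $\psi_\alpha^{-1}(a)\subseteq[u]$ for $\gamma$.

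Inside each fibre, split the mass $\alpha_a$ proportionally. For every pair $(b,c)\in\phi_\alpha^{-1}(a)\times\psi_\alpha^{-1}(a)$, set
\[
\delta_{(b,c)}=\frac{\beta_b\gamma_c}{\alpha_a}.
\]
These numbers are positive rationals. Summing over $c$ gives $\beta_b$, summing over $b$ gives $\gamma_c$, and summing over both gives $\alpha_a$, so the total over all pairs is $1$. The natural projections $(b,c)\mapsto b$ and $(b,c)\mapsto c$ then preserve mass correctly. Both composites send $(b,c)$ to $a=\phi_\alpha(b)=\psi_\alpha(c)$, so they agree. It remains to index the pairs by $[p]$ so that both projections become rigid surjections.

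The main obstacle is this rigidity requirement. We need an ordering of the index set $P=\{(b,c):\phi_\alpha(b)=\psi_\alpha(c)\}$ such that the first occurrences of the values of each projection appear in increasing order. Every $b$ and every $c$ does occur, since each fibre is nonempty by surjectivity.

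To achieve this, order $P$ lexicographically by the key $(\min(b,c'),\ldots)$, interpreted as follows. Build a concrete model: take $n$ large enough that the profiles become integer sizes. Fix a $\beta$-partition $B$ and a $\gamma$-partition $C$ of $[n]$ whose coarsenings $\phi_\alpha\circ B$ and $\psi_\alpha\circ C$ coincide as partitions of $[n]$, and such that within each $\alpha$-block the $\beta$- and $\gamma$-parts meet in the proportional sizes above. Let $\delta$ be the profile of the canonically ordered common refinement $B\wedge C$.

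Canonical ordering then automatically yields rigid surjections onto $B$ and $C$, because the first element of a $B$-part is the first element of some cell of $B\wedge C$. The same holds for $C$. The maps induced on the profiles are copies, and the commutation $\phi_\alpha\circ\phi_\beta=\psi_\alpha\circ\psi_\gamma$ holds because both equal the coarsening of $B\wedge C$ onto the common $\alpha$-partition.

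The remaining work is to realize $B$ and $C$ concretely. Choose $n$ divisible by all denominators. Start from a canonically ordered $\alpha$-partition $A$, build $B$ as a partition refining $A$ via $\phi_\alpha$ (possible because $\beta$ refines $\alpha$), and then subdivide each $B$-part $b$ into consecutive-in-order chunks of sizes $n\beta_b\gamma_c/\alpha_a$. Assign these chunks to the $\gamma$-parts $c$ so that $C$ is coarsened by $\psi_\alpha$ onto $A$ and is canonically ordered consistently with $\psi_\alpha$.

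Rigidity of $\psi_\alpha$ must hold when we read $C$'s parts in the order of their first elements. We will arrange this by placing, inside the first $B$-part of each $A$-block, the chunks for the $c$'s in increasing order at its start. Then the $c$'s of a given $A$-block first appear in increasing order, and across blocks the order follows the $A$-order, which is compatible with the rigidity of $\psi_\alpha$. Checking that this chunk ordering gives $C$ the labelling matching $\gamma$, up to the rigid relabelling, is the fiddly step we expect to take the most care.
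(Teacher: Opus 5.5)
There is a genuine gap, located in the rigidity step, which is exactly the step you defer as ``fiddly''.

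Your masses $\delta_{(b,c)}=\beta_b\gamma_c/\alpha_a$, the mass checks and the commutation $\phi_\alpha\circ\phi_\beta=\psi_\alpha\circ\psi_\gamma$ are exactly the paper's. You are also right that rigidity is automatic for the maps induced by a canonically ordered common refinement $B\wedge C$. But that reduction only relocates the difficulty. Finding a $\beta$-partition $B$ and a $\gamma$-partition $C$ of $[n]$, each labelled in canonical order, with $\phi_\alpha\circ B=\psi_\alpha\circ C$, is equivalent to the lemma itself (given the lemma, coarsen any $\delta$-partition). Your construction of such $B$ and $C$ does not work, for two reasons.

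\textbf{Refining an arbitrary $A$ can fail.} It is not true that an arbitrary canonically ordered $\alpha$-partition $A$ can be refined via $\phi_\alpha$. Take $\alpha=(\tfrac12,\tfrac12)$, $\beta=(\tfrac14,\tfrac12,\tfrac14)$, $n$ divisible by $4$, and the rigid surjection $\phi_\alpha\colon[3]\to[2]$ with values $(1,2,1)$. If $A_1=\{1,\dots,n/2\}$, then $B_3\subseteq A_1$ forces $\min B_3<\min A_2=\min B_2$, which violates the canonical ordering.

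\textbf{Your rigidity argument for $C$ is false.} You claim that ``across blocks the order follows the $A$-order, which is compatible with the rigidity of $\psi_\alpha$''. But rigidity of $\psi_\alpha$ only orders the fibre minima $v_a=\min\psi_\alpha^{-1}(a)$. If the first occurrences of the $\gamma$-parts followed the $A$-order, they would be sorted by $\psi_\alpha$-value first. That agrees with $1,\dots,p(\gamma)$ only when $\psi_\alpha$ is non-decreasing. With $\psi_\alpha$ having values $(1,2,1)$, the first element of $C_3\subseteq A_1$ must come \emph{after} the first element of $C_2\subseteq A_2$.

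So first occurrences from different $\alpha$-blocks must interleave, on the $\beta$-side and the $\gamma$-side simultaneously, inside the same positions of $[n]$. Arranging this is the whole content of the lemma. Relabelling $C$ in canonical order does not help: it changes both the profile and the map to $\alpha$, whereas you need a copy of exactly $\gamma$ lying over exactly $\psi_\alpha$. Your earlier ``lexicographic key'' is also never actually specified.

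\textbf{The missing idea is to merge two chains of cells.} Write $u_a=\min\phi_\alpha^{-1}(a)$. It suffices to order the cells so that:
\begin{itemize}
\item the cells $(j,v_{\phi_\alpha(j)})$, $j\in[p(\beta)]$, appear in increasing order of $j$;
\item the cells $(u_{\psi_\alpha(k)},k)$, $k\in[p(\gamma)]$, appear in increasing order of $k$;
\item all other cells come afterwards.
\end{itemize}
Then $(j,v_{\phi_\alpha(j)})$ is the first cell lying over $j$, and $(u_{\psi_\alpha(k)},k)$ is the first cell lying over $k$. The two chains meet exactly in the cells $(u_a,v_a)$. Both chains list these in increasing order of $a$, by rigidity of $\phi_\alpha$ and of $\psi_\alpha$. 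Hence the union of the two orders is acyclic and admits a common linear extension.

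This is the paper's argument. With this order in hand, you could also finish in your concrete model: lay the cells out as consecutive intervals of $[n]$ in that order, and then apply your automatic-rigidity observation.
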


\begin{figure}[ht]
    \centering
    \includegraphics[height=12\baselineskip]{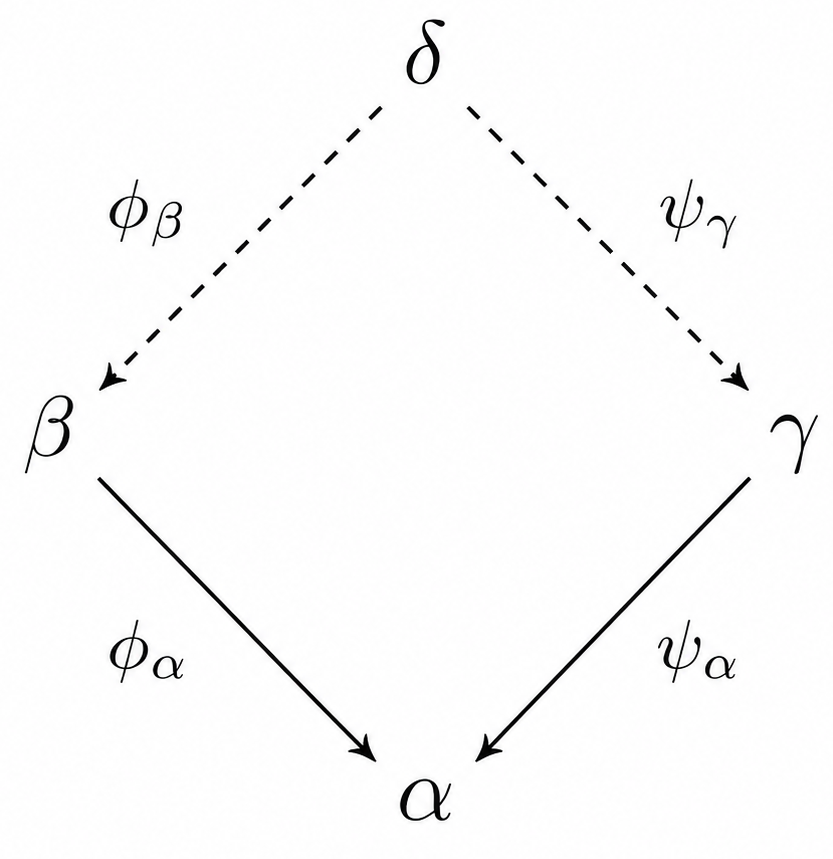}
    \caption{An illustration of the statement of \cref{lem:common-refinement}. The solid arrows are the given copies, the dashed arrows are constructed, and the diagram commutes.}
    \label{fig:commutative-diagram}
\end{figure}

\begin{proof}
Write $\alpha = (a_1, \ldots, a_{p(\alpha)})$, $\beta = (b_1, \ldots, b_{p(\beta)})$ and $\gamma = (c_1, \ldots, c_{p(\gamma)})$.
Thus, we have $\phi_{\alpha} : [p(\beta)] \to [p(\alpha)]$ and $\psi_{\alpha} : [p(\gamma)] \to [p(\alpha)]$.
We first distribute mass independently within each coarse part. 
We then order the resulting pieces so that the two projections preserve first occurrences.

For every $i \in [p(\alpha)]$, we have $\sum_{j \in \phi_{\alpha}^{-1}(i)}b_j=a_i$ and $\sum_{k\in \psi_{\alpha}^{-1}(i)}c_k=a_i$.
For every $i \in [p(\alpha)]$, let $S_i = \{(j, k) \in [p(\beta)] \times [p(\gamma)] : \phi_{\alpha}(j) = i = \psi_{\alpha}(k)\}$. Let $S = \bigcup_{i \in [p(\alpha)]} S_i$.
For every $i \in [p(\alpha)]$ and all $(j, k) \in S_i$, set
$\delta_{j,k}=\frac{b_j c_k}{a_i} > 0$.
Note that \begin{align*}
    \sum_{(j, k) \in S}\delta_{j,k} &= \sum_{i \in [p(\alpha)]}\sum_{(j, k) \in S_i}\delta_{j,k} = \sum_{i \in [p(\alpha)]} \sum_{j \in \phi_{\alpha}^{-1}(i)}\sum_{k \in \psi_{\alpha}^{-1}(i)}\frac{b_j c_k}{a_i} \\
    &= \sum_{i \in [p(\alpha)]} \frac{1}{a_i}\left(\sum_{j \in \phi_{\alpha}^{-1}(i)}b_j\right)\left(\sum_{k \in \psi_{\alpha}^{-1}(i)} c_k\right) = \sum_{i \in [p(\alpha)]}\frac{a_i^2}{a_i} =  \sum_{i \in [p(\alpha)]}a_i = 1.
\end{align*}

For every $i \in [p(\alpha)]$, let $u_i = \min(\phi_{\alpha}^{-1}(i)) \in [p(\beta)]$ and $v_i = \min(\psi_{\alpha}^{-1}(i)) \in [p(\gamma)]$.
Since $\phi_{\alpha}$ and $\psi_{\alpha}$ are rigid surjections, we have $u_1 < \ldots < u_{p(\alpha)}$ and $v_1 < \ldots < v_{p(\alpha)}$.

Let $S_{\beta} = \{(j, v_{\phi_{\alpha}(j)}) : j \in [p(\beta)]\}$ and $S_{\gamma} = \{(u_{\psi_{\alpha}(k)}, k) : k \in [p(\gamma)]\}$.
Note that $S_{\beta}, S_{\gamma} \subseteq S$.
Order $S_{\beta}$ as $(1, v_{\phi_{\alpha}(1)}) \prec_{\beta} (2, v_{\phi_{\alpha}(2)}) \prec_{\beta} \ldots \prec_{\beta} (p(\beta), v_{\phi_{\alpha}(p(\beta))})$ and order $S_{\gamma}$ as $( u_{\psi_{\alpha}(1)}, 1) \prec_{\gamma} (u_{\psi_{\alpha}(2)}, 2) \prec_{\gamma} \ldots \prec_{\gamma} (u_{\psi_{\alpha}(p(\gamma))}, p(\gamma))$.
Note that $S_{\beta}$ and $S_{\gamma}$ intersect exactly on the elements $(u_i, v_i)$ for $i \in [p(\alpha)]$, and these elements are ordered by increasing $i$ both in $\prec_{\beta}$ and $\prec_{\gamma}$.
Thus, there exists a linear order $\prec$ on $S_{\beta} \cup S_{\gamma}$ that extends $\prec_{\beta}$ and $\prec_{\gamma}$.
Extend $\prec$ to $S$ by adding the elements of $S \setminus (S_{\beta} \cup S_{\gamma})$ at the end arbitrarily.
Enumerate the elements of $S$ as $s_1 \prec \ldots \prec s_{p(\delta)}$.
For every $\ell \in [p(\delta)]$, if $s_\ell = (j, k)$, let $d_\ell = \delta_{j, k}$.
Let $\delta$ be the profile $(d_1, \ldots, d_{p(\delta)})$.
For every $\ell \in [p(\delta)]$, if $s_\ell = (j, k)$, let $\phi_{\beta}(\ell) = j$ and $\psi_{\gamma}(\ell) = k$.

We show that $\phi_{\beta}$ is a copy of $\beta$ in $\delta$; the proof that $\psi_{\gamma}$ is a copy of $\gamma$ in $\delta$ is similar.
First, let $j \in [p(\beta)]$ and let $i = \phi_{\alpha}(j)$. 
Then, \[\sum_{\ell \in \phi_{\beta}^{-1}(j)}d_{\ell} = \sum_{k \in \psi_{\alpha}^{-1}(i)}\delta_{j, k} = \sum_{k \in \psi_{\alpha}^{-1}(i)}\frac{b_j c_k}{a_i} = \frac{b_j}{a_i} \cdot \sum_{k \in \psi_{\alpha}^{-1}(i)}c_k = b_j.\]

We now prove that $\phi_{\beta}$ is rigid.
For each $j \in [p(\beta)]$, let $\ell_j \in [p(\delta)]$ be the only index such that $s_{\ell_j} = (j, v_{\phi_{\alpha}(j)})$, and note that $s_{\ell_j} \in S_{\beta}$.
We first argue that $\ell_j = \min(\phi_{\beta}^{-1}(j))$.
Indeed, let $\ell \in \phi_{\beta}^{-1}(j)$ and write $s_{\ell} = (j, k)$.
If $s_{\ell} \notin S_{\beta} \cup S_{\gamma}$ then $s_{\ell_j} \prec s_{\ell}$ by definition of $\prec$ on $S$, so $\ell_j \leq \ell$.
If $s_{\ell} \in S_{\beta}$ then $s_{\ell} = (j, v_{\phi_{\alpha}(j)}) = s_{\ell_j}$ so $\ell = \ell_j$.
Finally, suppose that $s_{\ell} \in S_{\gamma}$ and let $i = \phi_{\alpha}(j) = \psi_{\alpha}(k)$.
Since $s_{\ell} \in S_{\gamma}$, we have $j = u_i$.
Moreover, $v_i \leq k$ so $s_{\ell_j} = (j, v_i) = (u_i, v_i) \preceq_{\gamma} (u_i, k) = (j, k) = s_{\ell}$.
Therefore, $s_{\ell_j} \preceq s_{\ell}$ so $\ell_j \leq \ell$.
This concludes the proof that $\ell_j = \min(\phi_{\beta}^{-1}(j))$.

Now, if $j, j' \in [p(\beta)]$ and $j < j'$ then $s_{\ell_j} = (j, v_{\phi_{\alpha}(j)}) \prec_{\beta} (j', v_{\phi_{\alpha}(j')}) = s_{\ell_{j'}}$ so $s_{\ell_j} \prec s_{\ell_{j'}}$, and thus $\ell_j < \ell_{j'}$.
This concludes the proof that $\phi_{\beta}$ is rigid.
Moreover, for every $j \in [p(\beta)]$, we have $\phi_{\beta}(\ell_j) = j$, so $\phi_{\beta} : [p(\delta)] \to [p(\beta)]$ is a rigid surjection, hence a copy of $\beta$ in $\delta$.

Finally, we argue that $\phi_{\alpha} \circ \phi_{\beta} = \psi_{\alpha} \circ \psi_{\gamma}$.
For this, consider an arbitrary $\ell \in [p(\delta)]$, and let $(j, k) = s_\ell \in S$. Let $i \in [p(\alpha)]$ such that $(j, k) \in S_i$, so $\phi_{\alpha}(j) = i = \psi_{\alpha}(k)$.
Then, by definition of $\phi_{\beta}$ and $\psi_{\gamma}$, we have \[\phi_{\alpha} \circ \phi_{\beta}(\ell) = \phi_{\alpha}(j) = i = \psi_{\alpha}(k) = \psi_{\alpha} \circ \psi_{\gamma} (\ell).\]
\end{proof}

\begin{observation}\label[observation]{obs:canonical-coarsening}
Let $\alpha=(\alpha_1,\ldots,\alpha_k)$ be a profile. 
There exists $N\in\mathbb N$ such that $\unif{N}$ refines $\alpha$. 
Indeed, one may take any $N$ for which $N\alpha_i\in\mathbb N$ for every $i\in[k]$.

There is a canonical copy of $\alpha$ in $\unif{N}$, obtained by merging consecutive parts into blocks of sizes $N\alpha_1,\ldots,N\alpha_k$. 
We denote this copy by $\kappa : [N] \to [k]$. 
\end{observation}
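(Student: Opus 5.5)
The observation has two parts: the existence of $N$ such that $\unif{N}$ refines $\alpha$, and a specific copy $\kappa$ realizing it. I will prove both together by writing $\kappa$ down and checking that it is a copy of $\alpha$ in $\unif{N}$.

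\emph{Choice of $N$.} Since $\alpha$ is a profile, each $\alpha_i$ is a positive rational. Take $N$ to be a common multiple of the denominators of $\alpha_1,\ldots,\alpha_k$, for instance their least common multiple. Then $n_i \coloneqq N\alpha_i$ is a positive integer for every $i\in[k]$. Because $\sum_i \alpha_i = 1$, we get $\sum_i n_i = N$.

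\emph{Definition of $\kappa$.} Set $N_0=0$ and $N_i = n_1+\cdots+n_i$, so that $N_k=N$. Define $\kappa:[N]\to[k]$ by $\kappa(t)=i$ whenever $N_{i-1} < t \leq N_i$. In words, $\kappa$ merges the $N$ consecutive singleton parts of $\unif{N}$ into consecutive blocks of sizes $n_1,\ldots,n_k$.

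\emph{Verification.} Three things must be checked.
\begin{itemize}
\item Surjectivity: each $n_i\geq 1$, so every block is nonempty and every $i\in[k]$ is hit.
\item Rigidity: $\min(\kappa^{-1}(i)) = N_{i-1}+1$, and these values strictly increase in $i$.
\item Mass condition: every part of $\unif{N}$ has mass $1/N$, so
\[
\sum_{t\in\kappa^{-1}(i)} \frac1N = \frac{n_i}{N} = \alpha_i .
\]
\end{itemize}
Hence $\kappa$ is a copy of $\alpha$ in $\unif{N}$, and by definition $\unif{N}$ refines $\alpha$.

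None of these steps is a real obstacle. The only point needing care is that rigidity requires the blocks to be consecutive intervals in increasing order. The canonical choice of $\kappa$ satisfies this automatically.
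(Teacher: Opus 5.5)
Your proof is correct and follows the route the paper indicates: take $N$ with $N\alpha_i\in\mathbb N$ for all $i$, and let $\kappa$ merge consecutive parts of $\unif{N}$ into intervals of sizes $N\alpha_1,\ldots,N\alpha_k$. You simply write out the surjectivity, rigidity, and mass checks that the paper leaves implicit, and these checks hold for every admissible $N$, not only the least common multiple.
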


We can now prove that the single-profile case of the Multi-Profile Dual Ramsey Conjecture follows from the Homogeneous Dual Ramsey Conjecture.

\begin{conjecture}[Profile Dual Ramsey] \label{conj:pdr}
Let $\alpha$ and $\beta$ be profiles such that $\beta$ refines $\alpha$.
Then, for every $r\in\mathbb{N}$, there exists an integer $N$ with the following property.

Let $c$ be an $r$-coloring of the $\alpha$-partitions of $[N]$.
Then there exists a $\beta$-partition $\pi$ of $[N]$ such that all the $\alpha$-partitions obtained by coarsening $\pi$ have the same color under $c$.

Equivalently, the color $c(\psi\circ\pi)$ is independent of the choice of copy $\psi$ of $\alpha$ in $\beta$.
\end{conjecture}

\begin{lemma}\label{lem:hdr-implies-pdr}
    The Homogeneous Dual Ramsey Conjecture implies the Profile Dual Ramsey Conjecture.
\end{lemma}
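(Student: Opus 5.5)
Given profiles $\alpha$ and $\beta$ with $\beta$ refining $\alpha$, and a number of colors $r$, the plan is to reduce to the homogeneous case. The reduction embeds the finitely many copies of $\alpha$ in $\beta$ into an equipartition setting, handles one copy at a time, and amalgamates the resulting refinements using \cref{lem:common-refinement}.

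\textbf{Step 1: replace $\beta$ by an equipartition.} By \cref{obs:canonical-coarsening}, choose $M$ with $\unif{M}$ refining $\beta$, and let $\kappa\colon[M]\to[p(\beta)]$ be the canonical copy. For every copy $\psi$ of $\alpha$ in $\beta$, the map $\psi\circ\kappa$ is a copy of $\alpha$ in $\unif{M}$. It therefore suffices to find an $M$-equipartition $\pi$ of $[N]$ on which $c(\theta\circ\pi)$ is constant as $\theta$ ranges over all copies of $\alpha$ in $\unif{M}$. Taking $\kappa\circ\pi$ then gives the required $\beta$-partition, because $\psi\circ\kappa$ ranges over a subset of these $\theta$.

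\textbf{Step 2: reduce to homogeneous coarse profiles.} Choose $k$ with $\unif{k}$ refining $\alpha$, and let $\kappa_\alpha\colon[k]\to[p(\alpha)]$ be the canonical copy. Every $\alpha$-partition of $[N]$ is $\kappa_\alpha\circ\sigma$ for some $k$-equipartition $\sigma$, so $c$ pulls back to an $r$-coloring $c'(\sigma)=c(\kappa_\alpha\circ\sigma)$ of $k$-equipartitions. The difficulty is that an arbitrary copy $\theta$ of $\alpha$ in $\unif{M}$ need not factor as $\kappa_\alpha\circ\rho$ with $\rho$ a copy of $\unif{k}$ in $\unif{M}$: rigidity and the ordering of pieces can fail.

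To repair this, I apply \cref{lem:common-refinement} with the base profile $\alpha$, the copy $\theta$ of $\alpha$ in $\unif{M}$, and the copy $\kappa_\alpha$ of $\alpha$ in $\unif{k}$. This produces a profile $\delta_\theta$ refining both, together with copies $\phi_\theta\colon[p(\delta_\theta)]\to[M]$ and $\psi_\theta\colon[p(\delta_\theta)]\to[k]$ satisfying $\theta\circ\phi_\theta=\kappa_\alpha\circ\psi_\theta$. The copies $\theta$ are finitely many, so I iterate the amalgamation over them, in the style of KST's Lemma~5.4. At each stage the previous common refinement is amalgamated with the new $\delta_\theta$ over $\unif{M}$, and compositions of rigid surjections stay rigid. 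The result is a single profile $\delta$ with a copy $\Phi$ of $\unif{M}$ in $\delta$ such that for every $\theta$, the composite $\theta\circ\Phi$ factors through $\kappa_\alpha$ via some copy $\rho_\theta$ of $\unif{k}$ in $\delta$. Finally, I refine $\delta$ further to some $\unif{L}$ by \cref{obs:canonical-coarsening}, with $k\mid L$ after enlarging $L$ if needed.

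\textbf{Step 3: apply HDR and verify.} Apply the Homogeneous Dual Ramsey Conjecture to $k$, $L$ and the coloring $c'$. This yields an $L$-equipartition $\pi_0$ all of whose $k$-equipartition coarsenings share one color. Set $\pi=\Phi\circ\kappa_\delta\circ\pi_0$, where $\kappa_\delta\colon[L]\to[p(\delta)]$ is the canonical copy. For each $\theta$ we then have
\[
c(\theta\circ\pi)=c(\kappa_\alpha\circ\rho_\theta\circ\kappa_\delta\circ\pi_0)=c'(\rho_\theta\circ\kappa_\delta\circ\pi_0).
\]
Since $\rho_\theta\circ\kappa_\delta\circ\pi_0$ is a $k$-equipartition coarsening $\pi_0$, this color is the same for all $\theta$, as required.

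The main obstacle is the iterated amalgamation in Step 2. Each step of the iteration must preserve both the previously obtained factorizations and rigidity. I expect this to work because precomposing a factorization $\theta\circ\Phi'=\kappa_\alpha\circ\rho'$ with a further copy keeps it a factorization, but the bookkeeping of which side the next amalgamation is taken over needs care.
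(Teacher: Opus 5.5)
Your proposal is correct and takes essentially the paper's route. You use iterated amalgamation so that every relevant $\alpha$-coarsening factors through the canonical copy $\kappa_\alpha$ via a copy of a homogeneous profile, refine to some $\unif{L}$, and apply HDR to the pulled-back coloring $\sigma\mapsto c(\kappa_\alpha\circ\sigma)$. Your Step~1 reduction to $\unif{M}$ and the two amalgamations per copy are harmless detours: the paper iterates only over the copies $\psi_j$ of $\alpha$ in $\beta$ and amalgamates $\psi_j\circ\phi_{j-1}$ directly with the canonical copy.

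One remark in Step~2 is false but unused. Not every $\alpha$-partition of $[N]$ is $\kappa_\alpha\circ\sigma$ for a rigid $k$-equipartition $\sigma$: for $\alpha=(1/2,1/4,1/4)$ and $k=4$, no $\alpha$-partition with $2$ in its second part has this form. This is the same rigidity obstruction you point out next, and Step~3 only uses the definition of $c'$.
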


\begin{proof}
We construct a refinement in which every $\alpha$-coarsening of the distinguished copy of $\beta$ factors through a fixed canonical coarsening of a homogeneous profile.  
We then apply Homogeneous Dual Ramsey to canonicalize all such $\unif{D}$-copies simultaneously.

By \cref{obs:canonical-coarsening}, there exists $D \in \mathbb{N}$ such that $\unif{D}$ refines $\alpha$.
Let $\kappa$ denote the canonical copy of $\alpha$ in $\unif{D}$ given by coarsening.

There are only finitely many copies of $\alpha$ in $\beta$; enumerate them as $\psi_1,\ldots,\psi_t$. 
We first construct a profile $\gamma$, a copy $\phi$ of $\beta$ in $\gamma$, and copies $\chi_1,\ldots,\chi_t$ of $\unif{D}$ in $\gamma$ such that \[\psi_i\circ\phi=\kappa\circ\chi_i, \quad \text{for every $i\in[t]$}.\] 
In other words, every copy of $\alpha$ in the distinguished copy $\phi$ of $\beta$ in $\gamma$ is the canonical copy of $\alpha$ in some copy of $\unif{D}$ in $\gamma$.
We impose these identities one at a time by amalgamation. 
At each step, composing all previously constructed maps with the new refinement preserves the identities already obtained.

Set $\gamma_0=\beta$ and let $\phi_0$ be the identity copy of $\beta$ in $\gamma_0$. 
Suppose that, for some $j\in[t]$, we have constructed a profile $\gamma_{j-1}$, a copy $\phi_{j-1}$ of $\beta$ in $\gamma_{j-1}$, and for each $i<j$, a copy $\chi_i^{j-1}$ of $\unif{D}$ in $\gamma_{j-1}$ such that $\psi_i\circ\phi_{j-1}=\kappa\circ\chi_i^{j-1}$.

Then, $\psi_j\circ\phi_{j-1}$ is a copy of $\alpha$ in $\gamma_{j-1}$, while $\kappa$ is a copy of $\alpha$ in $\unif{D}$. 
By the Amalgamation Lemma (\cref{lem:common-refinement}), there exist a profile $\gamma_j$, a copy $\theta_j$ of $\gamma_{j-1}$ in $\gamma_j$, and a copy $\chi_j^j$ of $\unif{D}$ in $\gamma_j$ such that $\psi_j\circ\phi_{j-1}\circ\theta_j=\kappa\circ\chi_j^j$.

Set $\phi_j=\phi_{j-1}\circ\theta_j$. 
For each $i<j$, set $\chi_i^j=\chi_i^{j-1}\circ\theta_j$. 
Then $\phi_j$ is a copy of $\beta$ in $\gamma_j$, and, for every $i<j$, $\chi_i^j$ is a copy of $\unif{D}$ in $\gamma_j$ such that $\psi_i\circ\phi_j=\psi_i\circ\phi_{j-1}\circ\theta_j=\kappa\circ\chi_i^{j-1}\circ\theta_j=\kappa\circ\chi_i^j$. 
The corresponding identity for $i=j$ follows from the choice of $\chi_j^j$. 
This completes the induction.

Taking $\gamma=\gamma_t$, $\phi=\phi_t$, and $\chi_i=\chi_i^t$ for every $i\in[t]$, we have that $\phi$ is a copy of $\beta$ in $\gamma$, and $\chi_i$ is a copy of $\unif{D}$ in $\gamma$ such that $\psi_i\circ\phi=\kappa\circ\chi_i$ for every $i\in[t]$, as desired.

Choose $M$ divisible by $D$ and by the denominators of all entries of $\gamma$, so that $\unif{M}$ refines both $\gamma$ and $\unif{D}$, and fix a copy $\rho$ of $\gamma$ in $\unif{M}$.
Since $\unif{M}$ refines $\unif{D}$, we have $D \mid M$.
Fix the number of colors $r$, and let $N$ be the integer given by Homogeneous Dual Ramsey Conjecture for $D$, $M$, and $r$.
Then, $N$ is divisible by $M$, and every $r$-coloring of the $\unif{D}$-partitions of $[N]$ admits an $\unif{M}$-partition $\xi\colon[N]\to[M]$ for which all copies of $\unif{D}$ that coarsen $\xi$ have the same color.

Let $c$ be an $r$-coloring of the $\alpha$-partitions of $[N]$. 
Define an $r$-coloring $c^*$ of the $\unif{D}$-partitions of $[N]$ by $c^*(\theta)=c(\kappa \circ \theta)$.

By assumption, there exists an $\unif{M}$-partition $\xi\colon[N]\to[M]$ such that all copies of $\unif{D}$ that coarsen $\xi$ have the same color under $c^*$.
Consider the $\beta$-partition $\pi=\phi\circ\rho\circ\xi$ of $[N]$.

Every $\alpha$-partition that coarsens $\pi$ is of the form $\psi_i \circ \pi = \psi_i\circ\phi\circ\rho\circ\xi$ for some $i\in[t]$. 
By the construction of $\gamma$, we have $\psi_i\circ\phi=\kappa\circ\chi_i$. 
Observe that $\chi_i\circ\rho\circ\xi$ is an $\unif{D}$-partition of $[N]$ coarsening $\xi$, and
$c(\psi_i\circ\pi)=c(\psi_i\circ\phi\circ\rho\circ\xi)=c(\kappa\circ\chi_i\circ\rho\circ\xi)=c^\star(\chi_i\circ\rho\circ\xi)$.

Since all copies of $\unif{D}$ that coarsen $\xi$ have the same color under $c^*$, the right-hand side is independent of $i$. It follows that all $\alpha$-partitions coarsening $\pi$ have the same color under $c$. Thus, $\pi$ is a $\beta$-partition of $[N]$ with the required property.
\end{proof}

The Multi-Profile Dual Ramsey Conjecture then follows quickly from the Profile Dual Ramsey Conjecture by induction on the number of profiles.

\begin{lemma}\label{lem:pdr-implies-mpdr}
    The Profile Dual Ramsey Conjecture implies the Multi-Profile Dual Ramsey Conjecture.
\end{lemma}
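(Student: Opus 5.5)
We argue by induction on the number $s$ of profiles, applying the Profile Dual Ramsey Conjecture once per profile and nesting the resulting integers. The case $s=1$ is exactly the Profile Dual Ramsey Conjecture. For the inductive step, suppose the statement holds for $s-1$ profiles, and let $\alpha^{(1)},\ldots,\alpha^{(s)}$ and $\beta$ be given with $\beta$ refining each $\alpha^{(i)}$.

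The key point is that the conclusion is stable under passing to a finer partition. If $\pi'$ is a $\beta$-partition that coarsens some partition $\pi$, then every $\alpha^{(i)}$-partition coarsening $\pi'$ also coarsens $\pi$, since compositions of rigid surjections are rigid surjections. So homogeneity obtained at a finer level transfers to any coarser $\beta$-partition. To exploit this, first apply the induction hypothesis to $\alpha^{(1)},\ldots,\alpha^{(s-1)}$ and $\beta$ with $r$ colors, which yields an integer $N'$. Let $\beta'=\unif{N'}$, and note that $\beta'$ refines $\beta$ via the canonical copy $\kappa$ of \cref{obs:canonical-coarsening}, since $N'\beta_j\in\mathbb N$ for all $j$. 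As $\beta$ refines $\alpha^{(s)}$ and compositions of copies are copies, $\beta'$ also refines $\alpha^{(s)}$.

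Next, apply the Profile Dual Ramsey Conjecture to $\alpha^{(s)}$ and $\beta'$ with $r$ colors, obtaining $N$. Given colorings $c_1,\ldots,c_s$ of $[N]$, we first obtain a $\beta'$-partition $\xi\colon[N]\to[N']$ all of whose $\alpha^{(s)}$-coarsenings have the same $c_s$-color. We then pull back the remaining colorings along $\xi$: for $i<s$, define $c_i'(\theta)=c_i(\theta\circ\xi)$ on $\alpha^{(i)}$-partitions $\theta$ of $[N']$. This is well defined because $\xi$ is an $N'$-equipartition, so $\theta\circ\xi$ has profile $\alpha^{(i)}$ as a partition of $[N]$ (part sizes scale by $N/N'$). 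By the choice of $N'$, there is a $\beta$-partition $\sigma$ of $[N']$ homogeneous for each $c_i'$ with $i<s$.

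Set $\pi=\sigma\circ\xi$; it is a $\beta$-partition of $[N]$ for the same scaling reason. For $i<s$, every $\alpha^{(i)}$-coarsening of $\pi$ has the form $\psi\circ\sigma\circ\xi$ with $\psi$ a copy of $\alpha^{(i)}$ in $\beta$. Its $c_i$-color equals $c_i'(\psi\circ\sigma)$, which is constant by the choice of $\sigma$. For $i=s$, every $\alpha^{(s)}$-coarsening of $\pi$ is a coarsening of $\xi$, so it has constant $c_s$-color by the choice of $\xi$.

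The only delicate step is the profile bookkeeping under composition with an equipartition: that $\theta\circ\xi$ has the same profile as $\theta$, and that $\sigma\circ\xi$ is rigid. Both are routine, since rigid surjections compose and preimages under an equipartition scale sizes uniformly.
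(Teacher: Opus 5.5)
Your proof is correct and follows essentially the same route as the paper: induction on the number of profiles, where you take the integer $N'$ from the induction hypothesis, apply Profile Dual Ramsey to the remaining profile and $\unif{N'}$, pull the other colorings back along the resulting equipartition $\xi$, and set $\pi=\sigma\circ\xi$ (the paper merely singles out $\alpha^{(1)}$ instead of $\alpha^{(s)}$). The one claim you leave implicit, $N'\beta_j\in\mathbb N$, holds because the induction hypothesis guarantees that a $\beta$-partition of $[N']$ exists; this is exactly how the paper justifies that $\unif{N'}$ refines $\beta$.
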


\begin{proof}
    We prove it by induction on the number $k$ of profiles.
    The case $k=1$ is exactly the Profile Dual Ramsey Conjecture.
    Assume that we have established the Multi-Profile Dual Ramsey Conjecture for $k$ profiles and let $\alpha^{(1)},\ldots,\alpha^{(k+1)}$ and $\beta$ be given so that $\beta$ refines all $\alpha^{(i)}$.
    Fix the desired number of colors $r$, and let $M \in \mathbb{N}$ be given by the induction hypothesis for the profiles $\alpha^{(2)},\ldots,\alpha^{(k+1)}$ and $\beta$ with $r$ colors.
    Since there exists a $\beta$-partition of $[M]$, it follows that $\unif{M}$ refines $\beta$ and therefore refines all $\alpha^{(i)}$.
    Let $N \in \mathbb{N}$ be given by the single-profile case for the profiles $\alpha^{(1)}$ and $\unif{M}$ with $r$ colors.
    
    For each $i \in [k+1]$, let $c_i$ be an $r$-coloring of the $\alpha^{(i)}$-partitions of $[N]$.
    By the single-profile case for the profiles $\alpha^{(1)}$ and $\unif{M}$ with $r$ colors, there exists an $\unif{M}$-partition $\xi$ of $[N]$ such that all the $\alpha^{(1)}$-partitions obtained by coarsening $\xi$ have the same color under $c_1$.
    Let $2 \leq i \leq k+1$ and let $\psi$ be an $\alpha^{(i)}$-partition of $[M]$.
    Then, $\psi \circ \xi$ is an $\alpha^{(i)}$-partition of $[N]$.
    Set $d_i(\psi) = c_i(\psi \circ \xi)$, so $d_i$ is an $r$-coloring of the $\alpha^{(i)}$-partitions of $[M]$.
    By the induction hypothesis for the profiles $\alpha^{(2)},\ldots,\alpha^{(k+1)}$ and $\beta$ with $r$ colors, there exists a $\beta$-partition $\mu$ of $[M]$ such that, for every $2 \leq i \leq k+1$, all the $\alpha^{(i)}$-partitions of $[M]$ obtained by coarsening $\mu$ have the same color under $d_i$.
    Then, set $\pi = \mu \circ \xi$, so $\pi$ is a $\beta$-partition of $[N]$.
    Moreover, every $\alpha^{(1)}$-partition that coarsens $\pi$ also coarsens $\xi$, so they all have the same color under $c_1$.
    Finally, for $2 \leq i \leq k+1$, every $\alpha^{(i)}$-partition that coarsens $\pi$ is of the form $\phi \circ \pi = \phi \circ \mu \circ \xi$. 
    Then, $\phi \circ \mu$ is an $\alpha^{(i)}$-partition of $[M]$ obtained by coarsening $\mu$ and $d_i(\phi \circ \mu) = c_i(\phi \circ \mu \circ \xi) = c_i(\phi \circ \pi)$.
    The left-hand side is constant by assumption, which proves that all the $\alpha^{(i)}$-partitions that coarsen $\pi$ have the same color under $c_i$.
\end{proof}

Combining \cref{lem:hdr-implies-pdr,lem:pdr-implies-mpdr}, we obtain the following.

\begin{corollary}\label{cor:hdr-implies-mpdr}
    The Homogeneous Dual Ramsey Conjecture implies the Multi-Profile Dual Ramsey Conjecture.
\end{corollary}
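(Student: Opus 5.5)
The goal is to derive the Multi-Profile Dual Ramsey Conjecture from the Homogeneous Dual Ramsey Conjecture. This is just the composition of the two implications already established. By \cref{lem:hdr-implies-pdr}, the Homogeneous Dual Ramsey Conjecture implies the Profile Dual Ramsey Conjecture. By \cref{lem:pdr-implies-mpdr}, the Profile Dual Ramsey Conjecture implies the Multi-Profile Dual Ramsey Conjecture. Chaining the two implications gives the claim.

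Concretely, assume \cref{conj:hdr}. Let profiles $\alpha^{(1)},\ldots,\alpha^{(s)}$ and $\beta$ be given, with $\beta$ refining every $\alpha^{(i)}$, and fix $r\in\mathbb N$. First apply \cref{lem:hdr-implies-pdr}, which makes \cref{conj:pdr} available for every pair of profiles $\alpha,\beta$ with $\beta$ refining $\alpha$ and every number of colors. In particular, it covers the pairs $(\alpha^{(1)},\unif{M})$ that arise in the inductive step of \cref{lem:pdr-implies-mpdr}.

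Then run the induction of \cref{lem:pdr-implies-mpdr} on $s$, which yields the integer $N$ required by \cref{conj:mpdr}. The only point to check is that the induction calls the single-profile statement only for pairs where the refinement hypothesis holds. This is true: $\unif{M}$ refines $\beta$, because a $\beta$-partition of $[M]$ exists, and refinement is transitive, because a composition of rigid surjections is again a rigid surjection.

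I do not expect any genuine obstacle. The substantive work, namely the amalgamation in \cref{lem:common-refinement} and the cofinal transfer in \cref{lem:hdr-implies-pdr}, has already been done. The proof is therefore a two-line combination of the two lemmas.
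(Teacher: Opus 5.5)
Your proposal is correct and matches the paper exactly: the corollary is obtained by chaining \cref{lem:hdr-implies-pdr} with \cref{lem:pdr-implies-mpdr}. Your extra check, that $\unif{M}$ refines $\beta$ and hence every $\alpha^{(i)}$, is the same observation the paper makes inside the inductive step.
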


Since the Multi-Profile Dual Ramsey Conjecture generalizes the Homogeneous Dual Ramsey Conjecture, these two conjectures are in fact equivalent.

\subsection{Multi-Profile Dual Ramsey implies Near-Spanning Radial Ramsey}

In this section, we complete the proof of \cref{thm:hdr-implies-radial}.
We first prove the following statement.

\begin{lemma}\label{lem:mpdr-implies-somewhat-canonical}
Assuming the Multi-Profile Dual Ramsey Conjecture, the following statement holds.
For every $m\in\N$, there exists $n\in\N$ such that for every symmetric coloring $c : Q_n \times Q_n \to \{0, 1\}$, there exists a $c$-radial $\frac{1}{m+1}$-near-spanning homothetic copy of $Q_m$.
\end{lemma}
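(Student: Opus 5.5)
We encode both the homothetic copy and the pair of vertices as partitions of the coordinates $[n]$, and then apply the Multi-Profile Dual Ramsey Conjecture once for each pair-type. The balanced block maps are the ones described after \cref{lem:form-homothetic-map}. Fix an even $\lambda$ and set $n=\lambda(m+1)\cdot L$ for a suitable multiplier $L$.

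\textbf{Encoding the copy.} A $\unif{2m+2}$-partition $\pi=(P_1,\ldots,P_{2m+2})$ of $[n]$ (canonically ordered) determines a map $\phi_\pi:Q_m\to Q_n$. Pair the parts as $(P_{2i-1},P_{2i})$ for $i=0,\ldots,m$, with $i=0$ the fixed block. For $i\geq1$, set $\phi_\pi(x)$ equal to $x_i$ on $P_{2i-1}$ and to $1-x_i$ on $P_{2i}$; on $U_0=P_1\cup P_2$, set it to $0$ on $P_1$ and $1$ on $P_2$.

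\emph{Properties of $\phi_\pi$.} The map $\phi_\pi$ is a homothety of ratio $2n/(2m+2)=n/(m+1)$. Its image has diameter $mn/(m+1)=(1-\tfrac1{m+1})n$, so it is $\tfrac1{m+1}$-near-spanning. (If convenient, we may instead let $P_1$ alone carry the fixed block and use a non-homogeneous profile $\beta$; the conjecture allows arbitrary profiles.)

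\emph{Weight condition.} Every $\phi_\pi(x)$ has exactly $n/2$ ones.

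\textbf{Encoding a pair.} For a pair $(u,v)\in Q_n\times Q_n$, consider the partition of $[n]$ into the four classes $\{t:(u_t,v_t)=(a,b)\}$, canonically ordered. Drop empty classes, remember which label $(a,b)$ each surviving class carries, and record the profile of class sizes. The pair $(u,v)$ is recovered from this labelled partition.

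\textbf{Reducing labels to profiles.} Canonical ordering plus the labels is a finite amount of extra data. We absorb it by splitting into finitely many \emph{labelled profiles}. For each $d\in\{0,\ldots,m\}$ and each admissible labelling, let $\alpha^{(d,\ell)}$ be the profile arising from $(\phi_\pi(x),\phi_\pi(y))$ with $d(x,y)=d$. The four classes then have sizes $n/2-\tfrac{dn}{2(m+1)}$ for each diagonal label and $\tfrac{dn}{2(m+1)}$ for each off-diagonal label. Each class is a union of parts of $\pi$, so $\unif{2m+2}$ refines every such $\alpha$. This coarsening is realised by a rigid surjection $\psi$, and the labelling is determined by which labels occupy which positions.

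Define the colouring $c_{(d,\ell)}$ of $\alpha^{(d,\ell)}$-partitions $\rho$ as follows: reconstruct $(u,v)$ from $\rho$ and the labelling $\ell$, and output $c(u,v)$. Apply the Multi-Profile Dual Ramsey Conjecture to all these finitely many profiles with $\beta=\unif{2m+2}$ and $r=2$. This yields $\pi$ such that, for each $(d,\ell)$, all coarsenings of $\pi$ of that profile receive a single colour $f(d,\ell)$.

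\textbf{Main obstacle: eliminating the labelling.} We must show that $c(\phi_\pi(x),\phi_\pi(y))$ depends only on $d(x,y)$, not on the labelling $\ell$. The labelling of a pair $(\phi_\pi(x),\phi_\pi(y))$ is governed by which label contains $\min[n]\in P_1$ and by the canonical ordering of the other classes. Since $\phi_\pi$ is $0$ on $P_1$ and $1$ on $P_2$, the point $1\in P_1$ always carries label $(0,0)$. The class carrying $(1,1)$ contains $P_2$, hence the smallest element of $P_2$, which precedes every part $P_j$ with $j\geq3$. So the diagonal classes always appear first and second, in the order $(0,0),(1,1)$. The only remaining ambiguity is the order of $(0,1)$ versus $(1,0)$, determined by the first coordinate where $x$ and $y$ differ.

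To remove it, use symmetry of $c$: swapping $x$ and $y$ exchanges $(0,1)$ and $(1,0)$ and leaves $c$ unchanged. The labelled partitions for $(x,y)$ and $(y,x)$ have the same underlying partition, with the two off-diagonal labels swapped. Both are coarsenings of $\pi$ with equal profiles (the off-diagonal sizes coincide), so they receive the same colour under the single-profile statement. We check by direct case analysis that the two labellings $\ell,\ell'$ give the same $f$; if needed, we redefine the colouring on unlabelled partitions by symmetrising through $c(u,v)=c(v,u)$. Verifying that this is consistent is the step I expect to require the most care. Once it is done, $f$ depends only on $d$, so $\phi_\pi$ is $c$-radial.
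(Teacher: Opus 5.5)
Your proposal is correct and is essentially the paper's proof. It uses the same $\beta=\unif{2m+2}$, the same paired-block map in which an anchor pair forces the two agreement classes into the first two canonical positions, and the same appeal to the symmetry of $c$ for the order of the two disagreement classes. The differences are cosmetic: you keep two labelled colourings per distance $d$ (legitimate, since the conjecture allows repeated profiles), whereas the paper defines one positional colouring $c_j$ and swaps $x,y$ without loss of generality. The step you flagged is immediate, because any pair $x,y$ at distance $d$ gives $f(d,\ell)=c(\phi_\pi(x),\phi_\pi(y))=c(\phi_\pi(y),\phi_\pi(x))=f(d,\ell')$.
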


\begin{proof}
Let $\alpha^{(0)} = \unif{2} = (1/2, 1/2)$ and for $i\in [m]$, let
\[
  \alpha^{(i)}
  =\left(
      \frac{m+1-i}{2m+2},
      \frac{m+1-i}{2m+2},
      \frac{i}{2m+2},
      \frac{i}{2m+2}
    \right),
\]
and let
\[
  \beta
  =\left(
      \underbrace{\frac1{2m+2},\ldots,\frac1{2m+2}}_{2m+2\text{ entries}}
    \right).
\]
The profile $\beta$ refines $\alpha^{(i)}$ for every $i$.
Let $n$ be the integer given by Multi-Profile Dual Ramsey for the profiles $\alpha^{(0)}, \ldots, \alpha^{(m)}$ and $\beta$ for $2$ colors.

Let $c : Q_n \times Q_n \to \{0, 1\}$ be a symmetric coloring.
Let $(P_1, P_0)$ be a canonically ordered $\alpha^{(0)}$-partition of $[n]$\footnote{The reason for these names will become clearer later in the proof.}.
Define $c_0(P_1, P_0) = c(\mathbbm{1}_{P_1}, \mathbbm{1}_{P_1})$.
Let $j \in [m]$ and let $(P_{11}, P_{00}, P_{10}, P_{01})$ be a canonically ordered $\alpha^{(j)}$-partition of $[n]$.
Define $c_j(P_{11}, P_{00}, P_{10}, P_{01}) = c(\mathbbm{1}_{P_{11} \cup P_{10}}, \mathbbm{1}_{P_{11} \cup P_{01}})$.

By Multi-Profile Dual Ramsey, there exists a $\beta$-partition $\pi$ of $[n]$ such that, for every $j \in \{0, \ldots, m\}$, all the $\alpha^{(j)}$-partitions obtained by coarsening $\pi$ have the same color under $c_j$.
Let $U_0^1, U_0^0, U_1^1, U_1^0, \ldots , U_m^1, U_m^0$ be the parts of $\pi$ in canonical order.

Let \[\phi : Q_m \to Q_n, x \mapsto \mathbbm{1}_{U_0^1 \cup \bigcup_{i \in [m]}U_i^{x_i}}.\]
Thus, each input bit $x_i$ is represented by a pair of equal blocks $U_i^0,U_i^1$, and $\phi(x)$ selects exactly one of them according to the value of $x_i$. 
The two anchor blocks $U_0^1$ and $U_0^0$ ensure that the common-one and common-zero parts occur first and second in canonical order.
For $x, y \in Q_m$ at Hamming distance $j$, exactly $j$ coordinate pairs contribute to each disagreement part, while the remaining $m-j$ coordinate pairs and the two anchors contribute to the agreement parts. This is precisely the profile $\alpha^{(j)}$.

Since all sets $U_i^b$ have size $\frac{n}{2m+2}$, it can be easily checked that $\phi$ is homothetic with ratio $\frac{n}{m+1}$.
Then, $\diam(\phi(Q_m)) = \frac{n}{m+1} \cdot \diam(Q_m) = \frac{m}{m+1} \cdot \diam(Q_n)$, so $\phi$ is $\frac{1}{m+1}$-near-spanning.
Let $x \in Q_m$ and let $P_1 = \{i \in [n] : \phi(x)_i = 1\}$ and $P_0 = [n] \setminus P_1$.
Since $U^1_0 \subseteq P_1$, the partition $(P_{1}, P_0)$ is canonically ordered.
It follows from the definition of $\phi$ that $(P_1, P_0)$ is an $\alpha^{(0)}$-partition of $[n]$.
Then, $c(\phi(x), \phi(x)) = c(\mathbbm{1}_{P_1}, \mathbbm{1}_{P_1}) = c_0(P_1, P_0)$.
Since $(P_1, P_0)$ is an $\alpha^{(0)}$-partition that coarsens $\pi$, the right-hand side is independent of $x$.
Similarly, let $j \in [m]$ and let $x, y \in Q_m$ such that $d(x, y) = j$.
Let $P_{11} = \{i \in [n] : \phi(x)_i = 1, \phi(y)_i = 1\}$, $P_{00} = \{i \in [n] : \phi(x)_i = 0, \phi(y)_i = 0\}$, $P_{10} = \{i \in [n] : \phi(x)_i = 1, \phi(y)_i = 0\}$ and $P_{01} = \{i \in [n] : \phi(x)_i = 0, \phi(y)_i = 1\}$.
The definition of $\phi$ implies that this partition coarsens $\pi$.
Since $U_0^1\subseteq P_{11}$ and $U_0^0\subseteq P_{00}$, the first two coarse parts in the canonical order are $P_{11}$ and $P_{00}$. 
The remaining two parts are $P_{10}$ and $P_{01}$; by exchanging $x$ and $y$ if necessary, we may assume that they occur in this order, since this exchange swaps $P_{10}$ and $P_{01}$ and, by symmetry of $c$, does not change the color.
Then, $c(\phi(x), \phi(y)) = c(\mathbbm{1}_{P_{11} \cup P_{10}}, \mathbbm{1}_{P_{11} \cup P_{01}}) = c_j(P_{11}, P_{00}, P_{10}, P_{01})$.
Since $(P_{11}, P_{00}, P_{10}, P_{01})$ is an $\alpha^{(j)}$-partition that coarsens $\pi$, the right-hand side only depends on $j$.
This shows that $\phi$ is $c$-radial.
\end{proof}

We then easily deduce the following result.

\begin{lemma}\label{lem:mpdr-implies-radial}
    The Multi-Profile Dual Ramsey Conjecture implies Near-Spanning Radial Ramsey Conjecture.
\end{lemma}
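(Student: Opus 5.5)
The plan is to bootstrap \cref{lem:mpdr-implies-somewhat-canonical}, which already yields a $c$-radial $\frac{1}{m+1}$-near-spanning homothetic copy of $Q_m$. The near-spanning defect there is $\frac{1}{m+1}$, which becomes small once $m$ is large. So, given $0<\varepsilon<1/2$ and $m\in\mathbb N$, I will pick $m'\geq m$ with $\frac{1}{m'+1}\leq\varepsilon$ and $m\mid m'$. I then apply \cref{lem:mpdr-implies-somewhat-canonical} with $m'$, obtaining $n$ and, for each symmetric coloring $c$ of $Q_n$, a $c$-radial $\frac{1}{m'+1}$-near-spanning homothety $\phi':Q_{m'}\to Q_n$ of some ratio $\lambda'$.

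Next I precompose with the block-repetition homothety $\iota:Q_m\to Q_{m'}$, $x\mapsto(x_1,\ldots,x_1,\ldots,x_m,\ldots,x_m)$, where each coordinate is repeated $t=m'/m$ times. This map is a homothety of ratio $t$ and is spanning: its image has diameter $tm=m'=\diam(Q_{m'})$. The composition $\phi=\phi'\circ\iota$ is then a homothety of ratio $\lambda' t$.

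Radiality passes to the composition. If $d(x,y)=d(x',y')$ in $Q_m$, then $d(\iota x,\iota y)=t\,d(x,y)=d(\iota x',\iota y')$, so by radiality of $\phi'$ we get $c(\phi(x),\phi(y))=c(\phi(x'),\phi(y'))$. Near-spanning also passes: $\diam(\phi(Q_m))=\lambda' t m=\lambda' m'=\diam(\phi'(Q_{m'}))\geq(1-\frac{1}{m'+1})n\geq(1-\varepsilon)n$.

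The only mild obstacle is the divisibility $m\mid m'$, which is why $m'$ must be chosen as a multiple of $m$. Alternatively, one could avoid divisibility by restricting $\phi'$ to a subcube, but that loses diameter, so repetition is the right choice. The final $N$ is the $n$ given by \cref{lem:mpdr-implies-somewhat-canonical} for $m'$. Combining this with \cref{cor:hdr-implies-mpdr} then yields \cref{thm:hdr-implies-radial}.
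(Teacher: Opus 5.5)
Your proposal is correct and matches the paper's proof. The paper likewise chooses $t$ with $\frac{1}{tm+1}\leq\varepsilon$, applies \cref{lem:mpdr-implies-somewhat-canonical} to $Q_{tm}$, and precomposes with the $t$-fold coordinate-repetition map $Q_m\to Q_{tm}$; your explicit checks that radiality and the near-spanning bound survive this composition are exactly what the paper leaves as straightforward.
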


\begin{proof}
    Let $0 < \varepsilon < 1/2$ and $m \in \mathbb{N}$ be given.
    Let $t \in \mathbb{N}$ be large enough so that $\frac{1}{tm+1} \leq \varepsilon$.
    Assuming the Multi-Profile Dual Ramsey Conjecture, let $n \in \mathbb{N}$ be given by \cref{lem:mpdr-implies-somewhat-canonical} for $mt$.
    
    Let $c : Q_n \times Q_n \to \{0, 1\}$ be a symmetric coloring.
    By \cref{lem:mpdr-implies-somewhat-canonical}, there exists a $c$-radial $\frac{1}{mt+1}$-near-spanning homothetic copy $\psi : Q_{mt} \to Q_n$ of $Q_{mt}$.
    Since $\frac{1}{tm+1} \leq \varepsilon$, this copy is $\varepsilon$-near-spanning.
    Let \[\rho : Q_m \to Q_{mt}, x \mapsto (\underbrace{x_1,\ldots,x_1}_{t\text{ times}},
    \ldots,
    \underbrace{x_m,\ldots,x_m}_{t\text{ times}}).\]
    It is then straightforward to verify that $\phi = \psi \circ \rho$ is a $c$-radial $\varepsilon$-near-spanning homothetic copy of $Q_m$, as desired.
\end{proof}

\cref{thm:hdr-implies-radial} then follows immediately from \cref{cor:hdr-implies-mpdr,lem:mpdr-implies-radial}.

\end{document}